   \titlespacing{\section}{0pt}{2ex}{1ex}
    \titlespacing{\subsection}{0pt}{1ex}{0ex}
    \titlespacing{\subsubsection}{0pt}{0.5ex}{0ex}
\theoremstyle{plain} \newtheorem{theorem}{Theorem} \newtheorem{proposition}{Proposition} \newtheorem{lemma}{Lemma} \newtheorem{corollary}{Corollary}
\theoremstyle{definition}    
\theoremstyle{remark} \newtheorem*{remark}{Remark}  
\begin{document}

\newif\ifblinded

\ifblinded
\title{Semiparametric Functional Factor Models with \\  Bayesian  Rank Selection} 
\author{}
\else

\title{\Large Semiparametric Functional Factor Models with\\  Bayesian  Rank Selection} 
\author{\large  Daniel R. Kowal\thanks{Dobelman Family Assistant Professor, Department of Statistics, Rice University (\href{mailto:Daniel.Kowal@rice.edu}{daniel.kowal@rice.edu}).} \ and Antonio Canale\thanks{Associate Professor, Department of Statistics,  University of Padova (\href{mailto:antonio.canale@unipd.it}{antonio.canale@unipd.it}).}}

\fi

\date{}

\maketitle
  
  \vspace{-12mm}

\begin{abstract}
Functional data are frequently accompanied by a parametric template that describes the typical shapes of the functions. However, these parametric templates can incur significant bias, which undermines both utility and interpretability. To correct for model misspecification, we augment the parametric template with an infinite-dimensional nonparametric functional basis. The nonparametric basis functions are learned from the data and constrained to be orthogonal to the parametric template, which preserves distinctness between the parametric and nonparametric terms. This distinctness is essential to prevent functional confounding, which otherwise induces severe bias for the parametric terms. The nonparametric factors are regularized with an ordered spike-and-slab prior that provides consistent rank selection and satisfies several appealing theoretical properties. The versatility of the proposed approach is illustrated through applications to synthetic data, human motor control data, and dynamic yield curve data. Relative to parametric and semiparametric alternatives, the proposed semiparametric functional factor model eliminates bias, reduces excessive posterior and predictive uncertainty, and provides reliable inference on the effective number of nonparametric terms---all with minimal additional computational costs. 


\end{abstract}
{\bf KEYWORDS: } factor analysis;   nonparametric regression;  shrinkage prior; spike-and-slab prior; yield curve

\thispagestyle{empty}

\clearpage
\setcounter{page}{1} 

\section{Introduction}
\subsection{Setting and goals} 
As high-resolution monitoring and measurement systems generate vast quantities of complex and highly correlated data, \emph{functional data analysis} has become increasingly vital for many scientific, medical, business, and industrial applications. Functional data are (noisy) realizations of random functions $\{Y_i\}_{i=1}^n$ observed over a continuous domain, such as time, space, or wavelength, and exhibit a broad variety of shapes. The concurrence of \emph{complex} and \emph{voluminous} data prompts the common use of nonparametric models for functional data. Yet in many cases, there is valuable information regarding the functional form of $Y_i$.  Template curves that describe the shape of $Y_i$ are derived from fundamental scientific laws or motivated by extensive empirical studies, and often are the focal point of the analysis. Prominent examples include human motor control \citep{Ramsay2000,goldsmith2016assessing}, interest rates \citep{nelson1987parsimonious, diebold2006forecasting}, and basal body temperature \citep{scarpa2009bayesian,scarpa2014enriched,canale2017pitman}.  

Our goal is to construct a functional data modeling framework that simultaneously (i) incorporates parametric templates in a coherent and interpretable manner, (ii) maintains the modeling flexibility of nonparametric methods, and (iii) provides computationally scalable inference with reliable uncertainty quantification. The approach is fully Bayesian, accompanied by  a highly efficient Markov chain Monte Carlo (MCMC) algorithm for posterior and predictive inference, and equally applicable to both densely-observed and sparsely- or irregularly-sampled functional data. 

The parametric templates are represented as a spanning set $\mathcal{H}_0 = \mbox{span}\{g_1(\cdot; \gamma),\ldots,g_L(\cdot; \gamma)\}$ of functions  $\{g_\ell(\cdot; \gamma)\}_{\ell=1}^L$ known up to  $\gamma$. Any function belonging to $\mathcal{H}_0$ is a linear combination of $\{g_\ell\}$; the corresponding coefficients and the parameters $\gamma$ must be learned. Important examples are presented in Table~\ref{tab-ex}.  The linear basis is routinely used for longitudinal data analysis and here is equivalent to a random slope model \citep{Molenberghs2000}. A change in slope, $(\tau - \gamma)_+$ with $(x)_+ = \max\{0,x\}$, is useful for modeling structural changes, such as a change in disease transmissions due to policy interventions \citep{Wagner2020}. Cosinor functions model circadian rhythms \citep{Mikulich2003} and other periodic behaviors \citep{Welham2006}. Biphasic curves offer utility in modeling  basal body temperature of women during the menstrual cycle \citep{scarpa2009bayesian,scarpa2014enriched,canale2017pitman}. In general, interest centers on learning the linear coefficients associated with each $g_\ell$, the nonlinear parameters $\gamma$, and an adequate yet interpretable model for the functions $Y_i$.

\begin{table}[ht]
\centering
\begin{tabular}{ c | c  | c | c   }
Linear &  Linear change &    Cosinor & Biphasic \\ \hline
 $\{1, \tau\}$&  $\{1, \tau, (\tau - \gamma)_+ \}$ &  $\{1, \sin(2\pi\tau/\gamma), \cos(2\pi\tau/\gamma)\}$ &  $\{1, \exp(\gamma\tau)/\{1 + \exp(\gamma \tau)\}\}$ 
\end{tabular}
\caption{\small Examples of parametric templates, or spanning sets, for $\mathcal{H}_0$. 
\label{tab-ex}}
\end{table}

The advantages of the parametric templates are clear: they incorporate domain knowledge, lend interpretability to the model, and often produce low-variance estimators relative to nonparametric alternatives. These templates are restrictive by design, and therefore can incur significant bias and other model misspecifications. Such effects erode model interpretability and can induce variance inflation. For illustration, we present model fits for two datasets in Figure~\ref{fig:fit} using a parametric model and our proposed semiparametric alternative; details and analyses of these data are in Section~\ref{apps}. In both instances, the templates capture the general shape of the data. However, regions of substantial bias are present, which produce uniformly inflated prediction bands over the domain. By comparison, the proposed approach preserves the essential shape of the curves, yet corrects the bias and shrinks the prediction bands appropriately---and crucially does so without overfitting. 

\begin{figure}[h]
\begin{center}
\includegraphics[width=.49\textwidth]{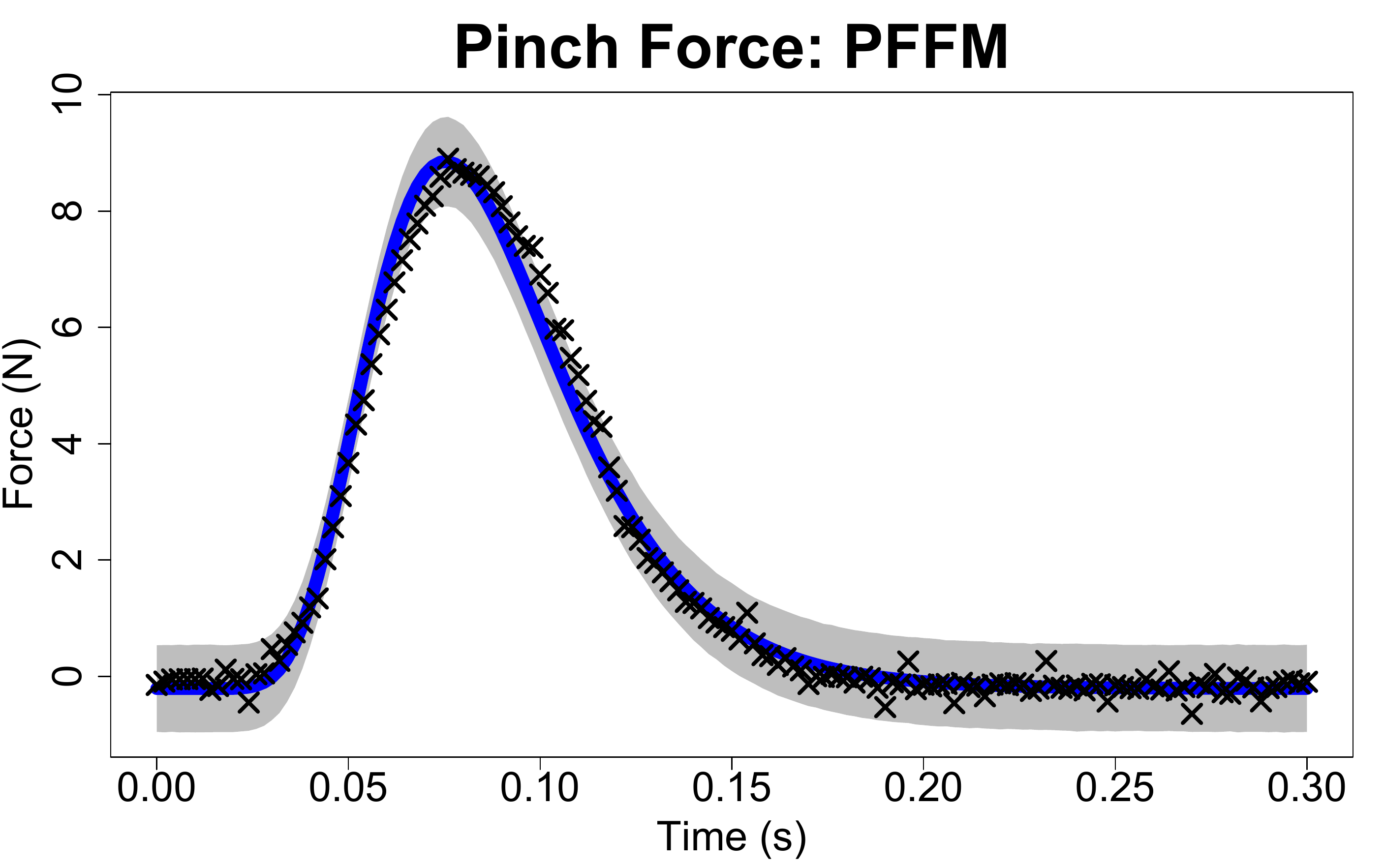}
\includegraphics[width=.49\textwidth]{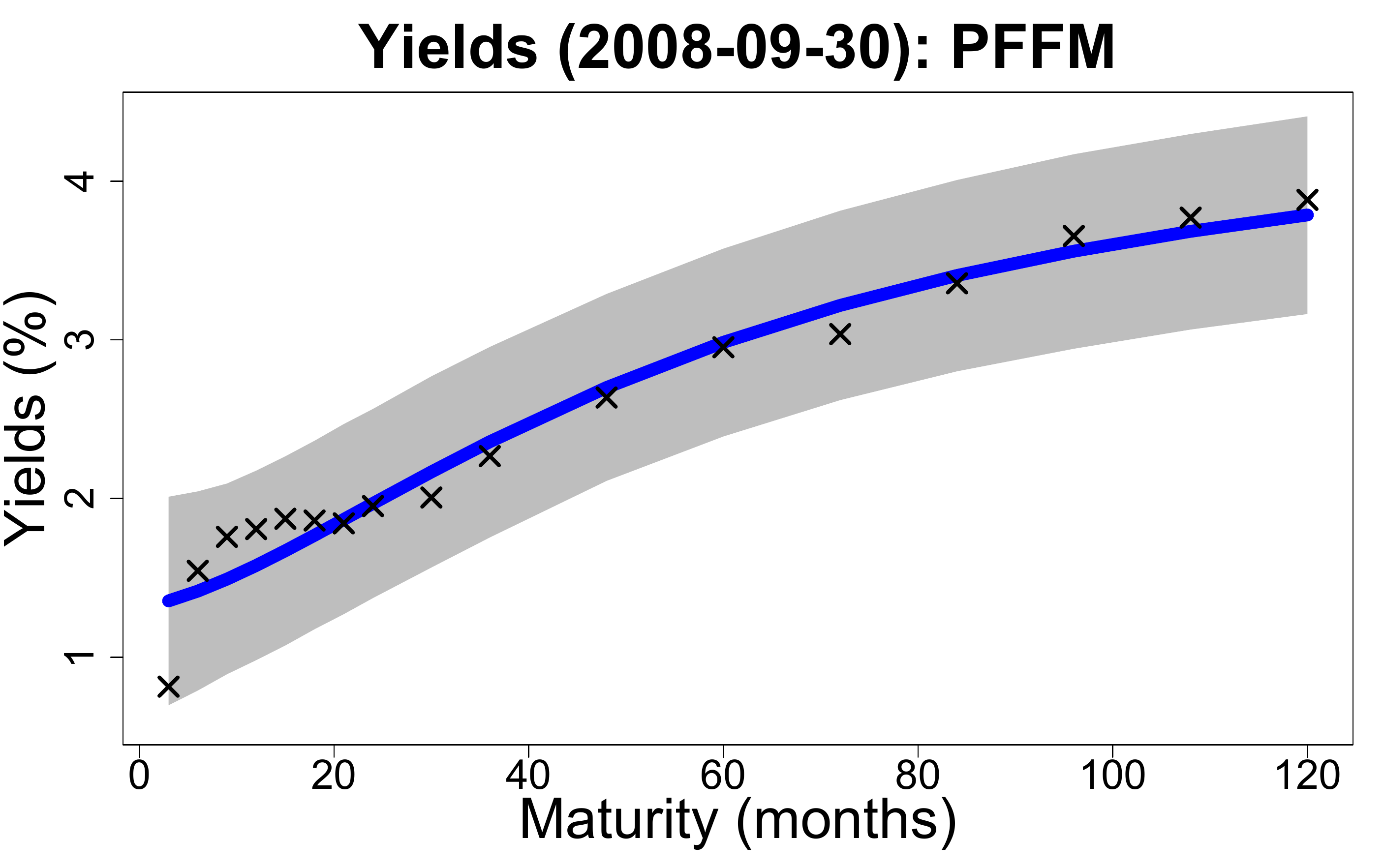}
\includegraphics[width=.49\textwidth]{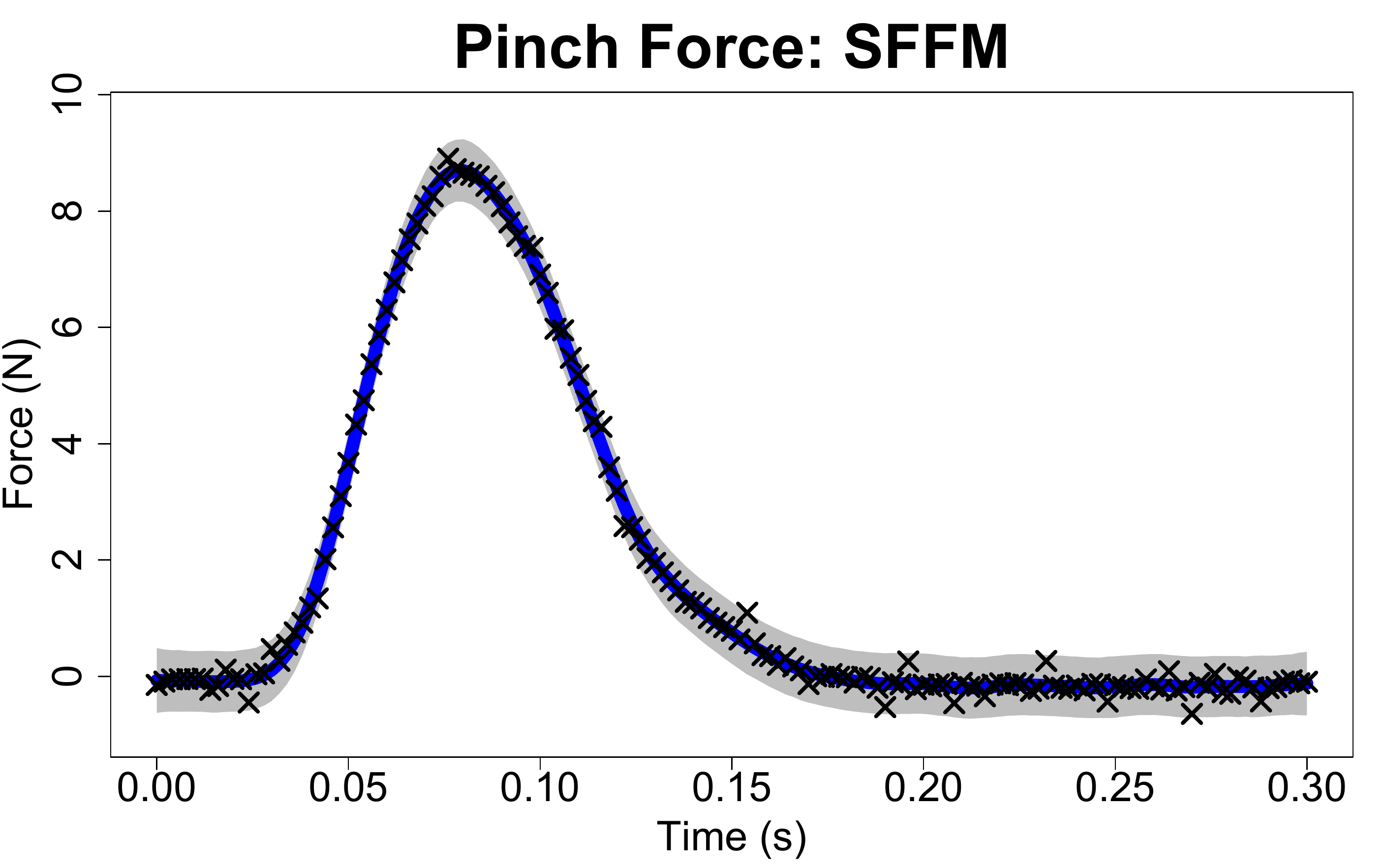}
\includegraphics[width=.49\textwidth]{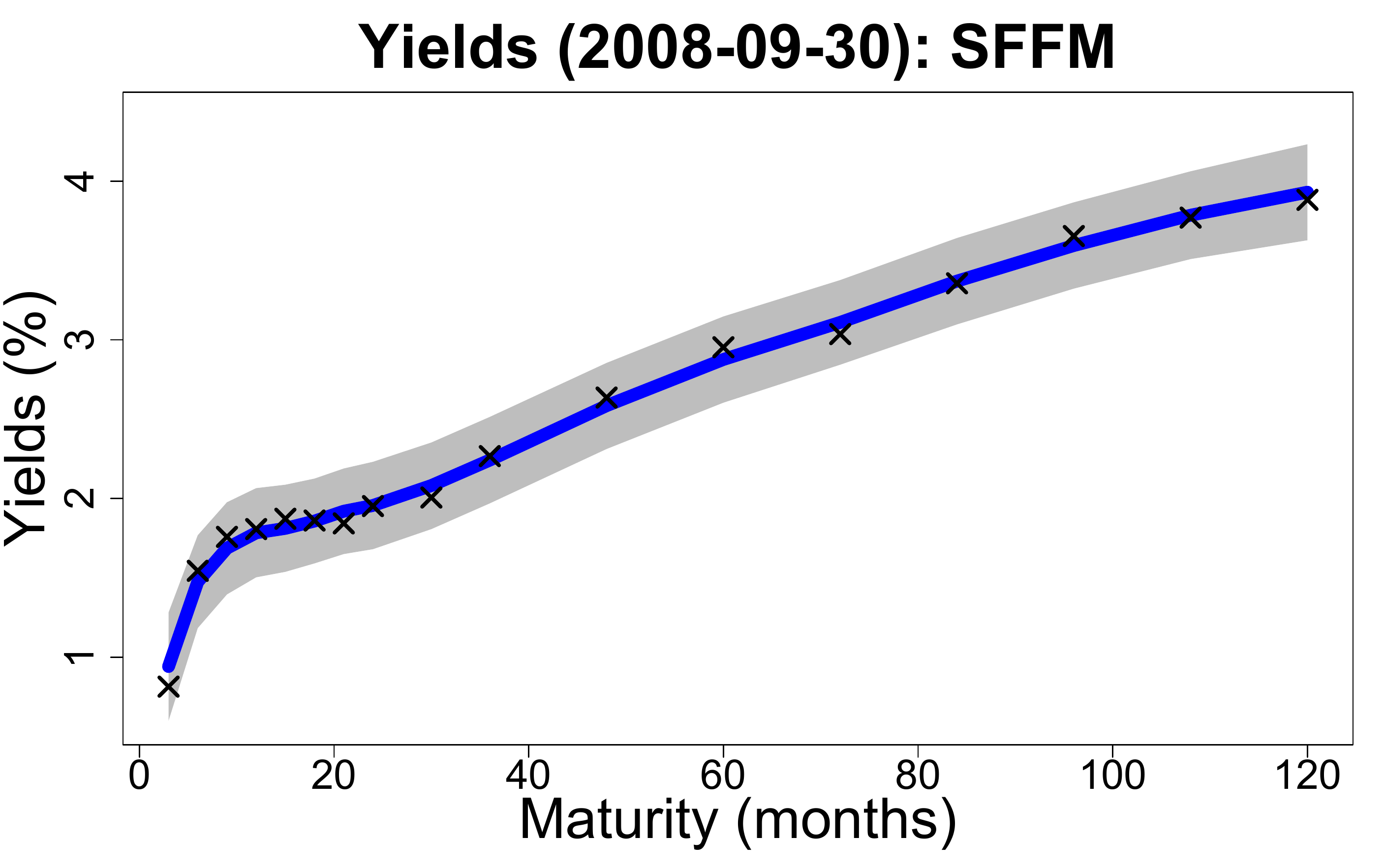}
\caption{\small Posterior expectations and 95\% simultaneous prediction bands for the parametric (PFFM, top) and semiparametric functional factor model (SFFM, bottom) for one replicate of the pinch force data (left; see Section~\ref{pinch}) and one yield curve (right; see Section~\ref{yields}). The proposed SFFM corrects the bias of the PFFM and offers narrower prediction bands. 
 \label{fig:fit}}
\end{center}
\end{figure}

\subsection{Overview of the proposed approach} 
The \emph{semiparametric functional factor model} (SFFM) bridges the gap between parametric and nonparametric functional data models. The SFFM augments a parametric template $\mathcal{H}_0$ 
 with a nonparametric and infinite-dimensional basis expansion for the functions $Y_i \in L^2(\mathcal{T})$:
 \begin{equation}\label{sffm}
 Y_i(\tau) = \sum_{\ell=1}^L \alpha_{\ell,i} g_\ell(\tau; \gamma)  + \sum_{k=1}^{\infty} \beta_{k,i}f_k(\tau), \quad \tau \in \mathcal{T},
\end{equation}
where $\{f_k\}$ are unknown functions, $\{\alpha_{\ell,i}\}$ and $\{\beta_{k,i}\}$ are unknown factors for the parametric and nonparametric components, respectively, and $\mathcal{T}  \subset \mathbb{R}^d$ is the domain, usually with $d=1$ for curves or $d=2$ for images. Without the nonparametric terms, \eqref{sffm} is a \emph{parametric functional factor model} (PFFM), which serves as our parametric baseline.

Our implementation of \eqref{sffm} is fully Bayesian with three unique and essential features.

First, the nonparametric basis $\{f_k\}$ is treated as unknown. The SFFM pools information across all functions $\{Y_i\}$ to learn the key functional features in the data---in particular, the systemic biases unresolved by $\mathcal{H}_0$. The functions $\{f_k\}$ are learned jointly with the parametric terms $\gamma$ and $\{\alpha_{\ell,i}\}$ and the nonparametric factors $\{\beta_{k,i}\}$, so uncertainty about $\{f_k\}$ is automatically absorbed into the joint posterior distribution.  

Second, the nonparametric basis $\{f_k\}$ is constrained to be orthogonal to the templates $\{g_\ell\}$. This important constraint enforces distinctness between the parametric and nonparametric components of the SFFM. We show that without such a constraint, inference on the parametric terms becomes severely biased and loses interpretability due to \emph{functional confounding}.  The nonredundancy of $\{f_k\}$ is also essential for valid inference on the effective number of nonparametric terms. Our chosen constraint produces computational simplifications that improve both the algorithmic efficiency and the ease of implementation of the MCMC algorithm. 


Third, the nonparametric factors $\{\beta_{k,i}\}$ are endowed with an ordered spike-and-slab prior distribution. The proposed prior is critical for coherence of the infinite-dimensional basis expansion in \eqref{sffm}: it provides much-needed regularization, encourages selection of a finite number of factors, and provides posterior inference for the effective number of nonparametric terms---including an assessment of whether any nonparametric component is needed at all. The prior admits a parameter expansion that offers substantial improvements in MCMC efficiency and satisfies several key properties that broaden applicability beyond the SFFM.

The SFFM in \eqref{sffm} is accompanied by an observation error equation to accommodate noisy and sparsely- or irregularly-sampled functional data. The observed data $\bm y_i = (y_{i,1},\ldots, y_{i,m_i})'$ are modeled as noisy realizations of $Y_i$ on a discrete set of points $\{\tau_{i,j}\}_{j=1}^{m_i} \subset \mathcal{T}$ for $i=1,\ldots,n$:
\begin{equation}\label{obs}
y_{i,j} = Y_i(\tau_{i,j}) + \epsilon_{i,j}, \quad \epsilon_{i,j} \stackrel{indep}{\sim}N(0, \sigma_\epsilon^2),
\end{equation}
although non-Gaussian versions are available. We proceed using common observation points $\tau_{i,j} = \tau_j$ and $m_i = m$ for notational simplicity, but this restriction may be relaxed.

Although we focus on the parametric and semiparametric versions of \eqref{sffm}, the proposed modeling framework remains useful without any  parametric template ($L=0$). In this case, \eqref{sffm} resembles a Karhunen-Lo\`{e}ve decomposition, and $\{f_k\}$ correspond to the eigenfunctions of the covariance function of $\{Y_i\}$ (assuming the functions $Y_i$ have been centered). The Karhunen-Lo\`{e}ve decomposition provides the theoretical foundation for functional principal components analysis (FPCA). As such, model \eqref{sffm}---together with the model for $\{f_k\}$ and the ordered spike-and-slab prior for $\{\beta_{k,i}\}$---constitutes a new Bayesian approach FPCA. 

\subsection{Review of related approaches}
Semiparametric models for functional data are predominantly non-Bayesian. For modeling a single function, L-splines combine a goodness-of-fit criterion with a penalty on deviations from $\mathcal{H}_0$ \citep{ramsay1991some,heckman2000penalized}. Although L-splines elegantly admit analytic expressions  for  $\{f_k\}$, 
these derivations are highly challenging for all but the simplest choices of $\mathcal{H}_0$, which inhibits widespread practical use. 
In addition, L-splines offer limited direct inference on the adequacy of the parametric templates $\mathcal{H}_0$, and typically require hypothesis tests with asymptotic validity. 
Other non-Bayesian approaches for semiparametric functional data analysis seek to replace nonparametric functions with parametric alternatives. \cite{sang2017parametric} attempt to simplify FPCA by using polynomials for each FPC instead of splines or Fourier functions. In functional regression analysis, \cite{Chen2019} develop hypothesis tests to determine whether an unknown regression function deviates from a parametric template. 


Bayesian semiparametric functional data models are less common. The ``semiparametric" model of 
\cite{Lee2018} refers to additive rather than linear effects, but does not include a parametric template like $\mathcal{H}_0$. \cite{scarpa2009bayesian} construct a Dirichlet process mixture of a parametric function and a Gaussian process contamination, which is generalized by \cite{scarpa2014enriched}  to include prior information on the frequencies of certain functional features. These methods are designed primarily for clustering: they identify individual curves $Y_i$ that deviate substantially from the parametric model, while the remaining curves are presumably well modeled parametrically. 
The SFFM is capable of modeling total deviations from $\mathcal{H}_0$ for a particular $Y_i$, but also captures---and corrects---partial deviations from $\mathcal{H}_0$ that persist for some or many $Y_i$. 
Unlike the mixture models, the SFFM is well suited for including additional layers of dependence, such as hierarchical (Section~\ref{pinch}) or dynamic (Section~\ref{yields}) models, while maintaining efficient posterior computing.  Notably, these existing methods do not address functional confounding.


The remainder of the paper is organized as follows. The model for the parametric and nonparametric functions  is in Section~\ref{all-np}. The ordered spike-and-slab prior is introduced and studied in Section~\ref{order-prior}. The MCMC algorithm is discussed in Section~\ref{mcmc}. A simulation study is in Section~\ref{sims}. The model is applied to real datasets in Section~\ref{apps}. We conclude in Section~\ref{dis}. Online supplementary material includes   proofs of all results, the full MCMC algorithm, additional simulations,  and \texttt{R} code for reproducibility.









\section{Modeling the nonparametric functions} \label{all-np}
The error-free latent functions $\{Y_i\}$ belong to the space spanned by the template parametric curves $\{g_\ell\}$ and the nonparametric curves $\{f_k\}$. Any systemic bias resulting from the inadequacies of $\{g_\ell\}$ must be corrected by  $\{f_k\}$, which demands substantial flexibility of the nonparametric basis $\{f_k\}$.  However, interpretability of the parametric terms $\{g_\ell\}$ and $\{\alpha_{\ell,i}\}$ requires a strict distinction between the parametric and nonparametric components.. In conjunction, these points demand both \emph{flexibility} and \emph{restraint} from the nonparametric term. 

To emphasize the importance of the distinction between $\{g_\ell\}$ and $\{f_k\}$, consider a seemingly reasonable alternative to \eqref{sffm} that replaces the nonparametric term with a Gaussian process (PFFM+gp):
\begin{equation}\label{pffm+gp}
Y_i(\tau)  = \sum_{\ell=1}^L \alpha_{\ell,i} g_\ell(\tau; \gamma) + h_i(\tau), \quad h_i  \sim \mathcal{GP}(0, \mathcal{K}_h).
\end{equation}
The PFFM+gp is a special case of the  SFFM \eqref{sffm}, and implies 
that each latent curve $Y_i$ is a Gaussian process centered at the parametric term. Naturally, it may be expected that each $h_i$ corrects for the biases of the parametric component. However, there is a significant cost: despite being centered at zero, the Gaussian process $h_i$ is not constrained to be sufficiently distinct from the parametric term, and therefore introduces a \emph{functional confounding} that biases the parametric factors $\{\alpha_{\ell,i}\}$. 

To illustrate this point, we fit the PFFM, PFFM+gp, and SFFM   to the pinch force data (see Section~\ref{pinch}) with identical parametric models (see \eqref{par-pinch}). The resulting posterior expectations of each $\{\alpha_{\ell,i}\}$ are presented in Figure~\ref{fig:alphapm}. Most striking, the point estimates of the parametric factors are nearly identical between the PFFM and SFFM, while the PFFM+gp estimates are substantially different. The discrepancy of the PFFM+gp is concerning: under mild conditions, the PFFM will produce (approximately) unbiased estimates of the linear coefficients $\{\alpha_{\ell,i}\}$ even in the presence of dependent errors, so the point estimates are a reliable benchmark. These results are also confirmed in the simulation study using the ground truth values of $\{\alpha_{\ell, i}\}$ (Section~\ref{sims}). 
We emphasize that  despite this agreement between the PFFM and the SFFM for point estimation of $\{\alpha_{\ell,i}\}$,  the SFFM \emph{does} identify inadequacies of the PFFM and provides key improvements in bias removal  for $\{Y_i\}$ and uncertainty quantification for both $\{Y_i\}$ and $\{\alpha_{\ell,i}\}$ (see Sections~\ref{sims}-\ref{apps}).

\begin{figure}[h]
\begin{center}
\includegraphics[width=.49\textwidth]{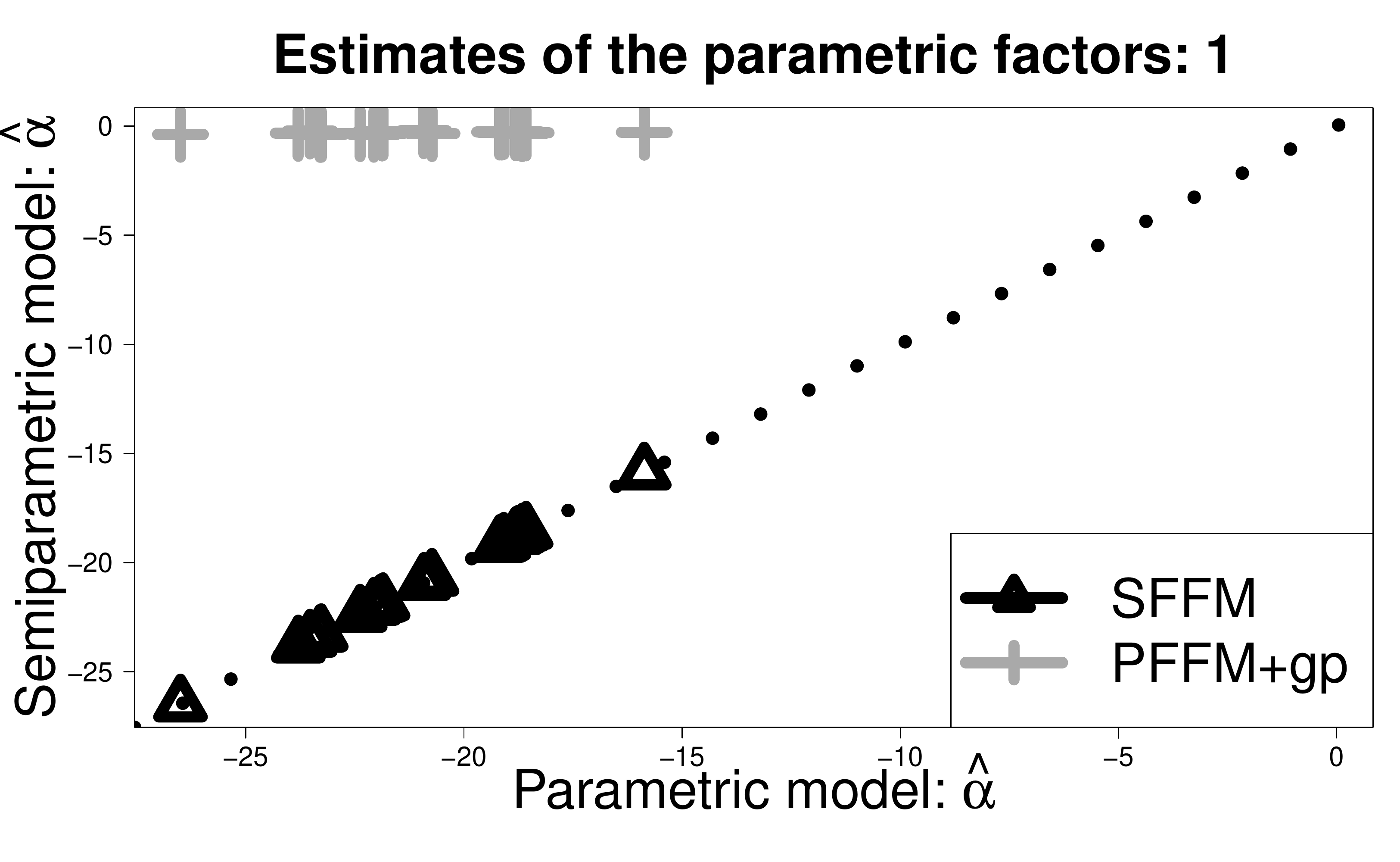}
\includegraphics[width=.49\textwidth]{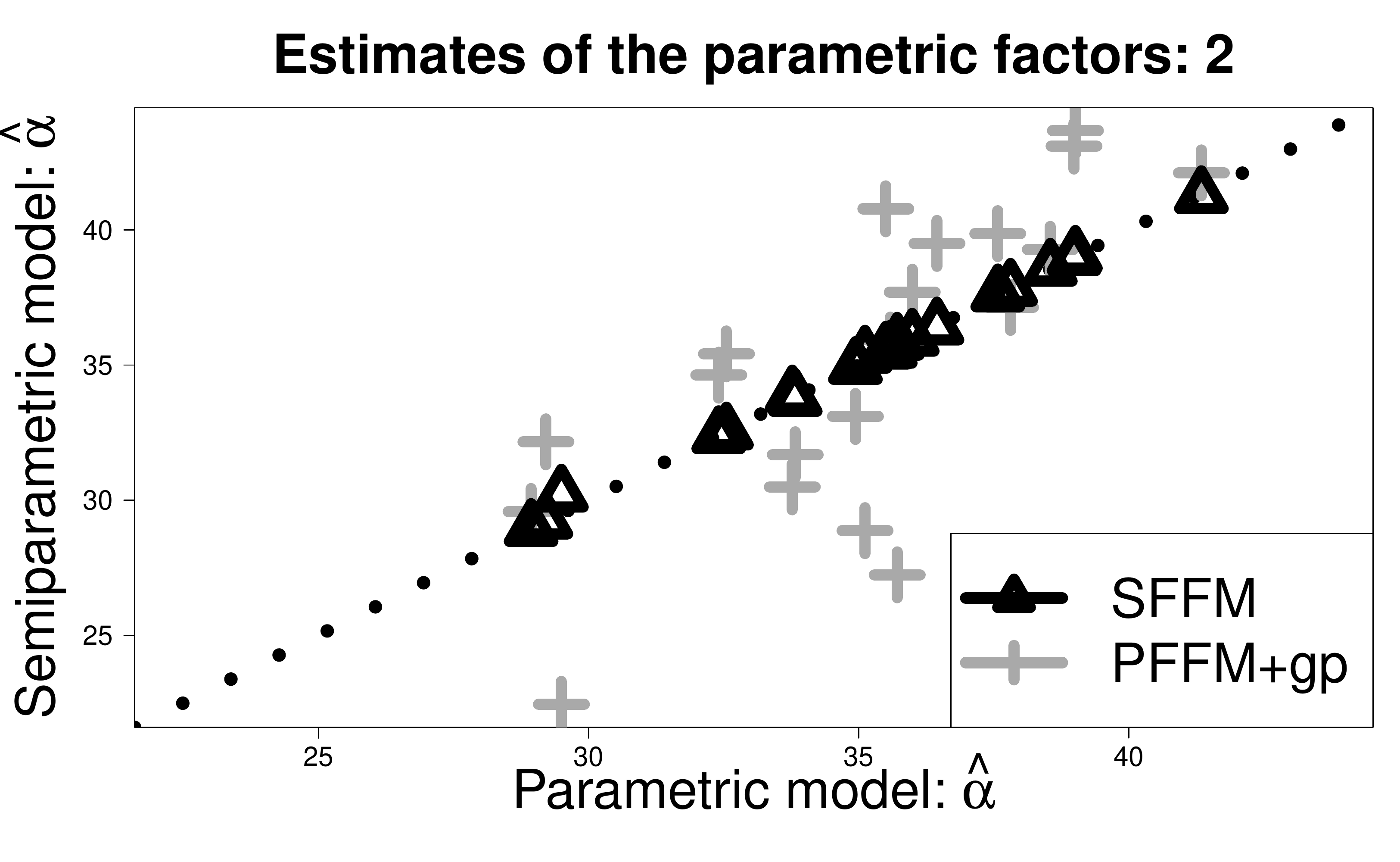}
\caption{\small Posterior expectations of $\{\alpha_{\ell,i}\}_{i=1}^n$ for $\ell=1$ (left) and $\ell =2$ (right)  for the pinch force data (see Section~\ref{pinch}) comparing the PFFM estimates (x-axis) against both the  SFFM and  PFFM+gp  (y-axis). The point estimates of the parametric factors are nearly identical between the PFFM and SFFM, while the PFFM+gp estimates are substantially different. The dotted line denotes $y=x$.
 \label{fig:alphapm}}
\end{center}
\end{figure}

The simple augmentation of a Gaussian process in \eqref{pffm+gp} offers  flexibility but lacks restraint: $\{h_i\}$ absorbs variability otherwise explained by $\{g_\ell\}$, which ultimately corrupts inference and interpretation of the parametric term. This functional confounding is not simply an artifact of the PFFM+gp sampling algorithm, which uses a \emph{joint} sampler for  $\{\alpha_{\ell,i}, h_i\}$  (see the supplement), and is robust to the choice of the covariance function $\mathcal{K}_h$. 
This issue is related to spatial confounding (e.g., \citealp{Reich2006}) but has not been thoroughly explored or resolved for functional data.

The proposed SFMM achieves both flexibility and distinctness by (i) modeling each $f_k$ in \eqref{sffm} as a  
smooth unknown function and (ii) constraining each $f_k$ to be orthogonal to $\{g_\ell\}$. The model for $f_k$ can be any Bayesian curve-fitting model, such as splines, wavelets, or Gaussian processes, usually with a prior that encourages smoothness; our specifications are discussed subsequently.  Crucially, our choice of constraints not only ensure   nonredundancy of $\{f_k\}$, but also offer key computational simplifications that improve scalability and increase MCMC efficiency. These results maintain regardless of the specification of $\{g_\ell\}$ or $\{f_k\}$. 

Consider model \eqref{sffm}-\eqref{obs} evaluated at the observation points $\{\tau_j\}_{j=1}^m$:
\begin{equation}\label{like}
\bm y_i =  \bm G_\gamma \bm \alpha_i + \bm F\bm \beta_i + \bm \epsilon_i, \quad \bm \epsilon_i \stackrel{indep}{\sim}N(\bm 0, \sigma_\epsilon^2 \bm I_m), \quad i=1,\ldots,n
\end{equation}
where $\bm G_{\gamma} = (\bm g_{1; \gamma},\ldots, \bm g_{L; \gamma})$ is the $m \times L$ parametric basis matrix with $\bm g_{\ell; \gamma} = (g_\ell(\tau_1; \gamma),\ldots, g_\ell(\tau_m; \gamma))'$, $\bm \alpha_i = (\alpha_{1,i},\ldots,\alpha_{L,i})'$ is the vector of parametric factors, $\bm F = (\bm f_1, \bm f_2, \ldots )$ is the $m \times \infty$ nonparametric basis matrix with $\bm f_k = (f_k(\tau_1),\ldots, f_k(\tau_m))'$, and $\bm \beta_i = (\beta_{1,i}, \beta_{2,i},\ldots,)'$ is the vector of nonparametric factors. 
By constraining each $\bm f_k$ to be orthogonal to each $\bm g_{\ell;\gamma}$, the likelihood  \eqref{like} factors into parametric and nonparametric terms: 
\begin{lemma}\label{like-decomp}
When $\bm G_\gamma ' \bm f_k = \bm 0_{L}$ for all $k=1,2,\ldots$ and conditional on $\sigma_\epsilon^2$, the likelihood \eqref{like} factorizes:
$p(\bm y \mid  \gamma , \{\bm \alpha_i\}, \bm F,  \{\bm \beta_i\}) =  p_{0}(\bm y \mid  \gamma ,\{\bm \alpha_i\}) p_{1}(\bm y \mid  \bm F , \{\bm \beta_i\})$,
where $p_0$ depends \emph{only} on the parametric terms $\gamma$  and $\{\bm \alpha_i\}$ and  $p_1$ depends \emph{only} on  the nonparametric terms $\bm F$  and $\{\bm \beta_i\}$. 
\end{lemma}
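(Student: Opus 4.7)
The plan is to exploit the Gaussian form of the likelihood and the orthogonality condition to apply the Pythagorean theorem, which will split the exponent into two non-interacting quadratic forms. First, I would write the likelihood explicitly as $p(\bm y \mid \gamma, \{\bm\alpha_i\}, \bm F, \{\bm\beta_i\}) = \prod_{i=1}^{n} (2\pi\sigma_\epsilon^2)^{-m/2} \exp\bigl(-\tfrac{1}{2\sigma_\epsilon^2}\|\bm y_i - \bm G_\gamma \bm\alpha_i - \bm F\bm\beta_i\|^2\bigr)$ and focus on the quadratic form in the exponent for a single $i$.

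Next, I would introduce the projection $\bm P_\gamma = \bm G_\gamma(\bm G_\gamma'\bm G_\gamma)^{-1}\bm G_\gamma'$ onto the column space of $\bm G_\gamma$, with orthogonal complement projection $\bm P_\gamma^\perp = \bm I_m - \bm P_\gamma$. The hypothesis $\bm G_\gamma'\bm f_k = \bm 0_L$ for every $k$ means that every finite linear combination $\bm F \bm\beta_i$ lies in the column space of $\bm P_\gamma^\perp$, while $\bm G_\gamma \bm\alpha_i$ obviously lies in the column space of $\bm P_\gamma$. Decomposing $\bm y_i = \bm P_\gamma \bm y_i + \bm P_\gamma^\perp \bm y_i$ and regrouping, I would write
\[
\bm y_i - \bm G_\gamma \bm\alpha_i - \bm F\bm\beta_i = \bigl(\bm P_\gamma \bm y_i - \bm G_\gamma \bm\alpha_i\bigr) + \bigl(\bm P_\gamma^\perp \bm y_i - \bm F\bm\beta_i\bigr),
\]
where the two terms lie in orthogonal subspaces. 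By the Pythagorean theorem,
\[
\|\bm y_i - \bm G_\gamma \bm\alpha_i - \bm F\bm\beta_i\|^2 = \|\bm P_\gamma \bm y_i - \bm G_\gamma \bm\alpha_i\|^2 + \|\bm P_\gamma^\perp \bm y_i - \bm F\bm\beta_i\|^2.
\]

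With this split in hand, the exponential separates multiplicatively for each $i$, and the Gaussian normalizing constants can be partitioned between the two factors (e.g., $(2\pi\sigma_\epsilon^2)^{-L/2}$ with the first and $(2\pi\sigma_\epsilon^2)^{-(m-L)/2}$ with the second, conditional on $\sigma_\epsilon^2$ as the statement requires). Taking products over $i$, I would then identify
\[
p_0(\bm y \mid \gamma, \{\bm\alpha_i\}) \propto \prod_{i=1}^{n} \exp\!\Bigl(-\tfrac{1}{2\sigma_\epsilon^2}\|\bm P_\gamma \bm y_i - \bm G_\gamma \bm\alpha_i\|^2\Bigr), \quad p_1(\bm y \mid \bm F, \{\bm\beta_i\}) \propto \prod_{i=1}^{n} \exp\!\Bigl(-\tfrac{1}{2\sigma_\epsilon^2}\|\bm P_\gamma^\perp \bm y_i - \bm F\bm\beta_i\|^2\Bigr),
\]
which clearly depend only on the parametric and nonparametric parameters, respectively.

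There is no serious obstacle here; the only subtlety is being careful that $\bm F \bm \beta_i$ is a well-defined element of $\mathbb{R}^m$ even though $\bm F$ has infinitely many columns—this is ensured by the summability implicit in \eqref{sffm}, so that the orthogonality $\bm G_\gamma' \bm F \bm \beta_i = 0$ holds termwise and in the limit. I would also note that the factorization is stated conditionally on $\sigma_\epsilon^2$; if one wanted an unconditional statement, the $\sigma_\epsilon^2$ prior would tie the two pieces together, but the lemma as stated avoids this issue by conditioning.
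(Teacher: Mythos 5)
Your proof is correct and, at its core, is the same argument as the paper's: the paper simply expands $\Vert\bm y_i - \bm G_\gamma \bm \alpha_i - \bm F\bm \beta_i\Vert^2$ and observes that the cross term $\bm \alpha_i' \bm G_\gamma' \bm F \bm \beta_i$ vanishes under the constraint, which is precisely the algebraic content of your Pythagorean decomposition. One bookkeeping point: as written your $p_1$ still depends on $\gamma$ through $\Vert \bm P_\gamma^\perp \bm y_i\Vert^2$; since $\bm F' \bm P_\gamma^\perp = \bm F'$ under the hypothesis, that factor is free of $(\bm F, \{\bm \beta_i\})$ and should be reassigned to $p_0$ (or absorbed into the $\bm y$-only constant, as the paper does with $\Vert\bm y_i\Vert^2$) so that $p_1$ genuinely depends only on the nonparametric terms.
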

   
Within a Bayesian model,  a reasonable notion of distinctness between parametric and nonparametric components is (conditional) independence in the posterior. By assuming independence in the prior, Lemma~\ref{like-decomp} ensures this result:
\begin{corollary}\label{post-indep}
Under model \eqref{like} with prior independence $p(\{\bm \alpha_i\},  \{\bm \beta_i\} ) =  p( \{\bm \alpha_i\}) p(\{\bm\beta_i\})$ and  $\bm G_\gamma ' \bm f_k = \bm 0_{L}$ for all $k=1,2,\ldots$, the parametric and nonparametric factors are \emph{a posteriori} independent: 
$p( \{\bm \alpha_i\}, \{\bm \beta_i\} \vert \bm y, \sigma_\epsilon^2,  \bm F, \gamma) =  p(\{\bm \alpha_i\}\vert \bm y, \sigma_\epsilon^2,  \bm F, \gamma) p( \{\bm\beta_i\}\vert \bm y, \sigma_\epsilon^2,  \bm F, \gamma)$. 
\end{corollary}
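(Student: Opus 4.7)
The plan is to combine Lemma~\ref{like-decomp} with the prior independence hypothesis through Bayes' theorem, reducing the corollary to the elementary fact that a joint density which factors as a product of a function of $\{\bm\alpha_i\}$ and a function of $\{\bm\beta_i\}$ implies conditional independence. First, I would write the target conditional posterior as proportional, in $(\{\bm\alpha_i\}, \{\bm\beta_i\})$, to the likelihood times the prior:
$$p\bigl(\{\bm\alpha_i\}, \{\bm\beta_i\} \mid \bm y, \sigma_\epsilon^2, \bm F, \gamma\bigr) \propto p\bigl(\bm y \mid \gamma, \{\bm\alpha_i\}, \bm F, \{\bm\beta_i\}, \sigma_\epsilon^2\bigr)\, p\bigl(\{\bm\alpha_i\}, \{\bm\beta_i\} \mid \sigma_\epsilon^2, \bm F, \gamma\bigr).$$

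Next I would invoke Lemma~\ref{like-decomp} to split the likelihood as $p_0(\bm y \mid \gamma, \{\bm\alpha_i\}) \, p_1(\bm y \mid \bm F, \{\bm\beta_i\})$. For the prior, the assumed factorization $p(\{\bm\alpha_i\}, \{\bm\beta_i\}) = p(\{\bm\alpha_i\}) \, p(\{\bm\beta_i\})$ must be read inside the full hierarchy: because the prior on the parametric factors does not involve $\bm F$ (and conversely the prior on $\{\bm\beta_i\}$ does not involve $\gamma$), conditioning on $(\sigma_\epsilon^2, \bm F, \gamma)$ preserves the product form, giving $p(\{\bm\alpha_i\} \mid \gamma) \, p(\{\bm\beta_i\} \mid \bm F)$. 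Multiplying and regrouping yields
$$p\bigl(\{\bm\alpha_i\}, \{\bm\beta_i\} \mid \bm y, \sigma_\epsilon^2, \bm F, \gamma\bigr) \propto \bigl[p_0(\bm y \mid \gamma, \{\bm\alpha_i\})\, p(\{\bm\alpha_i\} \mid \gamma)\bigr] \cdot \bigl[p_1(\bm y \mid \bm F, \{\bm\beta_i\})\, p(\{\bm\beta_i\} \mid \bm F)\bigr],$$
a product of one factor depending only on $\{\bm\alpha_i\}$ (given the conditioning variables) and one factor depending only on $\{\bm\beta_i\}$.

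Finally, I would appeal to the standard measure-theoretic fact that if a joint density factors as $q(x)\,r(y)$ then $X \perp Y$ with marginals proportional to $q$ and $r$. Integrating out $\{\bm\beta_i\}$ identifies the bracketed $\{\bm\alpha_i\}$-factor (up to the normalizing constant) with the marginal conditional $p(\{\bm\alpha_i\} \mid \bm y, \sigma_\epsilon^2, \bm F, \gamma)$, and symmetrically for $\{\bm\beta_i\}$, yielding the claimed factorization.

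The proof is largely mechanical once Lemma~\ref{like-decomp} is in hand; the only point that requires care, and which I view as the main subtlety rather than a real obstacle, is making explicit that prior independence of $\{\bm\alpha_i\}$ and $\{\bm\beta_i\}$ continues to hold after conditioning on $(\bm F, \gamma, \sigma_\epsilon^2)$. This is immediate from the SFFM hierarchy specified in Section~\ref{all-np}, where the priors on the parametric and nonparametric factors are specified independently and are not linked through cross-component hyperparameters, but it deserves an explicit sentence so that the factorization of the prior used in the display above is unambiguous.
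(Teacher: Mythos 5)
Your proposal is correct and follows essentially the same route as the paper, which proves the corollary in one line as a direct consequence of the prior independence assumption and the likelihood factorization of Lemma~\ref{like-decomp}. Your additional care in noting that the prior factorization survives conditioning on $(\sigma_\epsilon^2, \bm F, \gamma)$ is a reasonable elaboration of a step the paper leaves implicit, but it does not change the argument.
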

Corollary~\ref{post-indep} is crucial not only for preserving \emph{distinctness} between the parametric and nonparametric terms, but also for designing an \emph{efficient} MCMC sampler. The  specific constraint $\bm G_\gamma' \bm f_k = \bm 0_L$ ensures that the sampling steps for all parametric factors  $\{\bm \alpha_i\}$ do \emph{not} depend on the nonparametric parameters in any way: the sampling steps for $\{\bm \alpha_i\}$ are identical to the fully parametric case, while the sampling steps for  $\{\bm\beta_i\}$ proceed exactly as in a fully nonparametric setting (see Section~\ref{order-prior}). This decoupling of parametric and nonparametric sampling steps is particularly useful for more complex models. At the same time, Corollary~\ref{post-indep} implies that \emph{separate} sampling steps for $\{\bm \alpha_i\}$ and $\{\bm \beta_i\}$ are equivalent to a \emph{joint} sampler for all $\{\bm \alpha_i, \bm \beta_i\}$. Joint sampling steps are typically preferred for greater MCMC efficiency, but often increase the computational burden. For instance, a joint sampler for $\{\bm \alpha_i, h_i\}$ under the PFFM+gp requires either a large block sampler or marginalization over either $\bm\alpha_i$ or $h_i$, both of which become more challenging when the model for $\bm \alpha_i$ is complex. The SFFM---with the proposed orthogonality constraint---completely avoids this tradeoff and guarantees  joint sampling steps even for complex models. Such a reduction is not available in general, for example $L^2$ orthogonality $\int_\mathcal{T} g_\ell(\tau; \gamma) f_k(\tau) \, d\tau = 0$ for $\ell=1,\ldots,L$.

To enforce the proposed constraint, we \emph{condition} on orthogonality $\bm G_\gamma ' \bm f_k = \bm 0_{L}$ in the model for $\bm f_k$.  Conditioning is a natural Bayesian mechanism for incorporating such information into estimation and inference. We model the unknown functions $f_k(\tau) = \bm b'(\tau) \bm \psi_k$ using known basis functions $\bm b$ and unknown coefficients $\bm \psi_k$, so a  prior on $\bm \psi_k$ implies  a prior on the function $f_k$, for example to encourage smoothness. Our default specification uses low-rank thin plate splines $\bm b$, which are flexible, computationally efficient, and well-defined on $\mathcal{T} \subset \mathbb{R}^d$ for $d \in \mathbb{Z}^+$, along with the smoothness prior $[\bm \psi_k \mid \lambda_{f_k}] \stackrel{indep}{\sim} N(\bm 0, \lambda_{f_k}^{-1} \bm \Omega^{-})$, where  $\bm \Omega$ is the (known) matrix of integrated squared second derivatives, $[\bm \Omega]_{j, j'} = \int \ddot{b}_j(\tau) \ddot{b}_{j'}(\tau) d\tau$, which implies that the prior distribution represents the classical roughness penalty $-2\log p(\bm \psi_k \mid \lambda_{f_k}) \stackrel{c}{=} \lambda_{f_k} \bm \psi_k' \bm \Omega \bm \psi_k = \lambda_{f_k} \int_\mathcal{T} \{\ddot{f}_k(\tau)\}^2 d\tau$ and the smoothing parameter $\lambda_{f_k}^{-1/2} \stackrel{indep}{\sim} \mbox{Uniform}(0, 10^4)$ is learned from the data. 

Absent the orthogonality constraint, the full conditional distribution of each vector of coefficients is $[\bm \psi_k \mid -] \sim N\left(\bm Q_{\psi_k}^{-1} \bm \ell_{\psi_k}, \bm Q_{\psi_k}^{-1}\right)$, where 
$\bm Q_{\psi_k} =  \sigma_\epsilon^{-2}(\bm B'\bm B)\sum_{i=1}^n \beta_{k,i}^2 +   \lambda_{f_k} \bm \Omega $, $\bm \ell_{\psi_k} =   \sigma_\epsilon^{-2}\bm B' \sum_{i=1}^n\big\{\beta_{k,i} \big( \bm y_i  - \bm G_\gamma \bm\alpha_i - \sum_{\ell \ne k} \bm f_\ell \beta_{\ell,i}\big)\big\}$, and $\bm B = (\bm b'(\tau_1), \ldots, \bm b'(\tau_m))'$. Samples from the full conditional distribution of $\bm \psi_k$ \emph{conditional} on $\bm G_\gamma ' \bm f_k = \bm 0_{L}$ can be obtained by (i) sampling from the unconstrained distribution $[\bm \psi_k \mid -]$ and (ii) applying a simple shift:
  \begin{lemma}
	\label{lem:constraint}
	Suppose $\bm G_\gamma$ is $L\times m$ with rank $L$. Denote $\bm\psi_k^0 \sim N\left(\bm Q_{\psi_k}^{-1} \bm \ell_{\psi_k}, \bm Q_{\psi_k}^{-1}\right)$ with $\bm f_k^0 = \bm B \bm \psi_k^0$. The shifted term  $\bm f_k = \bm B \bm \psi_k$ with $\bm \psi_k = \bm \psi_k^0 - \bm Q_{\psi_k}^{-1} \bm B' \bm G_\gamma \left(\bm G_\gamma' \bm B\bm Q_{\psi_k}^{-1}\bm B' \bm G_\gamma\right)^{-1} \bm G_\gamma' \bm B\bm \psi_k^0$  satisfies  $\bm f_k \stackrel{d}{=}[\bm f_k^0 \mid \bm G_\gamma' \bm f_k^0 = \bm 0]$ and $ \mathbb{P}(\bm G_\gamma' \bm f_k = \bm 0) = 1$. 
\end{lemma}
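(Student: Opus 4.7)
The plan is to recognize this result as a standard ``conditioning by Kriging'' construction for Gaussian vectors, specialized to the linear constraint $\bm{A}\bm{\psi}_k^0 = \bm{0}$ with $\bm{A} := \bm{G}_\gamma' \bm{B}$. I will proceed in three steps: (i) verify deterministically that the shift enforces the constraint on $\bm{\psi}_k$, so that $\mathbb{P}(\bm{G}_\gamma' \bm{f}_k = \bm{0}) = 1$ is immediate; (ii) show the distributional equality $\bm{\psi}_k \stackrel{d}{=} [\bm{\psi}_k^0 \mid \bm{A}\bm{\psi}_k^0 = \bm{0}]$ by matching the first two moments, since both sides are Gaussian; and (iii) lift to $\bm{f}_k$ via multiplication by $\bm{B}$, using that the conditioning event $\bm{A}\bm{\psi}_k^0 = \bm{0}$ coincides with $\bm{G}_\gamma' \bm{f}_k^0 = \bm{0}$.

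For step (i), write $\bm{M} := \bm{Q}_{\psi_k}^{-1} \bm{B}' \bm{G}_\gamma (\bm{A}\bm{Q}_{\psi_k}^{-1}\bm{A}')^{-1}$, so that $\bm{\psi}_k = (\bm{I} - \bm{M}\bm{A})\bm{\psi}_k^0$. Direct substitution gives $\bm{A}\bm{M} = (\bm{A}\bm{Q}_{\psi_k}^{-1}\bm{A}')(\bm{A}\bm{Q}_{\psi_k}^{-1}\bm{A}')^{-1} = \bm{I}_L$, hence $\bm{A}\bm{\psi}_k = (\bm{A} - \bm{A}\bm{M}\bm{A})\bm{\psi}_k^0 = \bm{0}$ for every realization of $\bm{\psi}_k^0$, which yields $\bm{G}_\gamma' \bm{f}_k = \bm{G}_\gamma' \bm{B}\bm{\psi}_k = \bm{0}$ with probability one. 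The only technical subtlety is invertibility of $\bm{A}\bm{Q}_{\psi_k}^{-1}\bm{A}' = \bm{G}_\gamma' \bm{B}\bm{Q}_{\psi_k}^{-1}\bm{B}'\bm{G}_\gamma$, which follows from the rank-$L$ hypothesis on $\bm{G}_\gamma$ together with $\bm{B}$ having row rank at least $L$ at the observation grid, a mild condition that holds for the low-rank thin plate basis used in the paper.

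For step (ii), the pair $(\bm{\psi}_k^0, \bm{A}\bm{\psi}_k^0)$ is jointly Gaussian with means $(\bm{\mu}, \bm{A}\bm{\mu})$, covariance blocks $\bm{Q}_{\psi_k}^{-1}$ and $\bm{A}\bm{Q}_{\psi_k}^{-1}\bm{A}'$, and cross-covariance $\bm{Q}_{\psi_k}^{-1}\bm{A}'$, where $\bm{\mu} := \bm{Q}_{\psi_k}^{-1}\bm{\ell}_{\psi_k}$. The classical conditional Gaussian formula then gives $[\bm{\psi}_k^0 \mid \bm{A}\bm{\psi}_k^0 = \bm{0}] \sim N(\bm{\mu} - \bm{M}\bm{A}\bm{\mu},\ \bm{Q}_{\psi_k}^{-1} - \bm{M}\bm{A}\bm{Q}_{\psi_k}^{-1})$. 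Since $\bm{\psi}_k$ is an affine function of $\bm{\psi}_k^0$, it is Gaussian with mean $(\bm{I} - \bm{M}\bm{A})\bm{\mu}$ (which matches immediately) and covariance $(\bm{I} - \bm{M}\bm{A})\bm{Q}_{\psi_k}^{-1}(\bm{I} - \bm{M}\bm{A})'$. Showing that this covariance collapses to $\bm{Q}_{\psi_k}^{-1} - \bm{M}\bm{A}\bm{Q}_{\psi_k}^{-1}$ is the one place where the argument is not purely mechanical, and it is the main obstacle I expect to face. The two structural facts that make the collapse work are that $\bm{K} := \bm{M}\bm{A}$ is idempotent (immediate from $\bm{A}\bm{M} = \bm{I}_L$) and that $\bm{K}\bm{Q}_{\psi_k}^{-1}$ is symmetric; together these allow the cross and quadratic terms in the expansion to merge into a single copy of $\bm{K}\bm{Q}_{\psi_k}^{-1}$, and clarify why the particular oblique projector in the statement, rather than an arbitrary projector onto the null space of $\bm{A}$, is the correct shift.

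Step (iii) is then immediate: applying $\bm{B}$ to both sides of $\bm{\psi}_k \stackrel{d}{=} [\bm{\psi}_k^0 \mid \bm{A}\bm{\psi}_k^0 = \bm{0}]$ and using $\bm{A}\bm{\psi}_k^0 = \bm{G}_\gamma' \bm{B}\bm{\psi}_k^0 = \bm{G}_\gamma' \bm{f}_k^0$ yields $\bm{f}_k = \bm{B}\bm{\psi}_k \stackrel{d}{=} [\bm{f}_k^0 \mid \bm{G}_\gamma' \bm{f}_k^0 = \bm{0}]$, completing the proof.
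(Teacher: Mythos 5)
Your proposal is correct and follows essentially the same route as the paper: both arguments recognize the shift as conditioning a Gaussian vector on a linear functional being zero and then verify that the affine map reproduces the conditional mean and covariance, with your version merely doing the conditioning in coefficient space ($\bm \psi_k^0$ given $\bm G_\gamma'\bm B \bm\psi_k^0 = \bm 0$) and pushing forward by $\bm B$, whereas the paper conditions $\bm f_k^0$ directly. A minor plus of your write-up is that you make explicit the ``direct calculation'' the paper leaves to the reader --- the covariance collapse via idempotency of $\bm M \bm A$ and symmetry of $\bm M\bm A\bm Q_{\psi_k}^{-1}$ --- and you correctly flag that invertibility of $\bm G_\gamma'\bm B\bm Q_{\psi_k}^{-1}\bm B'\bm G_\gamma$ needs $\bm G_\gamma'\bm B$ to have rank $L$, a condition the lemma's hypotheses do not state but which both proofs implicitly require.
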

Lemma~\ref{lem:constraint} provides a simple and effective mechanism to adapt any standard (basis) regression sampling step for $[\bm \psi_k \mid -] $ to accommodate this crucial orthogonality constraint, including in the case when the nonlinear term $\gamma$ is unknown. Note that it is possible to adapt Lemma~\ref{lem:constraint} for the PFFM+gp; however, the resulting model still does not provide direct inference on the necessity of the nonparametric factors  (i.e., rank selection) or describe the systemic biases of $\mathcal{H}_0$, which are captured by $\{f_k\}$ in the SFFM. With or without additional constraints, the PFFM+gp sampling algorithm lacks the same scalability in $m$ (see Section~\ref{apps}).

For further computational simplifications along with parameter identifiability, we additionally constrain $\bm G_\gamma$ and $\bm F$ to be \emph{orthonormal}, which   implies that the joint basis $\{g_1,\ldots, g_L, f_1, f_2, \ldots\}$ in \eqref{sffm} is orthonormal. First, $\bm G_\gamma$ is orthogonalized using a QR decomposition. For any value of $\gamma$, let $\bm G_\gamma^0 = \bm Q_\gamma \bm R_\gamma$ be the QR decomposition of the initial basis matrix   $\bm G_{\gamma}^0 = (\bm g_{1; \gamma},\ldots, \bm g_{L; \gamma})$. By setting $\bm G_\gamma = \bm Q_\gamma$, we ensure that $\bm G_\gamma' \bm  G_\gamma = \bm I_L$ and the columns of $\bm G_\gamma$ span the same space as the columns of $\bm G_\gamma^0$. When $\gamma$ is unknown and endowed with a prior distribution, the QR decomposition is incorporated into the likelihood evaluations of \eqref{like} for posterior sampling of $\gamma$. The orthonormality of $\bm G_\gamma$ provides critical simplifications for efficient posterior inference: 
\begin{corollary}\label{cor-alpha}
Under model \eqref{like} and subject to $\bm G_\gamma'\bm f_k = \bm 0_L$ for all $k$ and $\bm G_\gamma'\bm G_\gamma = \bm I_L$, the likelihood for the parametric factors is proportional to $p(\bm y \mid  \gamma , \{\bm \alpha_i\}, \bm F,  \{\bm \beta_i\}) \propto  p_{0, G}(\bm y \mid  \gamma ,\{\bm \alpha_i\})$ where $p_{0,G}$ is the likelihood defined by $[\bm g_{\ell; \gamma}'\bm y_i \mid -] \stackrel{indep}{\sim} N(\alpha_{\ell,i} , \sigma_\epsilon^2)$ for $\ell=1,\ldots,L$.
\end{corollary}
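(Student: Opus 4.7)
The plan is to leverage the orthonormality of $\bm G_\gamma$ to perform an orthogonal decomposition of the residual vector $\bm r_i = \bm y_i - \bm G_\gamma \bm \alpha_i - \bm F \bm \beta_i$ in the Gaussian likelihood, and then read off the marginal distribution of $\bm G_\gamma' \bm y_i$. Since $\bm G_\gamma'\bm G_\gamma = \bm I_L$, the matrix $\bm P_G = \bm G_\gamma \bm G_\gamma'$ is the orthogonal projector onto $\mathrm{span}(\bm G_\gamma)$, and $\bm I_m - \bm P_G$ is the projector onto its orthogonal complement. By Pythagoras,
\begin{equation*}
\|\bm r_i\|^2 = \|\bm G_\gamma' \bm r_i\|^2 + \|(\bm I_m - \bm P_G)\bm r_i\|^2,
\end{equation*}
so the per-observation log-likelihood splits into two additive terms.

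Next, I would use the constraint $\bm G_\gamma' \bm f_k = \bm 0_L$ (for every $k$) to simplify each term. Stacking these gives $\bm G_\gamma' \bm F = \bm 0$, hence $\bm G_\gamma' \bm r_i = \bm G_\gamma' \bm y_i - \bm \alpha_i$ and $(\bm I_m - \bm P_G)\bm r_i = (\bm I_m - \bm P_G)\bm y_i - \bm F \bm \beta_i$ (using also $(\bm I_m - \bm P_G)\bm F = \bm F$). Substituting yields
\begin{equation*}
\|\bm r_i\|^2 = \|\bm G_\gamma' \bm y_i - \bm \alpha_i\|^2 + \|(\bm I_m - \bm P_G)\bm y_i - \bm F \bm \beta_i\|^2,
\end{equation*}
and exponentiating exhibits the factorization promised by Lemma~\ref{like-decomp} in an explicit form: the first piece depends only on $\gamma$ and $\bm\alpha_i$, the second only on $\bm F$ and $\bm\beta_i$.

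Finally, I would identify the first factor as the claimed likelihood. Writing $\bm z_i = \bm G_\gamma' \bm y_i$ with $\ell$th component $\bm g_{\ell;\gamma}' \bm y_i$, the exponent $\|\bm z_i - \bm \alpha_i\|^2/(2\sigma_\epsilon^2)$ is precisely that of the $L$-dimensional Gaussian $N(\bm\alpha_i, \sigma_\epsilon^2 \bm I_L)$, which (with the appropriate normalization constant absorbed into the proportionality) is $p_{0,G}(\bm y \mid \gamma, \{\bm\alpha_i\}) = \prod_{i=1}^n \prod_{\ell=1}^L N(\bm g_{\ell;\gamma}' \bm y_i; \alpha_{\ell,i}, \sigma_\epsilon^2)$. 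Independence across $\ell$ for fixed $i$ follows because $\mathrm{Cov}(\bm G_\gamma' \bm \epsilon_i) = \sigma_\epsilon^2 \bm G_\gamma' \bm G_\gamma = \sigma_\epsilon^2 \bm I_L$.

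There is no real obstacle here; the entire argument is an orthogonal-projection computation enabled by the two constraints $\bm G_\gamma' \bm F = \bm 0$ and $\bm G_\gamma' \bm G_\gamma = \bm I_L$. The only minor point worth stating carefully is that the proportionality in the conclusion hides a multiplicative factor that depends on $\bm y$, $\bm F$, $\{\bm\beta_i\}$, and $\sigma_\epsilon^2$ but not on $\gamma$ or $\{\bm\alpha_i\}$, which is exactly what is needed for the claim that the parametric full conditionals under the SFFM coincide with those of a model in which one observes the sufficient statistics $\bm g_{\ell;\gamma}' \bm y_i$ with independent $N(\alpha_{\ell,i}, \sigma_\epsilon^2)$ noise.
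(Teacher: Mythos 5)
Your proof is correct and reaches the same destination as the paper's, but by a slightly different (and arguably more transparent) route. The paper simply reuses the expansion from the proof of Lemma~\ref{like-decomp}, drops the terms not involving $\{\bm \alpha_i\}$, and applies $\Vert \bm G_\gamma \bm \alpha_i\Vert^2 = \Vert \bm \alpha_i \Vert^2$ to recognize the kernel of $p_{0,G}$ up to a constant. You instead introduce the projector $\bm P_G = \bm G_\gamma \bm G_\gamma'$ and use the Pythagorean identity to write $\Vert \bm r_i \Vert^2$ as a sum of two \emph{complete} squares, $\Vert \bm G_\gamma'\bm y_i - \bm \alpha_i\Vert^2 + \Vert(\bm I_m - \bm P_G)\bm y_i - \bm F \bm \beta_i\Vert^2$; this exhibits the pseudo-data Gaussian likelihood directly rather than up to a completion-of-squares constant, and it reproves the factorization of Lemma~\ref{like-decomp} in passing. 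Both arguments are elementary quadratic-form manipulations driven by the same two facts, $\bm G_\gamma'\bm F = \bm 0$ and $\bm G_\gamma'\bm G_\gamma = \bm I_L$, so the difference is one of presentation rather than substance.

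One small inaccuracy in your closing remark: under your own decomposition the absorbed multiplicative factor $\exp\{-\sum_i \Vert(\bm I_m - \bm P_G)\bm y_i - \bm F\bm\beta_i\Vert^2/(2\sigma_\epsilon^2)\}$ \emph{does} depend on $\gamma$, since $\Vert(\bm I_m - \bm P_G)\bm y_i\Vert^2 = \Vert \bm y_i\Vert^2 - \Vert \bm G_\gamma'\bm y_i\Vert^2$ varies with $\bm G_\gamma$ (only the cross term $\bm F'(\bm I_m - \bm P_G)\bm y_i = \bm F'\bm y_i$ is $\gamma$-free). The paper's proof absorbs an analogous $\gamma$-dependent factor at its last proportionality, so the corollary's proportionality should be read as holding in $\{\bm \alpha_i\}$ for fixed $\gamma$ --- which is how it is used, namely for the full conditionals of $\{\alpha_{\ell,i}\}$. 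This does not affect the validity of your argument for the stated claim, but the sentence asserting no $\gamma$-dependence of the hidden factor should be weakened or removed.
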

Corollary~\ref{cor-alpha} implies that posterior inference for the parametric factors only requires a simple independent Gaussian likelihood with pseudo-data  $\{\bm g_{\ell; \gamma}'\bm y_i\}$. Hence, the posterior sampling steps for $\{\alpha_{\ell,i}\}$ are often convenient: for example, under the prior $\alpha_{\ell,i} \stackrel{indep}{\sim} N(0, \sigma_{\alpha_\ell}^2)$, the joint full conditional distribution of  $\{\alpha_{\ell_i}\}$ is decomposable into independent univariate Gaussian distributions with $\mathbb{E}(\alpha_{\ell,i}\mid -) = \sigma_{\alpha_\ell}^2/(\sigma_{\alpha_\ell}^2 + \sigma_\epsilon^2)\bm g_{\ell; \gamma}'\bm y_i$ and  $\mbox{Var}(\alpha_{\ell,i}\mid -) = \sigma_\epsilon^2\sigma_{\alpha_\ell}^2/(\sigma_{\alpha_\ell}^2 + \sigma_\epsilon^2)$, which may be sampled efficiently---and is identical for both the PFFM and the SFFM. 
Similar simplifications apply for more complex (e.g., dynamic) models for the parametric factors, and are invariant to the parametrization of the remaining model. 
We emphasize that the constraint on $\bm G_\gamma$  is simply a computationally favorable reparametrization of the parametric term and does not hinder the interpretation: since $\bm G_\gamma^0 \bm \alpha_i^0 = \bm G_\gamma \bm \alpha_i$ for $\bm \alpha_i = \bm R_\gamma \bm \alpha_i^0$ under the QR decomposition, we  recover the parametric factors on the original scale by setting $\bm \alpha_i^0 = \bm R_\gamma^{-1} \bm \alpha_i$, which can be computed draw by draw within the MCMC sampler.

For each nonparametric term $\bm f_k$, we incorporate the \emph{orthogonality} constraints via conditioning on $\bm f_{k'}'\bm f_k = 0$ for all $k' \ne k$ and the \emph{unit-norm} constraint by rescaling $\bm f_k$ appropriately. These steps follow \cite{Kowal2020b} and  are described briefly. 
The linear orthogonality constraint is enforced using a straightforward modification of Lemma~\ref{lem:constraint}, which augments the parametric orthogonality of $\bm G_\gamma$ with the nonparametric terms $\bm f_{k'}$ for $k'\ne k$. The unit-norm constraint is enforced by suitably rescaling $\bm \psi_k$ after sampling; we equivalently rescale $\beta_{k,i}$ to preserve the  the product $\bm f_k \beta_{k,i}$ in the likelihood \eqref{like}. This operation does not change the shape of the curve $f_k$ nor the likelihood \eqref{like}. 
 
 Naturally, Corollary~\ref{cor-alpha} may be adapted for the nonparametric factors:
 \begin{corollary}\label{cor-beta}
Under model \eqref{like} and subject to $\bm G_\gamma'\bm f_k = \bm 0_L$ for all $k$ and $\bm F'\bm F = \bm I_\infty$, the likelihood for the nonparametric factors is proportional to $p(\bm y \mid  \gamma , \{\bm \alpha_i\}, \bm F,  \{\bm \beta_i\})  \propto  p_{1,F}(\bm y \mid  \bm F , \{\bm \beta_i\}) $ where $p_{1,F}$ is the likelihood defined by $[\bm f_k'\bm y_i \mid -] \stackrel{indep}{\sim} N(\beta_{k,i}, \sigma_\epsilon^2)$ for all $k$.
\end{corollary}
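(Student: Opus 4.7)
The plan is to mirror the proof of Corollary~\ref{cor-alpha}, exploiting the same two ingredients---orthogonality $\bm G_\gamma' \bm F = \bm 0$ and orthonormality $\bm F' \bm F = \bm I_\infty$---but now projecting onto the nonparametric basis instead of the parametric one. Lemma~\ref{like-decomp} already supplies the factorization $p(\bm y \mid -) = p_0 \, p_1$ with $p_1$ depending only on $\bm F$ and $\{\bm \beta_i\}$, so the task reduces to identifying $p_1$ with $p_{1,F}$ up to a multiplicative factor that is free of $\{\bm\beta_i\}$.

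The key step is a Pythagorean decomposition of the residual sum of squares. Since $\bm G_\gamma' \bm F = \bm 0$, the orthogonal projectors $\bm P_G = \bm G_\gamma \bm G_\gamma'$ and $\bm P_F = \bm F \bm F'$ annihilate each other, and together with $\bm P_\perp := \bm I_m - \bm P_G - \bm P_F$ they split the residual $\bm y_i - \bm G_\gamma\bm\alpha_i - \bm F\bm\beta_i$ into three mutually orthogonal components. Combined with $\bm G_\gamma'\bm G_\gamma = \bm I_L$ and $\bm F'\bm F = \bm I_\infty$, this yields
\begin{equation*}
\|\bm y_i - \bm G_\gamma\bm\alpha_i - \bm F\bm\beta_i\|^2 \;=\; \|\bm G_\gamma'\bm y_i - \bm\alpha_i\|^2 + \|\bm F'\bm y_i - \bm\beta_i\|^2 + \|\bm P_\perp\bm y_i\|^2.
\end{equation*}
The first and third terms are free of $\{\bm\beta_i\}$, so the full Gaussian likelihood is proportional---in $\{\bm\beta_i\}$---to $\exp\bigl\{-\tfrac{1}{2\sigma_\epsilon^2}\sum_i\|\bm F'\bm y_i - \bm\beta_i\|^2\bigr\}$. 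Because the columns of $\bm F$ are orthonormal, $\|\bm F'\bm y_i - \bm\beta_i\|^2 = \sum_k (\bm f_k'\bm y_i - \beta_{k,i})^2$, which is exactly the kernel of the pseudo-data model $[\bm f_k'\bm y_i \mid -] \stackrel{indep}{\sim} N(\beta_{k,i}, \sigma_\epsilon^2)$ defining $p_{1,F}$.

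There is no real obstacle: the argument is a block-wise completion-of-the-square that simply swaps the roles of $\bm G_\gamma$ and $\bm F$ relative to Corollary~\ref{cor-alpha}. The only mild delicacy is that the residual factor $\exp\{-\tfrac{1}{2\sigma_\epsilon^2}\sum_i\|\bm P_\perp\bm y_i\|^2\}$ does depend on $\bm F$ through $\bm P_\perp$, but this is harmless because proportionality is only required with respect to $\{\bm\beta_i\}$, exactly as in Corollary~\ref{cor-alpha}. The infinite number of columns in $\bm F$ poses no issue either, since $\bm F'\bm F = \bm I_\infty$ forces all but finitely many coordinate terms $(\bm f_k'\bm y_i - \beta_{k,i})^2$ to be inactive, so the displayed identity transfers verbatim from the finite-dimensional case.
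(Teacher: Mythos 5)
Your proof is correct and follows essentially the same route as the paper's: the paper expands the Gaussian quadratic form exactly as in Lemma~\ref{like-decomp}, discards the $\{\bm\beta_i\}$-free terms, and completes the square using $\bm F'\bm F = \bm I_\infty$, and your Pythagorean/projector decomposition is that same computation in slightly more geometric packaging. The only cosmetic caveat is that your displayed identity also invokes $\bm G_\gamma'\bm G_\gamma = \bm I_L$, which is not among this corollary's stated hypotheses (though it holds by construction throughout the paper) and is in any case irrelevant since the parametric block is constant in $\{\bm\beta_i\}$.
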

As with the parametric factors, Corollary~\ref{cor-beta} implies that posterior inference for the nonparametric factors only requires a simple independent Gaussian likelihood with pseudo-data $\{\bm f_k'\bm y_i\}$. As a result, the SFFM is capable of incorporating  complex prior specifications for $\{\beta_{k,i}\}$---which are essential for inference in the SFFM with the \emph{infinite} expansion  in \eqref{sffm}.

\section{Modeling the nonparametric factors with rank selection}\label{order-prior}
Regularization of the SFFM is critical: the  infinite-dimensional nonparametric  term in \eqref{sffm}  is clearly overparametrized for finite $m$. The nonparametric  components increase model complexity and should be removed whenever the added complexity is not supported by the data. This \emph{rank selection} is complicated by the presence of the parametric template with unknown $\{\alpha_{\ell,i}\}$ and $\gamma$. 
Information criteria offer metrics for model comparisons, but require separate model fits for each rank.
Ordered shrinkage priors (e.g., \citealp{bhattacharya2011sparse})  can reduce sensitivity to the choice of the rank, but often induce undesirable overshrinkage. 
Instead, we design an \emph{ordered spike-and-slab} prior  that applies joint shrinkage and selection to the nonparametric \emph{factors} $\{\beta_{k,i}\}_{i=1}^n$ and therefore removes unnecessary nonparametric terms from the SFFM. Because of the proposed orthogonality constraints on $\{g_\ell\}$ and $\{f_k\}$ and the accompanying simplifications from Corollaries~\ref{cor-alpha}~and~\ref{cor-beta}, we are able to introduce a sophisticated prior for $\{\beta_{k,i}\}$  within the broader SFFM framework of \eqref{sffm}---and do so with minimal impact on computational  cost and MCMC efficiency relative to the PFFM (see Section~\ref{apps}).

\subsection{Ordered spike-and-slab priors}
Let $\theta_k$ denote the parameter that determines the inclusion of the $k$th nonparametric factor $\{\beta_{k,i}\}_{i=1}^n$; the connection is made explicit below. We apply a spike-and-slab prior for  $\{\theta_k\}_{k=1}^\infty$ that imposes a particular \emph{ordering} structure on the spike probabilities $\{\pi_k\}_{k=1}^\infty$:
\begin{align}
\label{csp}
[\theta_k \mid \pi_k] &\sim P_k, \quad P_k = (1-\pi_k) P_{slab} + \pi_k P_{spike} \\
\label{pi-prior}
&\pi_k = \sum_{h=1}^k \omega_h, \quad \omega_h = \nu_h \prod_{\ell=1}^{h-1} (1-\nu_\ell), \quad  \nu_\ell \stackrel{iid}{\sim} \mbox{Beta}(\iota, \iota \kappa),
\end{align}
where $P_{slab}$ is distribution of the slab (active) component and $P_{spike}$ is the distribution of the spike (inactive) component. The  cumulative summation for $\pi_k$ ensures an increasing sequence of (spike) probabilities, $\pi_k < \pi_{k+1}$, that converges: $\lim_{k\rightarrow\infty} \pi_k = 1$. The hyperparameters are determined by the distributions $P_{slab} $ and $P_{spike}$ and the scalars  $\iota,\kappa > 0$.  The prior in \eqref{csp}-\eqref{pi-prior}  generalizes the \emph{cumulative shrinkage process} (CUSP) introduced by \cite{Legramanti2020}, who specified $P_{spike} = \delta_{\theta_\infty}$ to be a point mass at  $\theta_\infty$ with  $\iota = 1$ and fixed $\kappa$ (using $\kappa = 5$ in their examples). The CUSP in \eqref{csp}-\eqref{pi-prior} will be further adapted for the SFFM model \eqref{sffm} in Section~\ref{px-models}.


The ordering of the spike probabilities $\{\pi_k\}_{k=1}^\infty$ in \eqref{pi-prior} implies an ordering for the prior distribution of the parameters $\{\theta_k\}_{k=1}^\infty$ in \eqref{csp}:
\begin{proposition}\label{csp-shrink}
For $\varepsilon>0$ and fixed $\theta_0$, let $\mathbb B_\varepsilon(\theta_0) = \{\theta_k: | \theta_k - \theta_0| < \varepsilon\}$. Prior  \eqref{csp}-\eqref{pi-prior} implies that
$\mathbb{P}(|\theta_k - \theta_0| \le \varepsilon) < \mathbb{P}(|\theta_{k+1} - \theta_0| \le \varepsilon) $
whenever $P_{slab}\{\mathbb B_\varepsilon(\theta_0)\} < P_{spike}\{\mathbb B_\varepsilon(\theta_0)\}$. 
\end{proposition}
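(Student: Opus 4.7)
The proposition is essentially a direct consequence of two facts: (i) $\theta_k$ is a two-component mixture with mixing weight $\pi_k$ on the spike, and (ii) the sequence $\{E[\pi_k]\}$ is strictly increasing under the stick-breaking construction. My plan is to write the unconditional probability as an affine function of $E[\pi_k]$, then compare consecutive indices.

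First, I would marginalize over $\pi_k$. Conditional on $\pi_k$, \eqref{csp} gives
\begin{equation*}
\mathbb{P}(\theta_k \in \mathbb{B}_\varepsilon(\theta_0) \mid \pi_k) = (1-\pi_k)\, P_{slab}\{\mathbb{B}_\varepsilon(\theta_0)\} + \pi_k\, P_{spike}\{\mathbb{B}_\varepsilon(\theta_0)\},
\end{equation*}
so taking expectations over the stick-breaking weights yields
\begin{equation*}
\mathbb{P}(\theta_k \in \mathbb{B}_\varepsilon(\theta_0)) = P_{slab}\{\mathbb{B}_\varepsilon(\theta_0)\} + E[\pi_k]\bigl(P_{spike}\{\mathbb{B}_\varepsilon(\theta_0)\} - P_{slab}\{\mathbb{B}_\varepsilon(\theta_0)\}\bigr).
\end{equation*}
Subtracting the analogous expression for index $k$ from that for $k+1$ leaves
\begin{equation*}
\mathbb{P}(\theta_{k+1} \in \mathbb{B}_\varepsilon(\theta_0)) - \mathbb{P}(\theta_k \in \mathbb{B}_\varepsilon(\theta_0)) = \bigl(E[\pi_{k+1}] - E[\pi_k]\bigr)\bigl(P_{spike}\{\mathbb{B}_\varepsilon(\theta_0)\} - P_{slab}\{\mathbb{B}_\varepsilon(\theta_0)\}\bigr).
\end{equation*}
Under the proposition's hypothesis the second factor is strictly positive, so it remains to establish strict positivity of the first factor.

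Next I would show $E[\pi_{k+1}] > E[\pi_k]$, which by \eqref{pi-prior} reduces to $E[\omega_{k+1}] > 0$. Since $\omega_{k+1} = \nu_{k+1}\prod_{\ell=1}^{k}(1-\nu_\ell)$ with $\nu_\ell \stackrel{iid}{\sim} \mbox{Beta}(\iota, \iota\kappa)$ (and $\iota,\kappa > 0$), independence gives $E[\omega_{k+1}] = \frac{1}{1+\kappa}\bigl(\frac{\kappa}{1+\kappa}\bigr)^{k} > 0$. Combining the two factors yields the strict inequality claimed in the proposition.

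There is no real obstacle: the argument is essentially bookkeeping once one recognizes the mixture representation and uses the non-negativity and nontriviality of the stick-breaking increments. The only thing I would be mildly careful about is the closed/open ball distinction ($\mathbb{B}_\varepsilon$ is defined with strict inequality while the displayed probabilities use $\leq$), but the identity above holds verbatim for either convention as long as the same one is used on both sides, so it does not affect the conclusion. The statement and its proof do not rely on any feature specific to the SFFM and would go through for any distributions $P_{spike}$ and $P_{slab}$.
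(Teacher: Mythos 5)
Your proposal is correct and follows essentially the same route as the paper's proof: marginalize the two-component mixture over $\pi_k$ to express $\mathbb{P}(\theta_k \in \mathbb{B}_\varepsilon(\theta_0))$ as an affine function of $\mathbb{E}(\pi_k)$, then use the strict monotonicity of $\mathbb{E}(\pi_k) = 1 - \{\kappa/(1+\kappa)\}^k$ (equivalently, $\mathbb{E}(\omega_{k+1}) > 0$) together with the sign of $P_{spike}\{\mathbb{B}_\varepsilon(\theta_0)\} - P_{slab}\{\mathbb{B}_\varepsilon(\theta_0)\}$. Your side remark on the open-versus-closed-ball convention is a fair observation about a small imprecision that the paper's own proof also glosses over, and it does not affect the conclusion.
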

Intuitively, for any $\theta_0$ that is favored under the spike distribution $P_{spike}$ relative to the slab distribution $P_{slab}$, the CUSP prior  \eqref{csp}-\eqref{pi-prior} places greater mass around $\theta_0$ as $k$ increases. For spike-and-slab priors, we are most interested in $\theta_0 = 0$. The special case of Proposition~\ref{csp-shrink} with $P_{spike} = \delta_{\theta_\infty}$, $\iota = 1$, and $\theta_0 = 0$ is proved by \cite{Legramanti2020}.

To specify $P_{slab}$ and $P_{spike}$, we apply a normal mixture of inverse-gamma (NMIG) prior:
\begin{equation}
\label{nmig}
[\eta_k |  \theta_k, \sigma_k^2]  \stackrel{indep}{\sim}N(0, \theta_k \sigma_k^2), \quad 
[\theta_k | \pi_k]  \sim (1 - \pi_k) \delta_1 + \pi_k \delta_{v_0}, \quad 
[\sigma_k^{-2}]  \stackrel{iid}{\sim} \mbox{Gamma}(a_1, a_2)
\end{equation}
where the scalars $\{\eta_k\}$ are linked to the factors $\{\beta_{k,i}\}$ in Section~\ref{px-models} and $v_0$, $a_1$ and $a_2$ are hyperparameters. The NMIG prior  incorporates variable selection by assigning the variance scale parameter $\theta_k$ to the slab component, $\theta_k =1$, or the spike component, $\theta_k = v_0$. 
By design, the NMIG prior produces a continuous distribution for the conditional variance of $\eta_k$, which is preferable for variable selection and risk properties \citep{ishwaran2005spike}. The NMIG prior is  often less  sensitive  to hyperparameter choices compared to other spike-and-slab priors
\citep{ishwaran2005spike,Scheipl2012}.

  By coupling the NMIG  prior for $\{\eta_k\}$ in \eqref{nmig}  with the CUSP prior  for $\{\pi_k\}$ in \eqref{pi-prior} and marginalizing over $\{\theta_k, \sigma_k^2,\pi_k\}$, we obtain the following  
   ordering result for the marginal prior on $\{\eta_k\}_{k=1}^\infty$:
   \begin{corollary}\label{eta-shrink}
For $\varepsilon>0$, $\mathbb{P}(|\eta_k| \le \varepsilon) < \mathbb{P}(|\eta_{k+1}| \le \varepsilon)$ whenever $v_0 < 1$.
\end{corollary}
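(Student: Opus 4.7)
The plan is to follow exactly the same structure as the proof of Proposition~\ref{csp-shrink}, but with the roles played by $P_{slab}$ and $P_{spike}$ there replaced by the marginal distributions of $\eta_k$ under the slab ($\theta_k=1$) and spike ($\theta_k = v_0$) components of the NMIG prior in \eqref{nmig}. The only genuinely new content is to verify that this marginalized version of the hypothesis $P_{slab}\{\mathbb B_\varepsilon(\theta_0)\} < P_{spike}\{\mathbb B_\varepsilon(\theta_0)\}$ (with $\theta_0 = 0$) is automatic whenever $v_0 < 1$.

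First I would define $P_{slab}^\eta$ and $P_{spike}^\eta$ as the distributions of $\eta_k$ after integrating out $\sigma_k^2 \sim \mbox{IG}(a_1,a_2)$ with $\theta_k = 1$ and $\theta_k = v_0$ respectively. By iterated expectation and the product structure of \eqref{nmig}-\eqref{csp}-\eqref{pi-prior},
\begin{equation*}
\mathbb{P}(|\eta_k| \le \varepsilon) \;=\; \{1-\mathbb{E}(\pi_k)\}\, P_{slab}^\eta\{[-\varepsilon,\varepsilon]\} \;+\; \mathbb{E}(\pi_k)\, P_{spike}^\eta\{[-\varepsilon,\varepsilon]\}.
\end{equation*}
Exactly as in the proof of Proposition~\ref{csp-shrink}, $\mathbb{E}(\pi_k) = 1 - \{\kappa/(1+\kappa)\}^k$, so
\begin{equation*}
\mathbb{P}(|\eta_k| \le \varepsilon) \;=\; P_{spike}^\eta\{[-\varepsilon,\varepsilon]\} + \{\kappa/(1+\kappa)\}^k \bigl[\,P_{slab}^\eta\{[-\varepsilon,\varepsilon]\} - P_{spike}^\eta\{[-\varepsilon,\varepsilon]\}\bigr].
\end{equation*}
Since $\{\kappa/(1+\kappa)\}^k$ is strictly decreasing in $k$, the claim reduces to verifying that the bracketed quantity is negative.

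The main (and only slightly delicate) step is therefore to show
\begin{equation*}
P_{slab}^\eta\{[-\varepsilon,\varepsilon]\} \;<\; P_{spike}^\eta\{[-\varepsilon,\varepsilon]\} \qquad \text{whenever } v_0 < 1.
\end{equation*}
My approach is to use the scale-mixture representation: writing $\eta_k = \sqrt{\theta_k\,\sigma_k^2}\,Z$ with $Z \sim N(0,1)$ independent of $\sigma_k^2$, if $X \sim P_{slab}^\eta$ is generated from $(\sigma^2, Z)$ with $\theta=1$, then $Y = \sqrt{v_0}\,X$ has distribution $P_{spike}^\eta$. Consequently $\mathbb{P}(|Y| \le \varepsilon) = \mathbb{P}(|X| \le \varepsilon/\sqrt{v_0})$, and since $v_0 < 1$ this is strictly larger than $\mathbb{P}(|X| \le \varepsilon)$ (the inequality is strict because the $t$-type density of $X$ is strictly positive on $\mathbb{R}$). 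Combining this with the displayed identity gives the strict inequality $\mathbb{P}(|\eta_k| \le \varepsilon) < \mathbb{P}(|\eta_{k+1}| \le \varepsilon)$.

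I expect the scaling argument to be the main obstacle only in that one must be careful to retain strict inequality; the $t$-scale-mixture density is everywhere positive, which suffices. An alternative route would be to invoke Proposition~\ref{csp-shrink} as a black box with $\theta_0 = 0$ by reinterpreting the NMIG prior as a CUSP prior directly on $\eta_k$ with slab/spike components $P_{slab}^\eta$ and $P_{spike}^\eta$; this avoids rewriting the marginalization, but still reduces to the same scaling comparison as the essential input.
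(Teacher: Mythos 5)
Your proposal is correct and follows essentially the same route as the paper: the paper also marginalizes over $\theta_k$ and $\sigma_k^2$ to recognize $[\eta_k \mid \pi_k]$ as a two-component mixture of $t_{2a_1}$ distributions with scales $\sqrt{a_2/a_1}$ (slab) and $\sqrt{v_0 a_2/a_1}$ (spike), identifies this as a special case of \eqref{csp}, and invokes Proposition~\ref{csp-shrink} with $\theta_0=0$, noting that the smaller-scale $t$ density places more mass near zero. Your explicit scale-mixture argument ($Y=\sqrt{v_0}\,X$ with an everywhere-positive density) is just a slightly more detailed justification of that last comparison, including the strictness of the inequality.
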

Corollary~\ref{eta-shrink} follows from Proposition~\ref{csp-shrink} by observing that $[\eta_k | \pi_k] \sim (1-\pi_k) t_{2a_1}(0, \sqrt{a_2/a_1}) +\pi_k t_{2a_1}(0, \sqrt{v_0a_2/a_1})$, where $t_d(m, s)$ denotes a $t$-distribution with mean $m$, standard deviation $s$, and degrees of freedom $d$. This mixture is a special case of \eqref{csp}, and the densities of the $t$-distributions  place greater mass near zero when the scale parameter is smaller. Hence, the condition  $v_0 < 1$ ensures that the spike distribution is indeed more concentrated around zero.

For interpretability and efficient MCMC sampling, there is a convenient data augmentation of the CUSP prior \eqref{csp}. Let $z_k \in \{1,\ldots,\infty\}$  denote a categorical variable with $\mathbb{P}(z_k = h | \omega_h) = \omega_h$. The specification
$
[\theta_k | z_k] \sim (1 - \mathbb{I}\{z_k \le k\}) P_{slab} + \mathbb{I}\{z_k \le k\} P_{spike}
$ 
induces \eqref{csp} via marginalization over $z_k$. The number of active (slab) terms is therefore 
$
K^* = \sum_{k=1}^\infty \mathbb{I}\{z_k > k\}, 
$ 
with posterior inference available through the proposed MCMC sampling algorithm (see Algorithm~\ref{alg:MCMC}). For model \eqref{sffm},  $K^*$ is the effective number of nonparametric terms, and therefore is an important inferential target to assess the adequacy of the parametric model. 

The prior expected number of slab terms is $\mathbb{E}(K^*) = \kappa$, which appears in \eqref{pi-prior} and allows for the inclusion of prior information regarding the number of factors. The choice of $\kappa$ can influence the posterior distribution for $K^*$, and thus subsequent inference on the number of nonparametric factors in \eqref{sffm}. To mitigate this effect, we propose the hyperprior 
$ \kappa \sim \mbox{Gamma}(a_{\kappa} , b_{\kappa})$ for
 $a_{\kappa}, b_{\kappa} > 0$, which is conditionally conjugate to \eqref{pi-prior}. The hyperparameters may be selected to provide weak prior information regarding the number of factors: in practice, we set $a_{\kappa} = 2$ and $b_{\kappa}=1$ so that $\mathbb{E}(\kappa) = \mbox{Var}(\kappa) = 2$.  Similarly, we select the default value $\iota = 1$ for simplicity.

CUSPs are intrinsically linked to Indian buffet processes (IBPs; \citealp{Griffiths2011}), which are often used for sparse (non-functional) factor models with rank selection \citep{Ohn2021}. IBPs supply a prior over binary matrices $\{b_{j,k}\}$ with $m$ rows and infinitely many columns, and are  usually applied to $\bm F$ in non-functional versions of \eqref{like}. IBPs are obtained by 
establishing the conditional distributions  
$ [b_{j,k} \mid \mu_k] \stackrel{indep}{\sim}  \mbox{Bernoulli}(\mu_k) $ and  $ [\mu_k] \stackrel{iid}{\sim} \mbox{Beta}(\iota \kappa/K, \iota)$,
 usually with $\iota = 1$, and then integrating over $\{\mu_k\}$ with $K\rightarrow \infty$. IBPs admit a stick-breaking construction  \citep{Teh2007}, which can be mapped to the CUSP probability sequence $\{\pi_k\}_{k=1}^\infty$:
\begin{proposition}\label{ibp-connect}
The CUSP \eqref{pi-prior} satisfies $(1 - \pi_k) = \mu_{(k)}$ where $\mu_{(1)} > \cdots > \mu_{(K)}$ are the ordered (slab) probabilities from the \cite{Teh2007} stick-breaking construction of the IBP, i.e., $\mu_{(k)} = \prod_{\ell = 1}^k \nu_\ell'$ with   $\nu_\ell' \stackrel{iid}{\sim} \mbox{Beta}(\iota \kappa, \iota).$
\end{proposition}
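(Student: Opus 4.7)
The plan is to exploit the standard stick-breaking identity together with the beta reflection property, so the proof reduces to two short observations plus a distributional identification.

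First, I would rewrite $1-\pi_k$ in product form. Given $\omega_h = \nu_h\prod_{\ell=1}^{h-1}(1-\nu_\ell)$, the standard stick-breaking telescoping identity yields
\[
\sum_{h=1}^k \omega_h \;+\; \prod_{\ell=1}^k (1-\nu_\ell) \;=\; 1,
\]
which one can verify by induction on $k$ (the $k=1$ case is $\nu_1 + (1-\nu_1)=1$, and the inductive step uses $\omega_{k+1} = \nu_{k+1}\prod_{\ell=1}^{k}(1-\nu_\ell)$). Therefore
\[
1-\pi_k = \prod_{\ell=1}^k (1-\nu_\ell),\qquad \nu_\ell \stackrel{iid}{\sim}\mathrm{Beta}(\iota,\iota\kappa).
\]

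Next, I would invoke the beta reflection property: if $X\sim\mathrm{Beta}(a,b)$ then $1-X\sim\mathrm{Beta}(b,a)$. Applied coordinatewise, $1-\nu_\ell \sim \mathrm{Beta}(\iota\kappa,\iota)$ independently across $\ell$. Hence the sequence $\{1-\nu_\ell\}$ is equal in distribution to the sequence $\{\nu_\ell'\}$ of the Teh–Görür–Ghahramani IBP stick-breaking construction, and so
\[
1-\pi_k \;=\; \prod_{\ell=1}^k (1-\nu_\ell) \;\stackrel{d}{=}\; \prod_{\ell=1}^k \nu_\ell' \;=\; \mu_{(k)}.
\]

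Finally, I would note that the ordering $\mu_{(1)}>\mu_{(2)}>\cdots$ is automatic on the right-hand side because each factor $\nu_\ell' \in (0,1)$ makes the product strictly decreasing in $k$, and consistently $1-\pi_k$ is strictly decreasing in $k$ since $\pi_k$ is a cumulative sum of nonnegative weights with $\omega_h>0$ almost surely. There is no real obstacle here; the only subtlety worth flagging is that the identification is \emph{in distribution}, not almost sure, since $\nu_\ell$ and $\nu_\ell'$ are distinct random variables in general. The beta reflection step is what produces the swap of the shape parameters from $(\iota,\iota\kappa)$ in the CUSP to $(\iota\kappa,\iota)$ in the IBP, which is the content of the proposition.
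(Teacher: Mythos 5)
Your proposal is correct and follows essentially the same route as the paper's proof: both rest on the telescoping identity $1-\pi_k=\prod_{\ell=1}^k(1-\nu_\ell)$ (which the paper obtains by writing $\omega_h=\prod_{\ell=1}^{h-1}\nu_\ell'-\prod_{\ell=1}^{h}\nu_\ell'$ and cancelling, and you obtain by induction) combined with the beta reflection $1-\nu_\ell\sim\mbox{Beta}(\iota\kappa,\iota)$. Your remark that the identification is in distribution is a fair point of care, though the paper sidesteps it by \emph{defining} $\nu_\ell':=1-\nu_\ell$, which makes the equality hold almost surely under that coupling.
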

Conventionally, IBPs apply the multiplicative beta process to the slab probability $(1-\pi_k)$, which explains the complement in Proposition~\ref{ibp-connect}. 

Despite these fundamental stick-breaking connections, CUSPs and IBPs remain distinct due to \eqref{csp}: CUSPs define a spike-and-slab prior for a sequence of parameters $\{\theta_k\}_{k=1}^\infty$, while IBPs define a prior  over binary matrices. In the context of \eqref{sffm}, elementwise sparsity of $\bm F$ is unwarranted: although we are interested in rank selection, we prefer priors that encourage smoothness rather than sparsity in $f_k(\tau)$ over $\tau \in \mathcal{T}$.

\subsection{Consistent rank selection}\label{consist}
To motivate  the use of  this  prior within the SFFM, we investigate the asymptotic behavior of the generalized CUSP as a  standalone prior for rank selection. Specifically, we consider the  modified setting with
\begin{equation}
y_{i} = \eta_{0i} +  \epsilon_{i}, \quad {\epsilon_{i}} \stackrel{iid}{\sim} N(0,1), 
\label{eq:truedistribution}
\end{equation}
where $\eta_{01}, \dots, \eta_{0n}$ are ``true'' mean parameters. This setting is applicable to the SFFM via Corollary~\ref{cor-beta} and is related to the problem of estimating a high-dimensional mean vector from a single multivariate observation \citep{CvdV,rock}, but with ordered sparsity  $\eta_{0k} \neq 0$ for $k \leq K_{0n}$ and  $\eta_{0k} = 0$ for $k > K_{0n}$. As such, $K_{0n}$ is the true number of nonzero means and corresponds to the true rank in model \eqref{sffm}. 
As customary in posterior asymptotics, we allow $K_{0n}$ to grow with $n$.

We study the posterior of $\eta_i$ under the CUSP \eqref{csp}-\eqref{pi-prior} and NMIG \eqref{nmig} prior. A key term is the ``remainder" $R_n = \sum_{k\geq  K_{0n}} \omega_k$, which  represents an upper bound for the probability mass assigned to the slab for the parameters that are null. 
Under the CUSP, we show that $R_n$ is far from zero with vanishing probability  \emph{a priori}:

\begin{lemma}
	\label{lem:priorR}
	Let $\varepsilon_n\to 0$ with $\varepsilon_n^{1/K_{0n}} > \kappa/(\kappa+1)$. For the CUSP prior  \eqref{csp}-\eqref{pi-prior} and a positive constant  $C>1$, the remainder term $R_n = \sum_{k\geq  K_{0n}} \omega_k$ satisfies
$	\mathbb{P}(R_{n} >\varepsilon_n) \leq \exp(-C K_{0n}).$
\end{lemma}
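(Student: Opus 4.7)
The plan is to exploit the stick-breaking form of the CUSP weights to rewrite $R_n$ as a product of independent random variables, and then apply a Chernoff-type moment bound rather than plain Markov.

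First I would note that the stick-breaking recursion $\omega_h = \nu_h \prod_{\ell<h}(1-\nu_\ell)$ telescopes: $\sum_{h=1}^{K_{0n}-1}\omega_h = 1 - \prod_{\ell=1}^{K_{0n}-1}(1-\nu_\ell)$, so since $\sum_h \omega_h = 1$, the remainder admits the closed form
\begin{equation*}
R_n \;=\; \sum_{k\geq K_{0n}} \omega_k \;=\; \prod_{\ell=1}^{K_{0n}-1}(1-\nu_\ell),
\end{equation*}
a product of independent $\mathrm{Beta}(\iota,\iota\kappa)$ complements. Using $\mathbb{E}[1-\nu_\ell] = \iota\kappa/(\iota+\iota\kappa) = \kappa/(\kappa+1) =: \rho$, the raw mean is $\mathbb{E}[R_n] = \rho^{K_{0n}-1}$, and a direct Markov bound only gives $\mathbb{P}(R_n>\varepsilon_n) \leq \rho^{K_{0n}-1}/\varepsilon_n$, which under the assumed condition $\varepsilon_n > \rho^{K_{0n}}$ is $O(1)$ rather than exponentially small. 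So Markov on $R_n$ itself is too loose and a sharper moment bound is needed.

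The key step is Chernoff-style bounding: for any $t>0$, applying Markov to $R_n^t$ gives
\begin{equation*}
\mathbb{P}(R_n>\varepsilon_n) \;\leq\; \varepsilon_n^{-t}\,\mathbb{E}[R_n^t] \;=\; \varepsilon_n^{-t}\,\Big[\tfrac{B(\iota,\iota\kappa+t)}{B(\iota,\iota\kappa)}\Big]^{K_{0n}-1},
\end{equation*}
where I use independence and the standard Beta moment. Taking logs, the exponent equals $t|\log\varepsilon_n| - (K_{0n}-1)\log\{B(\iota,\iota\kappa)/B(\iota,\iota\kappa+t)\}$. The assumption $\varepsilon_n^{1/K_{0n}} > \kappa/(\kappa+1)$ is equivalent to $|\log\varepsilon_n|/K_{0n} < -\log\rho$, i.e.\ to the average log-target lying strictly below the mean $-\log\mathbb{E}[1-\nu_1]$. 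This is precisely the lower-tail large-deviation regime for $-\log R_n = \sum_\ell -\log(1-\nu_\ell)$, so minimizing over $t>0$ yields a Cramer rate $I(\varepsilon_n,\kappa,\iota)>0$ and the bound $\mathbb{P}(R_n>\varepsilon_n)\leq \exp\{-K_{0n}\,I(\varepsilon_n,\kappa,\iota)\}$. A quantitative gap in the hypothesis (strictness of the inequality, bounded away from equality) then certifies $I \geq C$ for the stated $C>1$.

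The main obstacle is not the overall structure but the bookkeeping around the constant $C$: optimizing the Chernoff exponent in the Beta MGF and verifying that the resulting rate function truly exceeds $1$ under the gap in the hypothesis. In the clean case $\iota=1$ the ratio simplifies to $\kappa/(\kappa+t)$ and the minimization is explicit, yielding a rate of the form $I(r) = r - 1 - \log r$ with $r = \kappa|\log\varepsilon_n|/(K_{0n}-1) < 1$; checking $I(r)\geq C$ then amounts to quantifying how far $r$ lies below $1$, which is precisely what the hypothesis $\varepsilon_n^{1/K_{0n}}>\kappa/(\kappa+1)$ controls. For general $\iota$ the same argument goes through with $\log B(\iota,\iota\kappa+t)/B(\iota,\iota\kappa)$ in place of $\log(\kappa/(\kappa+t))$ and convexity ensures a unique optimizing $t^\ast>0$.
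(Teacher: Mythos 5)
Your opening step coincides exactly with the paper's: telescoping the stick-breaking weights to write $R_n$ as a product of independent beta complements (your index $\prod_{\ell=1}^{K_{0n}-1}$ is in fact the correct one for $\sum_{k\ge K_{0n}}\omega_k$; the paper writes $\prod_{h=1}^{K_{0n}}$, an immaterial off-by-one). After that you diverge: the paper's entire proof \emph{is} the plain Markov bound you dismiss as too loose, namely
\begin{equation*}
\mathbb{P}(R_n>\varepsilon_n)\le \frac{1}{\varepsilon_n}\Big(\frac{\kappa}{\kappa+1}\Big)^{K_{0n}}=\Big(\frac{1}{\varepsilon_n^{1/K_{0n}}}\,\frac{\kappa}{\kappa+1}\Big)^{K_{0n}}\le\exp(-CK_{0n}),
\end{equation*}
with the last inequality attributed to the hypothesis on $\varepsilon_n$. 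Your criticism of that step is substantively fair: the stated condition $\varepsilon_n^{1/K_{0n}}>\kappa/(\kappa+1)$ only makes the base strictly less than $1$, whereas the displayed conclusion needs the base to be at most $e^{-C}$ with $C>1$, i.e.\ the quantitatively stronger condition $\varepsilon_n^{1/K_{0n}}\ge e^{C}\kappa/(\kappa+1)$. Your Chernoff/Cram\'er route is a genuinely different and sharper second step: optimizing $\varepsilon_n^{-t}\,\mathbb{E}[R_n^t]$ over $t>0$ yields the rate $(K_{0n}-1)\,I(r)$ with $I(r)=r-1-\log r$ and $r=\kappa|\log\varepsilon_n|/(K_{0n}-1)$, which strictly improves on $t=1$ (Markov) whenever $r\ne 1$. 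What each approach buys: Markov is a one-line argument that suffices whenever the hypothesis is read in the strengthened form above; the Chernoff bound extracts a positive exponent over the entire admissible range $r<1$ and so degrades more gracefully near the boundary of the hypothesis.

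The one genuine gap in your proposal is the final certification of the constant. You assert that the hypothesis ``controls how far $r$ lies below $1$'' and hence certifies $I(r)\ge C>1$; it does not. The condition $\varepsilon_n^{1/K_{0n}}>\kappa/(\kappa+1)$ only forces $r<\kappa\log(1+1/\kappa)\cdot K_{0n}/(K_{0n}-1)$, which permits $r$ arbitrarily close to $1$ and therefore $I(r)$ arbitrarily close to $0$; obtaining $I(r)\ge C>1$ requires $r$ below the root of $r-1-\log r=C$ (about $0.16$ for $C=1$), i.e.\ an explicit quantitative separation in the hypothesis. This is the same unclosed step as in the paper's own last inequality, so your argument is no weaker than the original on this point --- but the ``bookkeeping'' you defer is not bookkeeping; it is the place where the lemma's hypothesis, as literally stated, does not deliver the stated constant under either route.
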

	
According to Lemma~\ref{lem:priorR}, the prior probability that $R_n$ exceeds a small threshold is exponentially small. Such exponential decay in the prior is  essential to obtain optimal  posterior behavior in sparse settings \citep{CvdV}. This result does not require the NMIG prior  \eqref{nmig}.

To connect the prior behavior in Lemma~\ref{lem:priorR} to the posterior, first consider the following mild conditions on the prior parameters:
(C1) $a_1/(a_2v_0)	> n^2/K_{0n}$;  
(C2) $AK_{0n}/n < 1/2$ with $A>1/2$;
(C3) $a_1>1/2$, and $a_1 \leq a_2$. 
Condition (C1)  implies  that  the precision of the spike component is increasing with $n$; condition (C2) requires that  $K_{0n}$ is strictly less than the total number of elements (or the truncation limit); and condition (C3) implies that the values of the $\sigma_k^{2}$ parameters in \eqref{nmig} are not too large.  Now let $\mathbb{P}_0$ and $\mathbb{E}_0$  denote the probability and the expectation, respectively, under the true distribution of the data \eqref{eq:truedistribution}. Under (C1)-(C3),  the posterior probability that the remainder exceeds a shrinking $\varepsilon_n$, for $n\to \infty$,  goes to zero in expectation $\mathbb{E}_0$. This is formalized in the following result:
\begin{theorem}
Let $\varepsilon_n\to 0$ with $\varepsilon_n^{1/K_{0n}} > \kappa/(\kappa+1)$ and
	assume (C1)-(C3) and $C >2Ae$. For the CUSP  \eqref{csp}-\eqref{pi-prior} and NMIG \eqref{nmig} priors, the posterior distribution satisfies 
		\[
	\lim_{n \to \infty} \mathbb{E}_0 \left\{ \mathbb{P} \left( R_n> \varepsilon_n \mid y_1,\ldots, y_n \right) \right\} = 0.
\]
\end{theorem}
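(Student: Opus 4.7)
The plan is the standard Bayesian posterior-concentration argument applied to the event $B_n = \{R_n > \varepsilon_n\}$. Writing
\[
\mathbb{P}(R_n > \varepsilon_n \mid y) = \frac{\int_{B_n} L_n(\theta)\, d\Pi(\theta)}{\int L_n(\theta)\, d\Pi(\theta)} =: \frac{N_n(y)}{D_n(y)},
\]
where $\theta$ stands for the full latent vector $(\eta_k,\sigma_k^2,\theta_k,\nu_k)_{k \ge 1}$ and $L_n(\theta) = \prod_{i=1}^n p(y_i \mid \theta)/p_0(y_i)$ is the normalized Gaussian likelihood ratio, the goal is to control $\mathbb{E}_0 N_n$ from above using the prior tail estimate of Lemma~\ref{lem:priorR} and $D_n$ from below in $\mathbb{P}_0$-probability via a Schwartz-type prior-mass-and-likelihood construction. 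The first bound is essentially free: by Fubini together with the identity $\mathbb{E}_0 L_n(\theta) = 1$ for any fixed $\theta$, one gets $\mathbb{E}_0 N_n = \Pi(B_n) \le \exp(-C K_{0n})$ directly from Lemma~\ref{lem:priorR}. This step uses only the CUSP layer and not the NMIG specification.

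\textbf{Denominator lower bound.} The substantive work is to show that for some constant $c < C$,
\[
\mathbb{P}_0\!\left(D_n(y) < \exp(-c K_{0n})\right) \to 0.
\]
I would build an ``oracle'' event $A_n$ in parameter space on which (i) the CUSP data-augmentation indicators $z_k$ place the slab on the first $K_{0n}$ coordinates and the spike on the remainder, (ii) the variances $\sigma_k^2$ lie in a bounded interval (feasible under (C3)), and (iii) the slab means $\eta_k$ for $k \le K_{0n}$ sit in an $\mathcal{O}(n^{-1/2})$-neighborhood of $\eta_{0k}$, while the spike forces $\eta_k \approx 0$ for $k > K_{0n}$. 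A prior-mass calculation should yield $\Pi(A_n) \gtrsim \exp(-c_1 K_{0n})$: the stick-breaking contribution that each $z_k$ lands on the correct side of $k$ is of order $\{\kappa/(1+\kappa)\}^{K_{0n}}$ (the same computation used in the proof of Proposition~\ref{csp-shrink}), the $\mathrm{Gamma}(a_1,a_2)$ prior on $\sigma_k^{-2}$ contributes bounded-interval mass controlled by (C3), and the Gaussian slab contributes an $\exp\{-\mathcal{O}(K_{0n}\log n)\}$ mass on a rate-$n^{-1/2}$ neighborhood of each $\eta_{0k}$. On $A_n$, the log-likelihood should satisfy $\log L_n(\theta) \ge -c_2 K_{0n}$ with high $\mathbb{P}_0$-probability: the $K_{0n}$ signal coordinates contribute a $\chi^2$-type fluctuation of order $K_{0n}$, while condition (C1) makes the spike variance $v_0 a_2/a_1$ small enough that the $n - K_{0n}$ null coordinates together contribute only $o(K_{0n})$ rather than $\mathcal{O}(n)$. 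Condition (C2) then ties the constants together so that $c := c_1 + c_2 \le 2Ae$.

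\textbf{Combining and main obstacle.} With the numerator and denominator controlled, a Markov-type split closes the argument:
\[
\mathbb{E}_0 \mathbb{P}(B_n \mid y) \le \mathbb{P}_0\!\left(D_n < e^{-c K_{0n}}\right) + e^{c K_{0n}}\, \mathbb{E}_0 N_n \le o(1) + \exp\{(c - C) K_{0n}\},
\]
and the hypothesis $C > 2Ae \ge c$ forces the exponential term to zero as $n \to \infty$. The main obstacle is unquestionably the oracle-set analysis: tracking the joint prior mass contributed by the NMIG two-point prior on $\theta_k$, the Gamma prior on $\sigma_k^{-2}$, and the ordered stick-breaking of $\{\omega_h\}$, while simultaneously bounding the coordinatewise Gaussian log-likelihood in both the signal and the null regimes, is where (C1)--(C3) must be used in non-trivial combination and where the specific constant $2Ae$ is earned. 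In particular, without the tightness of (C1) the null coordinates would contribute loss of order $n$ and the entire accounting would collapse.
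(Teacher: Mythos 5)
Your overall architecture matches the paper's: the numerator is handled exactly as in the paper (Fubini plus $\mathbb{E}_0 L_n(\theta)=1$ gives $\mathbb{E}_0 N_n = \Pi(B_n) \le \exp(-CK_{0n})$ by Lemma~\ref{lem:priorR}), and the final Markov-type split is the same. The gap is in your denominator bound, and it is not merely a matter of unfinished computation---as sketched, the constants cannot close. You require the $K_{0n}$ signal coordinates to lie in $\mathcal{O}(n^{-1/2})$-neighborhoods of the $\eta_{0k}$, and you correctly note that the slab then contributes prior mass $\exp\{-\mathcal{O}(K_{0n}\log n)\}$; but this contradicts your asserted bound $\Pi(A_n)\gtrsim \exp(-c_1K_{0n})$. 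Carrying the honest $\exp\{-\mathcal{O}(K_{0n}\log n)\}$ through your final display gives $\exp\{cK_{0n}\log n - CK_{0n}\}\to\infty$ for any fixed $C$, so the conclusion does not follow. Enlarging the signal neighborhoods to recover $\exp\{-\mathcal{O}(K_{0n})\}$ prior mass destroys your control of $\inf_{A_n}\log L_n$, so the two halves of your oracle-set argument cannot be reconciled at the stated rate.

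The paper avoids this trap by not constructing an oracle set at all: it invokes Lemma 5.2 of \citet{CvdV}, which (with $r_n^2 = K_{0n}$) lower-bounds the denominator by $e^{-r_n^2}\,\mathbb{P}(\|\eta_R\|\le r_n)$ on an event of $\mathbb{P}_0$-probability at least $1-e^{-r_n^2}$, where $R$ indexes only the null coordinates. The relevant prior small-ball is then an $\ell_2$-ball of radius $\sqrt{K_{0n}}$, i.e.\ per-coordinate deviations of order $r_n/\sqrt{n}$ with per-coordinate prior probability $1-\mathcal{O}(K_{0n}/n)$ rather than $\mathcal{O}(n^{-1/2})$; Lemma~\ref{lem:denominator} (the Cauchy lower bound on the $t$ density near zero, using (C1) and (C3)) together with condition (C2) yields $\mathbb{P}(\|\eta_R\|\le r_n)\ge (2e)^{-(AK_{0n}+1)}$, with no $\log n$ factor, and $C>2Ae>1+A\log(2e)$ closes the argument. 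If you want a self-contained Schwartz-type route, you would need to reproduce this $\ell_2$-ball accounting (or an equivalent argument) rather than a product of coordinatewise $n^{-1/2}$-neighborhoods; note also that your stated role for (C2) (``ties the constants together'') is not where it actually enters---it enters through the bound $(\kappa/(\kappa+1))^{K_{0n}}<AK_{0n}/n$ on the per-coordinate slab probability of the null coordinates.
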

In the context of the SFFM, \eqref{eq:truedistribution} can be obtained by fixing $\bm F$  and setting $\beta_{k,i} = \eta_{k,i}$. For simplicity, set the observation error variance in \eqref{sffm} to be $\sigma_\epsilon^2 =n$, which grows with the number of curves. Applying  Corollary~\ref{cor-beta},  the likelihood for $\{\eta_k\}$ in \eqref{eq:truedistribution} is equivalent to the likelihood implied by $y_{k} = \bar{\eta}_k +  \epsilon_{k}$ with $\epsilon_k \stackrel{iid}{\sim}N(0,1)$, where $y_{k} = n^{-1} \sum_{i=1}^n \bm f_k'\bm y_i$  and  $\bar{\eta}_k = n^{-1}\sum_{i=1}^n \eta_{k,i}$. The above results then apply for the NMIG prior on $\{\bar{\eta}_k\}$ and by interchanging the notation $k$ for $n$.

\subsection{Parameter-expanded spike-and-slab models}\label{px-models}
The generalized CUSP  provides an appealing mechanism for introducing ordered shrinkage and selection into model \eqref{sffm}.  A subtle yet important feature of this framework is that spike-and-slab selection is applied jointly to all $n$ factors $\{\beta_{k,i}\}_{i=1}^n$. 
However, parametrization of a \emph{collection} of variables via a spike-and-slab prior on a \emph{scalar} $\eta_k$ is known to introduce significant challenges for MCMC sampling.   In a related setting,  \cite{Scheipl2012}  showed that as the size of this collection   increases (here, $n$), it becomes more difficult for the MCMC algorithm to switch between the slab and spike components. The consequences are nontrivial: without the ability to traverse between the slab and spike components, the uncertainty quantification for each factor's activeness is invalidated, and inference for the effective number of nonparametric factors $K^*$ is not reliable.

To illustrate the implications, we implemented the CUSP prior and sampling algorithm from  \cite{Legramanti2020},  which does not address these issues. 
The analogous prior in \cite{Legramanti2020}  is $[\beta_{k,i} \mid \theta_k] \stackrel{indep}{\sim} N(0, \theta_k)$ and a CUSP prior for $\{\theta_k\}_{k=1}^\infty$ with $P_{spike} = \delta_{\theta_\infty}$ and $P_{slab} = \mbox{Inv-Gamma}(a_\theta, b_\theta)$. We use the hyperparameters from \cite{Legramanti2020}: $\theta_\infty = 0.05$, $a_\theta = b_\theta = 2$, and $\kappa=5$, but the subsequent results persist for other choices. For this example, we use the simulation design from Section~\ref{sims} with $K_{true} = 6$ nonparametric factors. Figure~\ref{fig:postk} (left) displays the MCMC samples of $K^*$ across five different MCMC runs with random initializations. 
 The \cite{Legramanti2020} prior and sampler produces MCMC draws that are essentially stuck at a single value: $K^*$ changes values in only 0.004\% of the MCMC iterations. Clearly, this result cannot be used to provide reliable estimation or  uncertainty quantification for the effective number of nonparametric components $K^*$.

 \begin{figure}[h]
\begin{center}
\includegraphics[width=.49\textwidth]{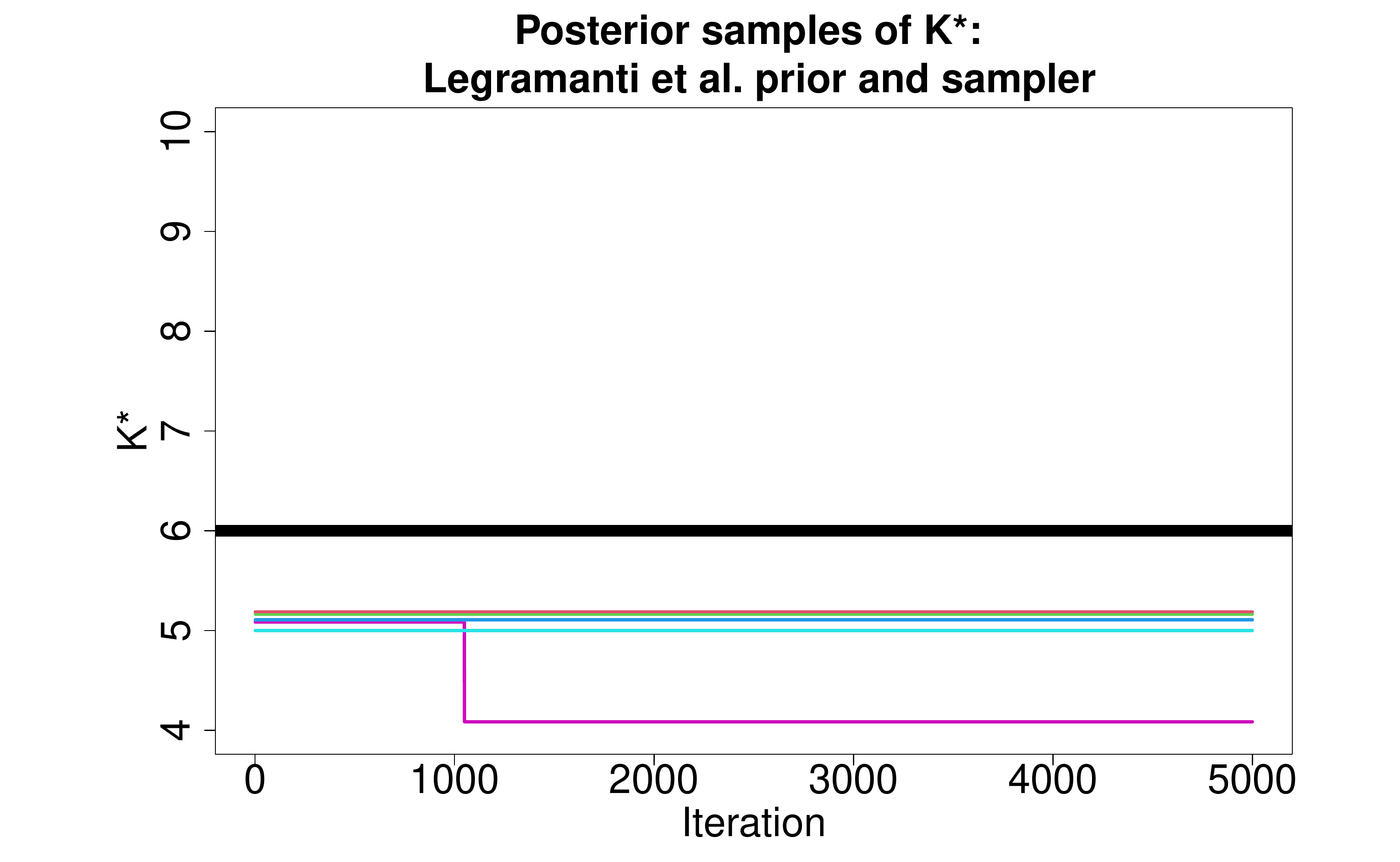}
\includegraphics[width=.49\textwidth]{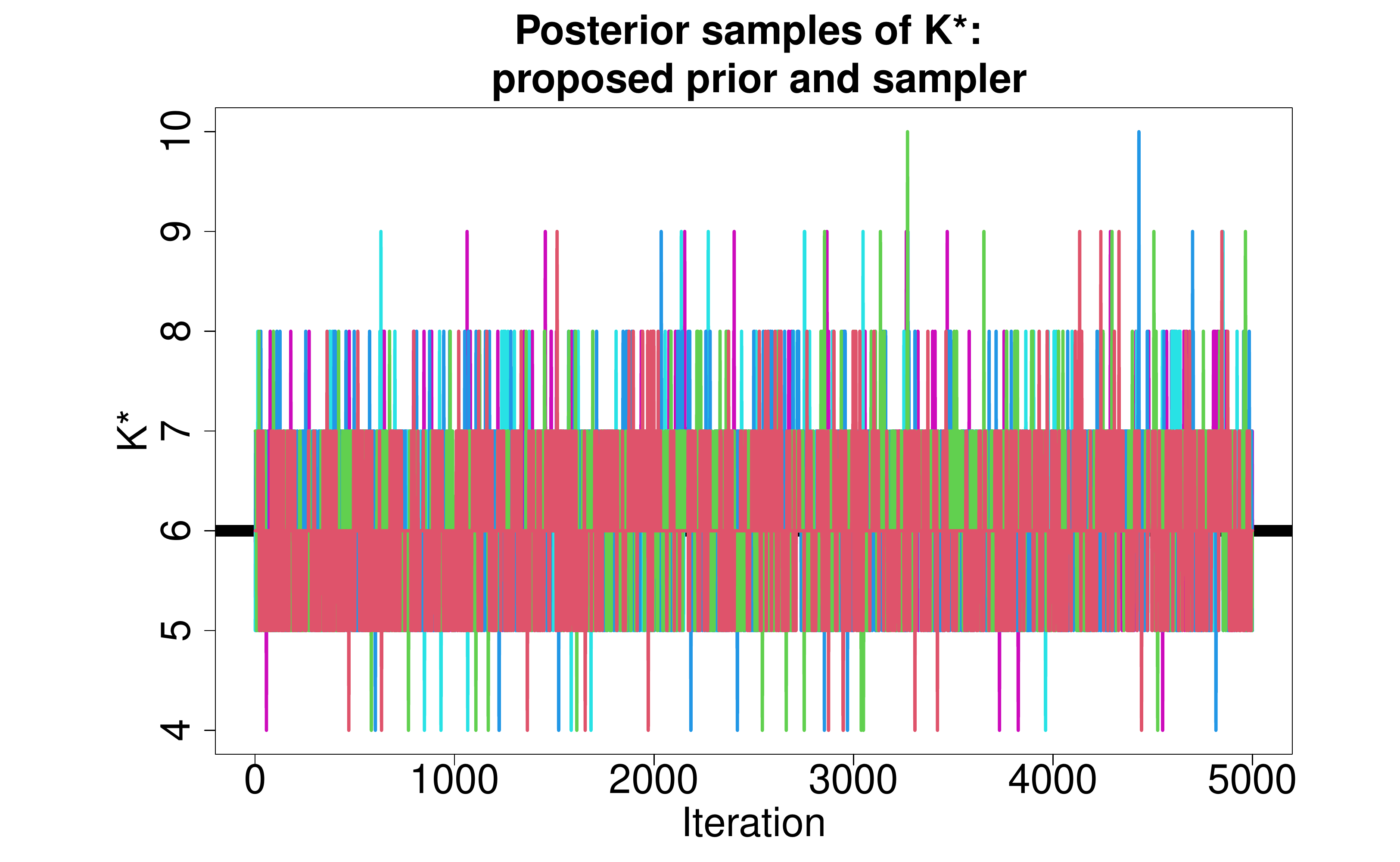}
\caption{\small MCMC samples of $K^*$ across 5 chains for \cite{Legramanti2020} (left) compared  to the proposed prior and algorithm (right). The horizontal black line denotes the true value $K_{true} = 6$. The proposed approach provides substantial improvements in mixing and produces more accurate estimates. 
\label{fig:postk}}
\end{center}
\end{figure}

Our solution is to introduce a redundant parameter expansion for the ordered spike-and-slab prior. 
We induce a prior on each factor $\beta_{k,i}$ via the following parameter expansion:
\begin{equation} \label{px}
\beta_{k,i} = \eta_k \xi_{k,i}, \quad [\xi_{k,i} \mid m_{\xi_{k,i}}] \stackrel{indep}{\sim}N(m_{\xi_{k,i}}, 1), \quad 
m_{\xi_{k,i}}  \stackrel{indep}{\sim} \frac{1}{2}\delta_1 + \frac{1}{2} \delta_{-1}
\end{equation}
where $\eta_k$ applies the ordered spike-and-slab shrinkage for factor $k$ via the NMIG prior \eqref{nmig} and $\{\xi_{k,i}\}_{i=1}^n$ disperses this shrinkage throughout $\{\beta_{k,i}\}_{i=1}^n$. Here, $\xi_{k,i}$ is nonidentifiable and exists primarily for the purpose of improving MCMC performance.   The Gaussian parameter expansion of $\xi_{k,i}$ is centered around $1$ or $-1$ with variance $1$. As a result, the parameter $\eta_k$ retains the interpretability as a selection parameter and assigns the magnitude to $\{\beta_{k,i}\}_{i=1}^n$. 


The impact of the parameter expansion is striking: Figure~\ref{fig:postk} (right)  shows  the MCMC samples of $K^*$ from the proposed method (again across five different MCMC runs with random initializations). The proposed approach produces draws that are centered at the true value with some variability: $K^*$ changes values in 32\% of the MCMC iterations,  which is a significant improvement over  \cite{Legramanti2020}.  Clearly, the MCMC-based uncertainty quantification from the proposed approach is more plausible---which is essential for inference on effective number of nonparametric terms in the SFFM \eqref{sffm}.

The parameter-expanded prior \eqref{px} also maintains the required ordering on the factors:
\begin{proposition}\label{nmig-px}
For $\varepsilon>0$, $\mathbb{P}(|\beta_{k,i}| \le \varepsilon) < \mathbb{P}(|\beta_{k+1,i}| \le \varepsilon)$ whenever $v_0 < 1$.
\end{proposition}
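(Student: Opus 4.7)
The plan is to leverage the multiplicative structure $\beta_{k,i} = \eta_k\,\xi_{k,i}$ and reduce the claim to the ordering already established for $\{\eta_k\}$ in Corollary~\ref{eta-shrink}. Two features of \eqref{px} are essential and deserve emphasis up front: $\eta_k$ and $\xi_{k,i}$ are a priori independent, and the hierarchical specification for $\xi_{k,i}$ does not depend on the index $k$, so $\xi_{k,i}$ and $\xi_{k+1,i}$ share a common marginal law.

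First I would condition on $\xi_{k,i}$. Since $\xi_{k,i}$ is a continuous mixture of Gaussians, $\mathbb{P}(\xi_{k,i}=0)=0$, so the event $\{|\beta_{k,i}|\le\varepsilon\}$ coincides almost surely with $\{|\eta_k|\le\varepsilon/|\xi_{k,i}|\}$. Writing $F_k(c) := \mathbb{P}(|\eta_k|\le c)$ and using independence of $\eta_k$ and $\xi_{k,i}$,
\begin{equation*}
\mathbb{P}(|\beta_{k,i}|\le\varepsilon) = \mathbb{E}_\xi\bigl[F_k(\varepsilon/|\xi|)\bigr],
\end{equation*}
with the expectation taken against the law of $\xi_{k,i}$. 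The identical identity holds for index $k+1$ using $F_{k+1}$ and the \emph{same} outer distribution of $\xi$, so the two probabilities become directly comparable through the functions $F_k$ and $F_{k+1}$.

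Next I would invoke Corollary~\ref{eta-shrink}, which under $v_0<1$ gives $F_k(c) < F_{k+1}(c)$ strictly for every $c>0$. Substituting $c = \varepsilon/|\xi|$, which is almost surely strictly positive, yields the pointwise inequality $F_k(\varepsilon/|\xi|) < F_{k+1}(\varepsilon/|\xi|)$ for $\xi$-almost-every $\xi$. Integrating against the shared law of $\xi$ then preserves the strict inequality, delivering the desired $\mathbb{P}(|\beta_{k,i}|\le\varepsilon) < \mathbb{P}(|\beta_{k+1,i}|\le\varepsilon)$.

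The main (mild) obstacle is a measure-theoretic bookkeeping step: propagating the strict inequality from $\eta_k$ through the random rescaling by $\xi_{k,i}$. This is painless here because continuity of $\xi_{k,i}$ rules out an atom at zero, so the pointwise strict inequality holds on a set of full measure and survives integration. No structure of the NMIG prior beyond Corollary~\ref{eta-shrink} is needed, which is why the argument is essentially a one-line conditioning step dressed up as a proposition.
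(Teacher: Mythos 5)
Your proposal is correct and follows essentially the same route as the paper's proof: both exploit that $\{\xi_{k,i}\}$ are identically distributed across $k$ and independent of $\eta_k$, condition on $\xi$, and reduce the claim to the ordering of Corollary~\ref{eta-shrink}. Your explicit note that $\mathbb{P}(\xi_{k,i}=0)=0$ (so the strict inequality survives integration) is a small point the paper leaves implicit, but the argument is the same.
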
 
Proposition~\ref{nmig-px} requires only that $\{\xi_{k,i}\}$ are marginally independent and identically distributed. Therefore, other parameter expansion schemes may be substituted in \eqref{px}  without disrupting the probabilistic ordering of $|\beta_{k,i}|$. By a similar argument, the ordering also is preserved across observations: $\mathbb{P}(|\beta_{k,i}| \le \varepsilon) < \mathbb{P}(|\beta_{k+1,i'}| \le \varepsilon)$ for $\varepsilon>0$ and $v_0 < 1$.

\section{MCMC for posterior inference}\label{mcmc}
We design an efficient MCMC algorithm for posterior inference. The proposed Gibbs sampler is both computationally scalable and MCMC efficient (see Section~\ref{apps}) due to two key features: the orthogonality constraints on $\{g_\ell\}$ and $\{f_k\}$, which produce important model simplifications, and the parameter-expanded sampler for the ordered spike-and-slab prior, which substantially improves MCMC mixing (see Figure~\ref{fig:postk}). We implement the following sampling steps, with details provided in the supplement:  (i) imputation  of $y_i(\tau^*)$  for any unobserved $\tau^*$; (ii) sampling the  nonlinear parameter $\gamma$ if unknown; (iii) sampling the constrained nonparametric factors $\{f_k\}$ (see Section~\ref{all-np}) along with the smoothing parameters $\{\lambda_{f_k}\}$; (iv) sampling the ordered spike-and-slab parameters (Algorithm~\ref{alg:MCMC}); (v) sampling the parametric factors $\{\alpha_{\ell,i}\}$ using the simplifications from Corollary~\ref{cor-alpha}; and (vi)  sampling any variance components, such as $\sigma_\epsilon^2$.

The implementation incorporates a truncation  of the infinite summation in \eqref{sffm}, which provides simpler and faster computations. The following result shows that finite approximations are accurate for sufficiently large truncation $K$:
\begin{proposition}\label{truncate}
Let $\theta^{(K)} =  \{\theta_k\}_{k=1}^K$ denote the sequence $\{\theta_k\}_{k=1}^\infty$ truncated at $K$. Under the CUSP \eqref{csp}-\eqref{pi-prior} and NMIG \eqref{nmig} priors with $0 < v_0 < \varepsilon < 1$, we have $\mathbb{P}\left\{d_\infty(\theta, \theta^{(K)}) > \varepsilon\right\}  \le \kappa \{\kappa/(1+\kappa)\}^K$. 
\end{proposition}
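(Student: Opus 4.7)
The plan is to reduce the sup-norm event to a counting event on slab indicators and then bound it by the expected number of slab terms beyond the truncation, using Markov's inequality together with the stick-breaking calculus of the CUSP weights. Under the NMIG specification \eqref{nmig}, each $\theta_k$ is supported on the two-point set $\{v_0, 1\}$, and $\theta^{(K)}$ agrees with $\theta$ on the first $K$ coordinates while collapsing the tail to the spike value $v_0$. Hence the coordinate-wise tail discrepancies lie in $\{0, 1 - v_0\}$, and the assumption $0 < v_0 < \varepsilon < 1$ delivers the key reduction: the event $\{d_\infty(\theta, \theta^{(K)}) > \varepsilon\}$ is equivalent to the event $\{\exists\, k > K : \theta_k = 1\}$, since a tail coordinate contributes a gap exceeding $\varepsilon$ precisely when it lies in the slab.

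Next, I would introduce $N_K = \sum_{k > K} \mathbb{I}(\theta_k = 1)$ to count the surviving slab terms and invoke Markov's inequality:
\[
\mathbb{P}\{d_\infty(\theta, \theta^{(K)}) > \varepsilon\} \;=\; \mathbb{P}(N_K \geq 1) \;\leq\; \mathbb{E}(N_K) \;=\; \sum_{k > K} \mathbb{P}(\theta_k = 1) \;=\; \sum_{k > K} \mathbb{E}(1 - \pi_k).
\]
The standard stick-breaking identity $1 - \pi_k = \prod_{\ell=1}^k (1 - \nu_\ell)$, combined with the independence of the $\nu_\ell \sim \mbox{Beta}(\iota, \iota \kappa)$ and the Beta mean $\mathbb{E}(1 - \nu_\ell) = \kappa/(1+\kappa)$, yields $\mathbb{E}(1 - \pi_k) = \{\kappa/(1+\kappa)\}^k$. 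Summing the resulting geometric series,
\[
\sum_{k = K+1}^{\infty} \left\{\frac{\kappa}{1+\kappa}\right\}^k \;=\; \left\{\frac{\kappa}{1+\kappa}\right\}^{K+1} (1 + \kappa) \;=\; \kappa \left\{\frac{\kappa}{1+\kappa}\right\}^K,
\]
which matches the stated bound.

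The main obstacle is really the reduction step: confirming that the sup-norm event collapses to "at least one slab beyond $K$," which is what the hypothesis $0 < v_0 < \varepsilon < 1$ buys. After that, the argument is a one-line Markov bound plus a geometric summation, with no deeper probabilistic tool required. A useful sanity check, worth remarking on, is that the bound does not depend on $\varepsilon$---a direct reflection of the two-point support of $\theta_k$ under the NMIG prior.
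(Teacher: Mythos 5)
Your proof is correct and follows essentially the same route as the paper's: a union/first-moment bound over the tail slab indicators, the stick-breaking identity $\mathbb{P}(\theta_k = 1) = \mathbb{E}(1-\pi_k) = \{\kappa/(1+\kappa)\}^k$, and the geometric summation. The only cosmetic difference is that the paper pads the truncated sequence with zeros, so that $d_\infty(\theta,\theta^{(K)}) = \sup_{k>K}|\theta_k|$ and both sides of $v_0 < \varepsilon < 1$ are used to get an exact equivalence with $\{\exists\, k>K : \theta_k = 1\}$, whereas you pad with $v_0$ (under which the equivalence can fail when $\varepsilon \ge 1 - v_0$, but the containment $\{d_\infty(\theta,\theta^{(K)}) > \varepsilon\} \subseteq \{\exists\, k>K : \theta_k = 1\}$ that actually drives the bound still holds).
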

The approximation error induced by truncating $\{\theta_k\}_{k=1}^\infty$ to $K$ terms decreases rapidly in $K$, which suggests that the proposed infinite-dimensional ordered spike-and-slab prior is accurately approximated by a conservative truncation.

We emphasize that the  sampling steps for the ordered spike-and-slab parameters, detailed in Algorithm~\ref{alg:MCMC}, consist of simple, fast, and closed form updates. Because of the orthogonality constraints and Corollary~\ref{cor-beta}, these distributions depend on the data only through $\{\bm f_k'\bm y_i\}$.  In addition, we retain posterior samples of $K^* = \sum_{k=1}^K \mathbb{I}\{z_k > k\}$, which is the effective number of nonparametric terms. If the posterior distribution of $K^*$ places mass on values near $K$, then the truncation level $K$ should be increased. Adaptations of this sampling algorithm for use of the ordered spike-and-slab prior with other models, such as (non-functional) factor models, are available by replacing $y_{k,i}^F$ with the appropriate term.

\begin{algorithm}[h]
\SetAlgoLined  \small
Let $y_{k,i}^F = \bm f_k'\bm y_i$ for $i=1,\ldots,n$ and $k=1,\ldots,K$:
\begin{enumerate}
\item {\bf Sample} $[m_{\xi_{k,i}} \vert -]$ from $\{-1,1\}$ with $\mathbb{P}(m_{\xi_{h,i}} = 1 \vert -) = 1/\{1 + \exp(-2\xi_{h,i})\}$; 

\item {\bf Sample} $[\xi_{k,i} \vert -] \sim N(Q_{\xi_{k,i}}^{-1} \ell_{\xi_{k,i}},   Q_{\xi_{k,i}}^{-1})$ where $Q_{\xi_{k,i}} = \eta_k^2/\sigma_\epsilon^2 + 1$ and $\ell_{\xi_{k,i}} = \eta_k y_{k,i}^F/\sigma_\epsilon^2 + m_{\xi_{k,i}} $;

\item {\bf Sample} $[\eta_{k} \vert -] \sim N(Q_{\eta_k}^{-1} \ell_{\eta_k},   Q_{\eta_k}^{-1} )$ where $Q_{\eta_k} = \sum_{i=1}^n \xi_{k,i}^2/\sigma_\epsilon^2 + (\theta_k \sigma_k^2)^{-1}$ and $\ell_{\eta_k} = \sum_{i=1}^n \xi_{k,i} y_{k,i}^F/\sigma_\epsilon^2$;

\item {\bf Rescale} $\eta_k \rightarrow (\sum_{i=1}^n \vert\xi_{k,i}\vert/n) \eta_k$
 and $\bm \xi_k \rightarrow (n/\sum_{i=1}^n \vert\xi_{k,i}\vert) \bm \xi_k$ 
 and {\bf update}   $\beta_{k,i} = \xi_{k,i}\eta_k$;

\item {\bf Sample} $[\sigma_{k}^{-2} \mid -] \sim \mbox{Gamma}\left\{a_1 + 1/2, a_2 + \eta_k^2/(2\theta_k)\right\}$; 

\item {\bf Sample}  $[\nu_k \mid -] \sim \mbox{Beta}(1 + \sum_{h=1}^K \mathbb{I}\{z_h = k\}, \kappa + \sum_{h=1}^K \mathbb{I}\{z_h > k\})$ for $k=1,\ldots,K-1$ 
and {\bf update} $\pi_k$ and $\omega_k$ from \eqref{pi-prior};

\item {\bf Sample} $[\kappa \mid -] \sim \mbox{Gamma}\{a_{\kappa}+ K-1, b_{\kappa} - \sum_{k=1}^{K-1} \log(1 - \nu_k)\}$;

\item {\bf Sample} $[z_k \mid -]$ from
$$
\mathbb{P}(z_k=h \mid -) 
\propto 
\begin{cases}
\omega_h t_{2a_1}( \eta_k; 0, \sqrt{v_0a_2/a_1})&  h \le k \\
\omega_h t_{2a_1}( \eta_k; 0, \sqrt{a_2/a_1}) & h > k 
\end{cases}
$$
where $t_d(x; m, s)$ is the density of the $t$-distribution evaluated at $x$ with mean $m$, standard deviation $s$, and degrees of freedom $d$;

\item {\bf Update} $\theta_k =  1$ if $z_k >k$ and $\theta_k = v_0$ if $z_k \le k$.


\end{enumerate}
 \caption{MCMC sampling steps for the ordered spike-and-slab prior} \label{alg:MCMC}
\end{algorithm}


 \section{Simulation study}\label{sims}
 A simulation study is performed to assess model performance for point and interval estimation of $Y_i$, $\bm y_i$, and $\{\bm \alpha_i\}$ and uncertainty quantification for the number of nonparametric terms.  We focus on the linear template $\mathcal{H}_0 = \mbox{span}\{1, \tau\}$ and present results for the \cite{nelson1987parsimonious} template with unknown $\gamma$ in the supplementary material. 

Synthetic functional data with $n=100$ curves and $m=25$ equally-spaced observation points in $[0,1]$ are generated as follows.  The parametric and nonparametric factors are simulated as $\alpha_{\ell,i}^* \stackrel{iid}{\sim}N(0,1)$ and $\beta_{k,i}^* \stackrel{indep}{\sim}N(0, 1/(k+1)^2)$, respectively, for $k=1,\ldots, K_{true}$ and  $K_{true} \in \{0,1,3, 8\}$; results for $K_{true} = 1$ are nearly identical to the $K_{true} = 3$ case and are omitted, while $K_{true} = 8$ is presented in the supplement. By design, the variability in the parametric factors outweighs the variability in the nonparametric factors. The parametric basis matrix $\bm G_\gamma^*$ is constructed by evaluating  $\{g_\ell\}$ at each observation point and QR-decomposing the resulting matrix as in Section~\ref{all-np}. For the nonparametric functions $f_k^*$, we use orthogonal polynomials of degree $k+1$, which are orthogonal to the linear template; the Nelson-Siegel version applies an additional orthogonalization step. The  error-free latent functions are $\bm Y_i^* = \bm G_\gamma^* \bm \alpha_i^* + \bm F^* \bm \beta_i^*$ and the functional observations are generated as $\bm y_i = \bm Y_i^* + \sigma^* \bm \epsilon_i^*$ where $\sigma^* = \mbox{sd}(\bm Y_i^*)/\mbox{RSNR}$ for sample standard deviation $\mbox{sd}(\cdot)$, root signal-to-noise ratio $\mbox{RSNR} = 3$, and $\bm\epsilon_i^*\stackrel{iid}{\sim}N(\bm 0, \bm I_m)$. This process was repeated to create 100 synthetic datasets. 
 
We focus primarily on the PFFM, PFFM+gp, and SFFM. 
 For all models, we assume the conditionally Gaussian likelihood \eqref{like} with the hierarchical priors
 \begin{equation}
 \label{priors-alpha}
[ \alpha_{\ell,i} \mid \sigma_{\alpha_\ell}] \stackrel{indep}{\sim} N(0, \sigma_{\alpha_\ell}^2), \quad \sigma_{\alpha_\ell} \stackrel{iid}{\sim} C^+(0,1), \quad p(\sigma_\epsilon^2) \propto 1/\sigma_\epsilon^2,
 \end{equation}
 and the SFFM hyperparameters $a_1 = 5$,  $a_2 = 25$, and $v_0 = 0.001$, with an upper bound $K=10$ on the number of nonparametric factors. 
 Sensitivity analyses were conducted for $(a_1, a_2) \in \{(5,25), (5,50), (10, 30)\}$ and $v_0 \in \{0.01, 0.005, 0.00025\}$.  The results for point and interval predictions and estimates of $\bm Y_i^*$ and $\{\alpha_{\ell,i}^*\}$ are highly robust to these hyperparameters. Inference on $K^*$ is typically robust with the exception of $v_0 = 0.00025$, for which the posterior of $K^*$ becomes more sensitive to $(a_1,a_2)$. Hence, we select a larger value of $v_0$.

 

First, we evaluate point prediction  of the  latent  curves $\bm  Y_i^*$ and point estimation of the parametric factors $\{\alpha_{\ell,i}^*\}$. Posterior expectations from each model are assessed using root mean squared (prediction) error, and presented in  Figure~\ref{fig:linear-rmse}. Although the methods perform similarly when the parametric model is true ($K_{true} = 0$), the SFFM offers  substantial improvements in the semiparametric setting ($K_{true} > 0$). Notably, when $K_{true} > 0$, the PFFM+gp offers better prediction than the  PFFM---as expected---but conversely cannot estimate the parametric factors accurately.

  \begin{figure}[h]
\begin{center}
\includegraphics[width=.49\textwidth]{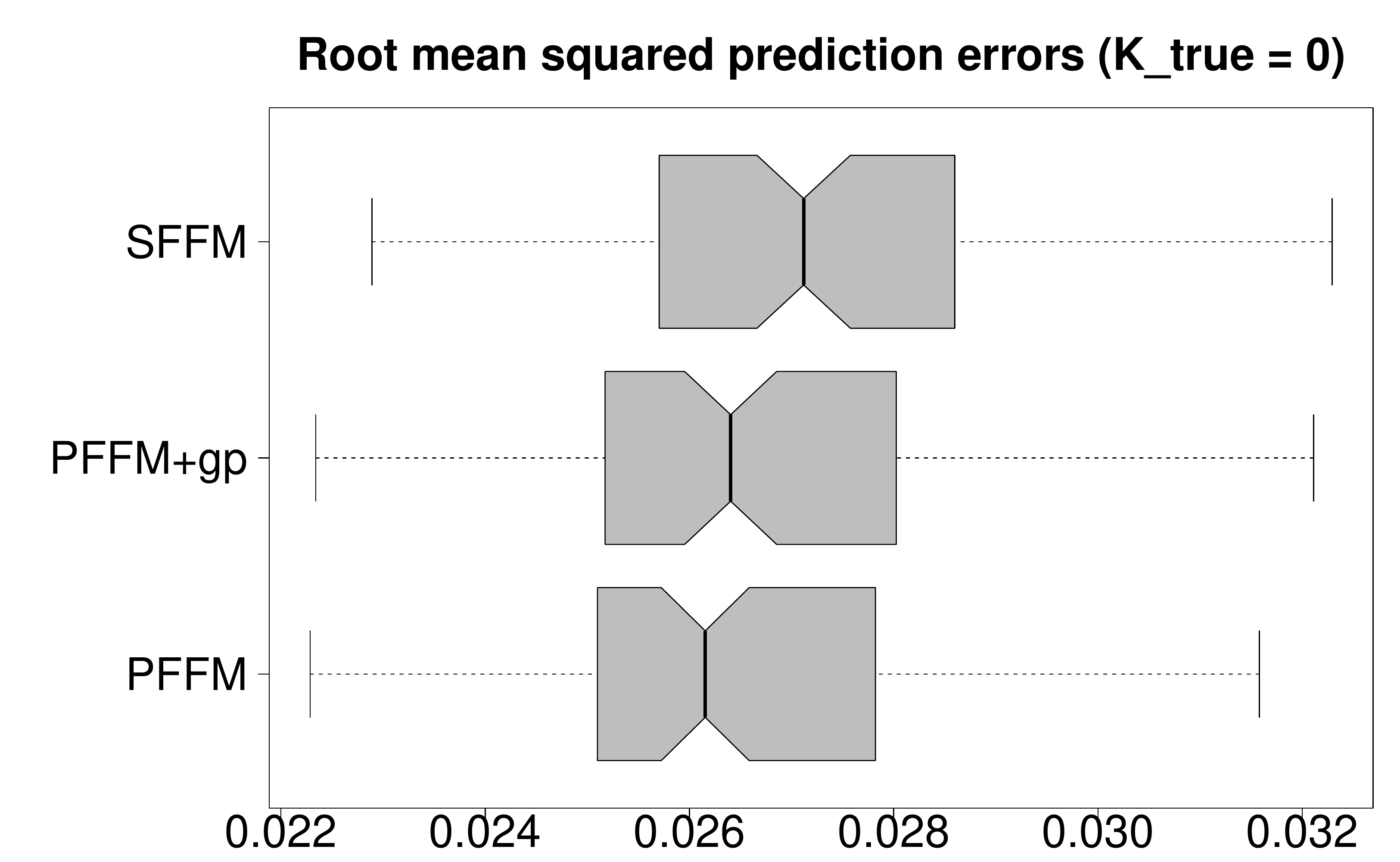}
\includegraphics[width=.49\textwidth]{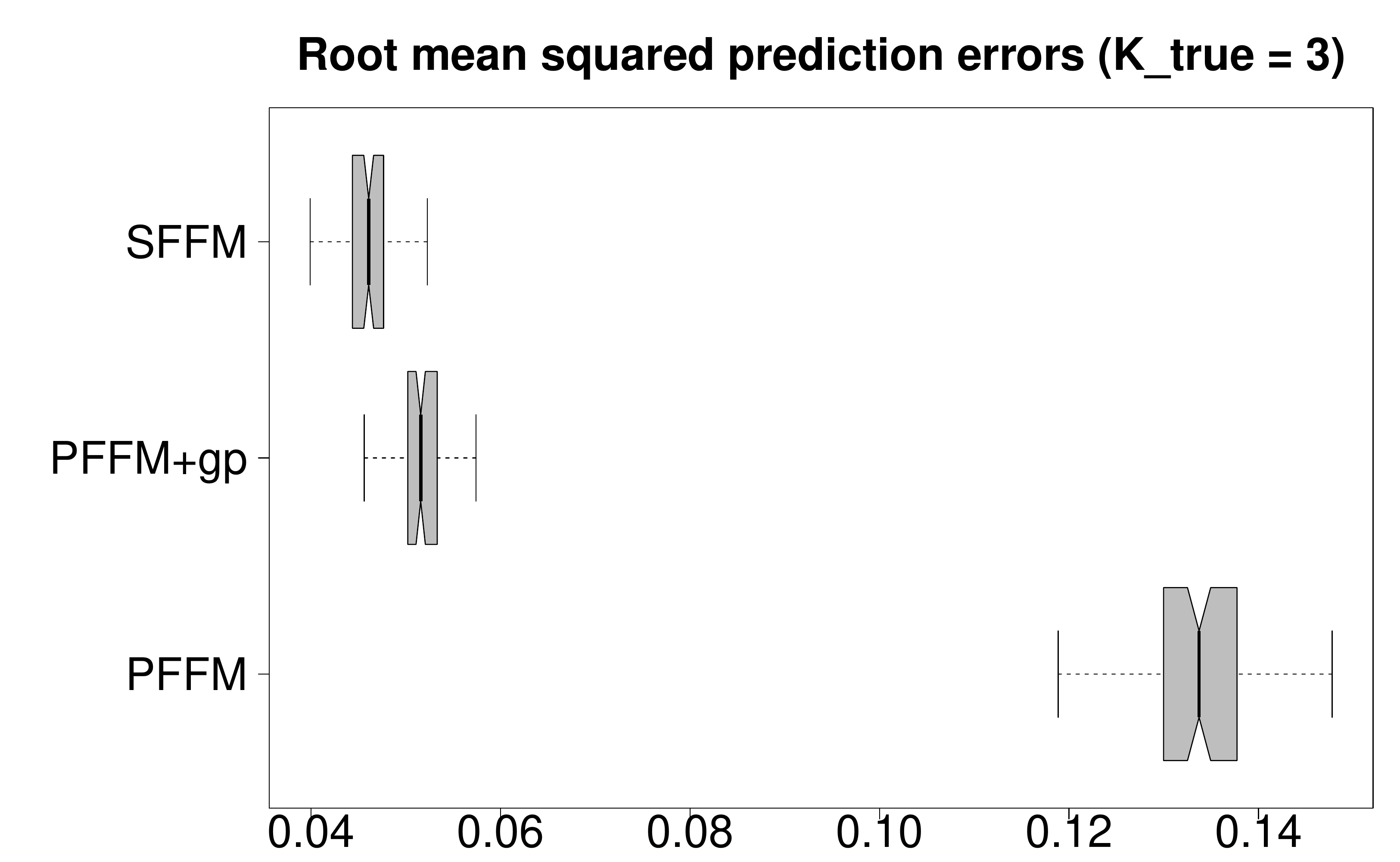}
\includegraphics[width=.49\textwidth]{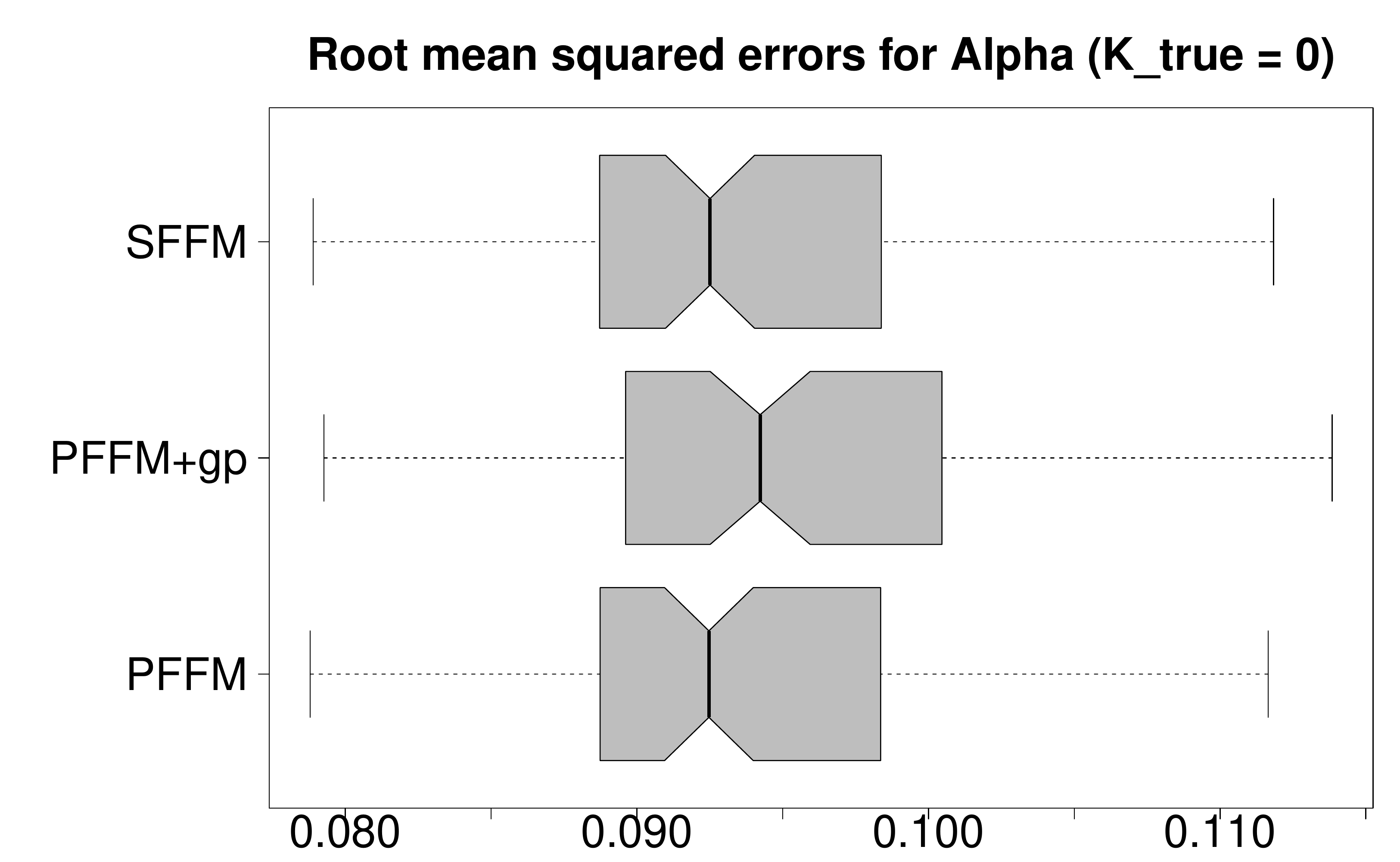}
\includegraphics[width=.49\textwidth]{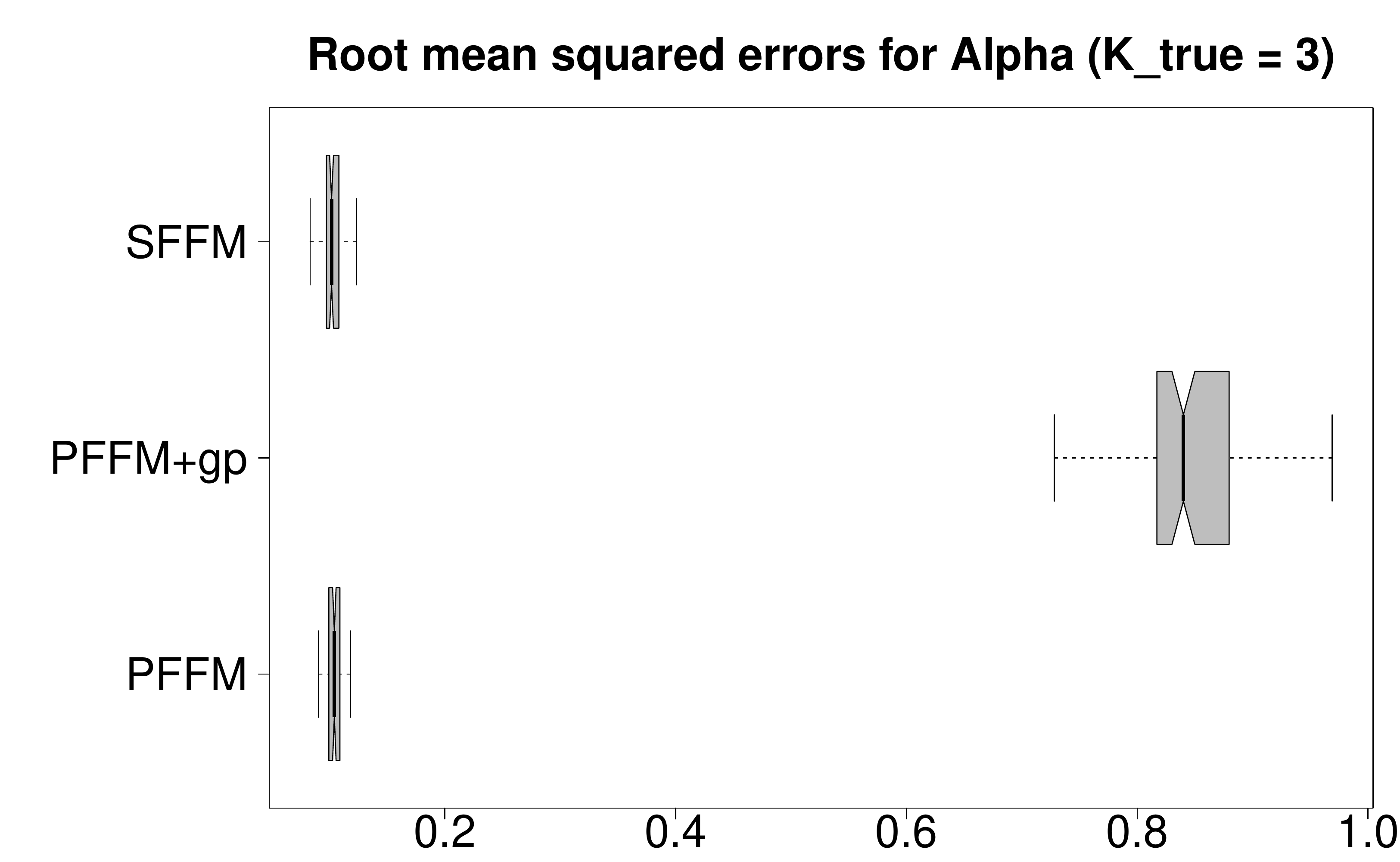}
\caption{\small Root mean squared (prediction) errors for $\bm Y^*$ (top) and   $\{\alpha_{\ell, i}^*\}$   (bottom) for $K_{true} = 0$ (left) and $K_{true} = 3$ (right). All methods perform similarly for $K_{true} = 0$, while the SFFM outperforms all competitors when $K_{true}  >  0$. In this setting, the PFFM+gp offers better prediction than the  PFFM, but cannot accurately estimate the parametric factors.
\label{fig:linear-rmse}}
\end{center}
\end{figure}

Next, we assess  prediction and credible intervals for $\bm y_i^*$ and $\{\alpha_{\ell,i}^*\}$, respectively, where  $\bm y_i^*$ is distributed identically to $\bm y_i$. The 95\% interval estimates are  evaluated using mean interval width and empirical coverage, which are presented in Figure~\ref{fig:linear-int}.  Most notably, the SFFM provides uniformly narrower interval estimates---and therefore more precise uncertainty quantification---with approximately the correct nominal coverage. The PFFM prediction and credible interval estimates are excessively wide when $K_{true} > 0$, which is reflected in Figure~\ref{fig:fit} and subsequently in Figure~\ref{fig:sd}. The PFFM+gp interval estimates  for $\{\alpha_{\ell,i}^*\}$ are quite poor, especially for $K_{true} > 0$, where the interval estimates are excessively wide yet still fail to contain the true parameters  $\{\alpha_{\ell,i}^*\}$ at close to the nominal level. In conjunction with Figure~\ref{fig:alphapm}, these results show that functional confounding adversely impacts both point estimation and uncertainty quantification.

  \begin{figure}[h]
\begin{center}
\includegraphics[width=.49\textwidth]{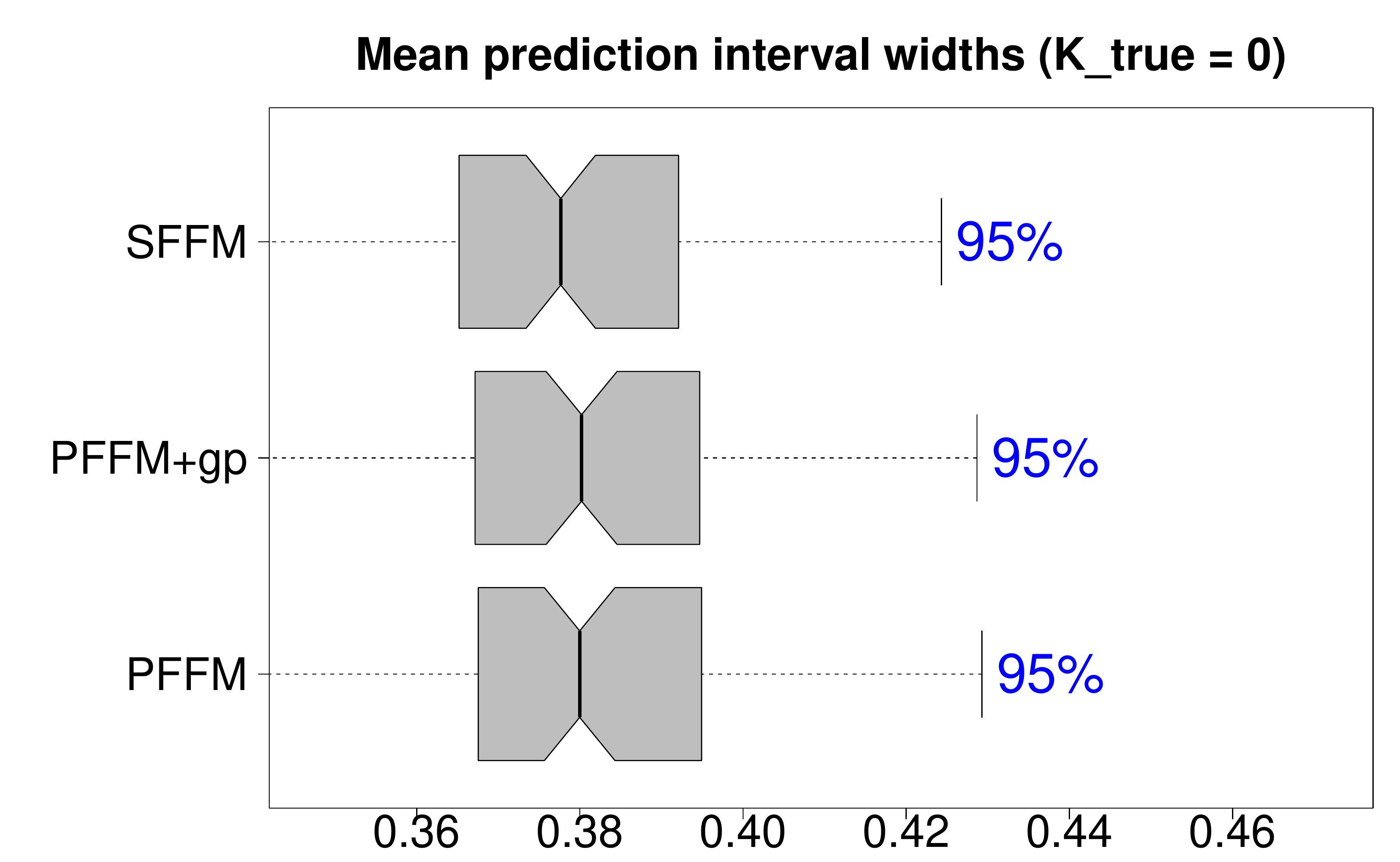}
\includegraphics[width=.49\textwidth]{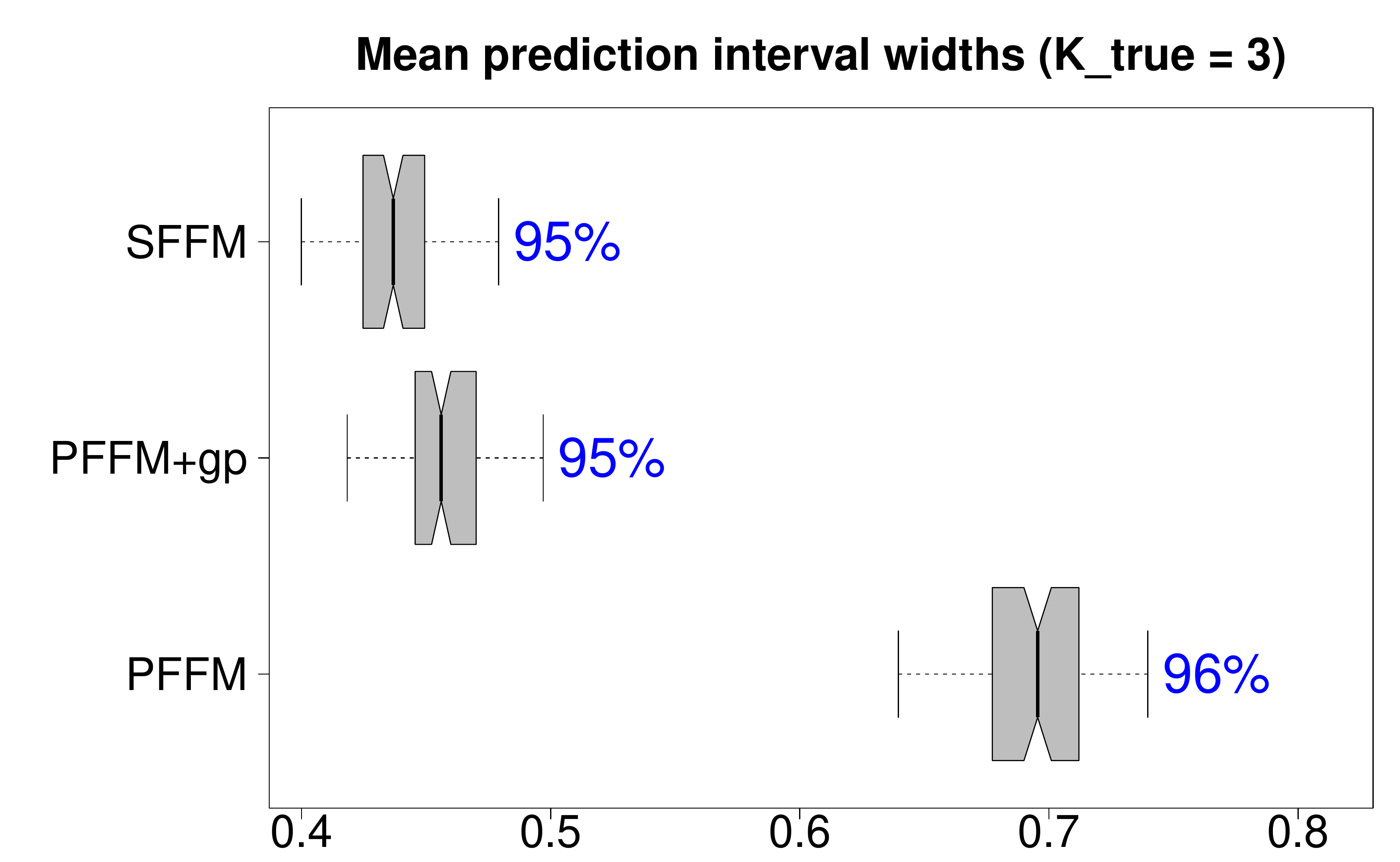}
\includegraphics[width=.49\textwidth]{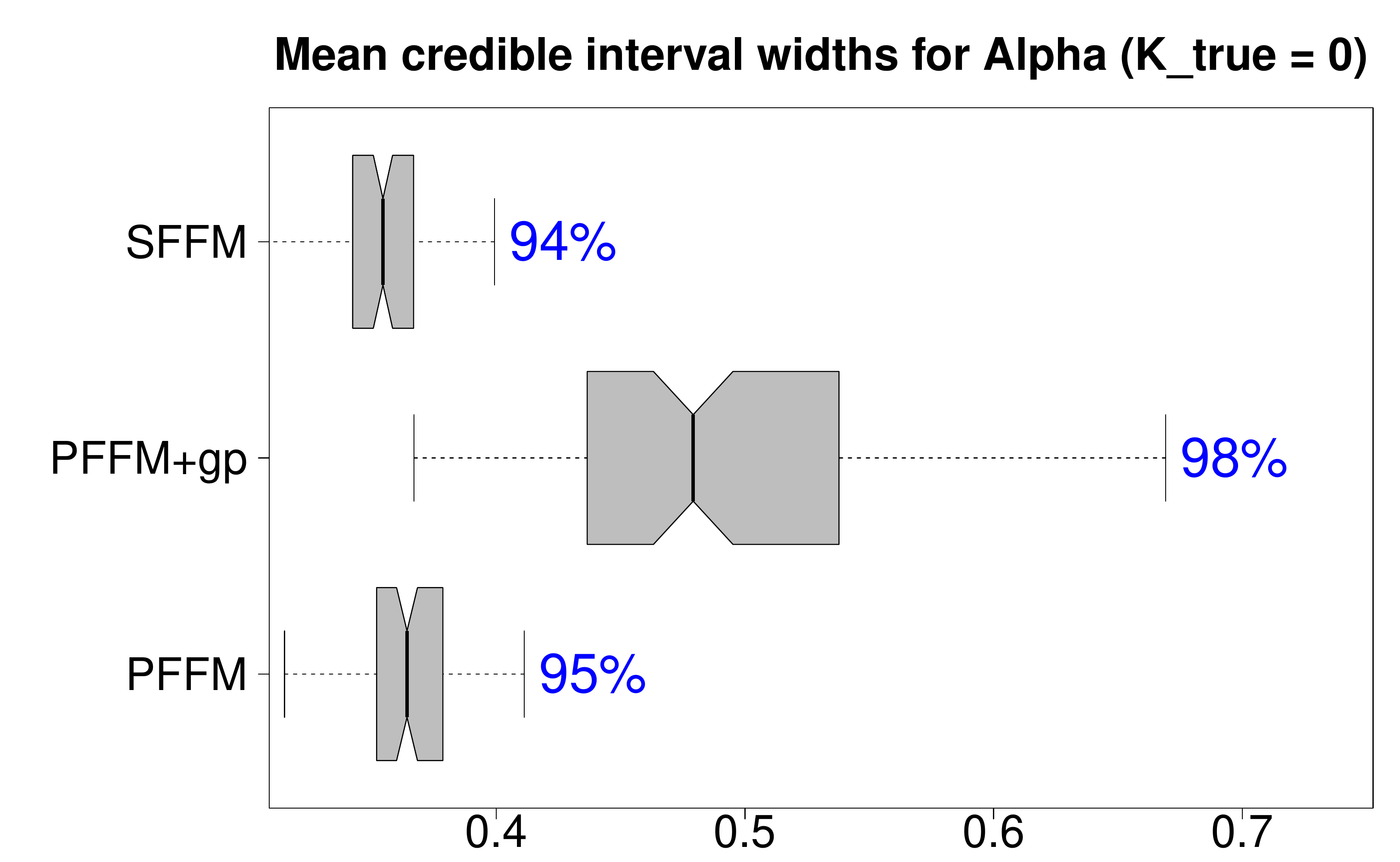}
\includegraphics[width=.49\textwidth]{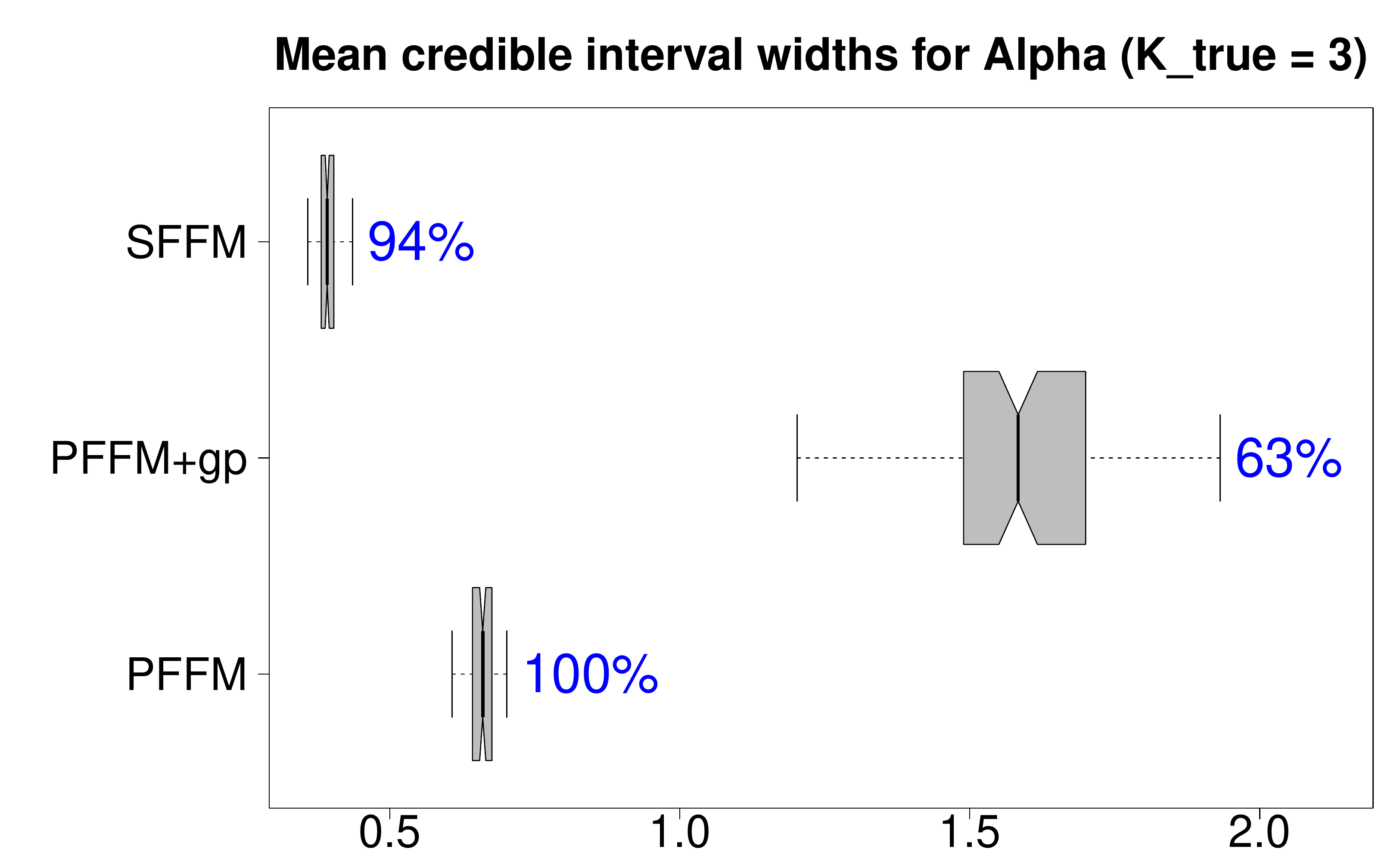}
\caption{\small Mean 95\% prediction (top) and credible (bottom) interval widths  for $\bm y_i^*$ and $\{\alpha_{\ell,i}^*\}$, respectively, with empirical coverage (annotations) for $K_{true} = 0$ (left) and $K_{true} = 3$ (right). The SFFM provides narrow interval estimates with (nearly) the correct nominal coverage in all cases. The PFFM intervals are excessively wide for $K_{true} > 0$, while the PFFM+gp performs poorly for $\{\alpha_{\ell,i}^*\}$ in all cases. 
\label{fig:linear-int}}
\end{center}
\end{figure}

Lastly,  we study the posterior distribution of $K^* = \sum_{k=1}^K \mathbb{I}\{z_k > k\}$ for estimating $K_{true}$ and  
quantifying the necessity of the nonparametric terms. Since the PFFM and PFFM+gp are not competitors for rank selection, we include a  finite mixture model variant of the SFFM (SFFM-fmm) that instead assumes $\omega_{1:K} \sim  \mbox{Dirichlet}(\kappa/K, \ldots, \kappa/K)$. We use the same choice of $K$ as in the SFFM and  set the prior expected number of factors to be $\kappa=1$ to encourage fewer factors.
Figure~\ref{fig:kprob} compares the quality of the posterior distribution $\mathbb{P}(K^*\mid \bm y)$ for the SFFM and the finite mixture alternative. Notably, the ordered spike-and-slab prior results in much larger probabilities on the true rank $K_{true}$. Despite fixing $\kappa = 1$, the SFFM-fmm  repeatedly overestimates the ranks: the simulation average of $\mathbb{P}(K^* > K_{true} \mid \bm y)$ is 0.30 for $K_{true} = 0$, 0.23 for $K_{true} = 3$, and 0.22 for $K_{true} = 8$, while the comparable values for the SFFM are 0.15, 0.16,  and 0.15, respectively. Hence, the finite mixture alternative introduces unnecessary parameters and relays deceptively strong evidence for additional nonparametric terms. 


 \begin{figure}[h]
\begin{center}
\includegraphics[width=.49\textwidth]{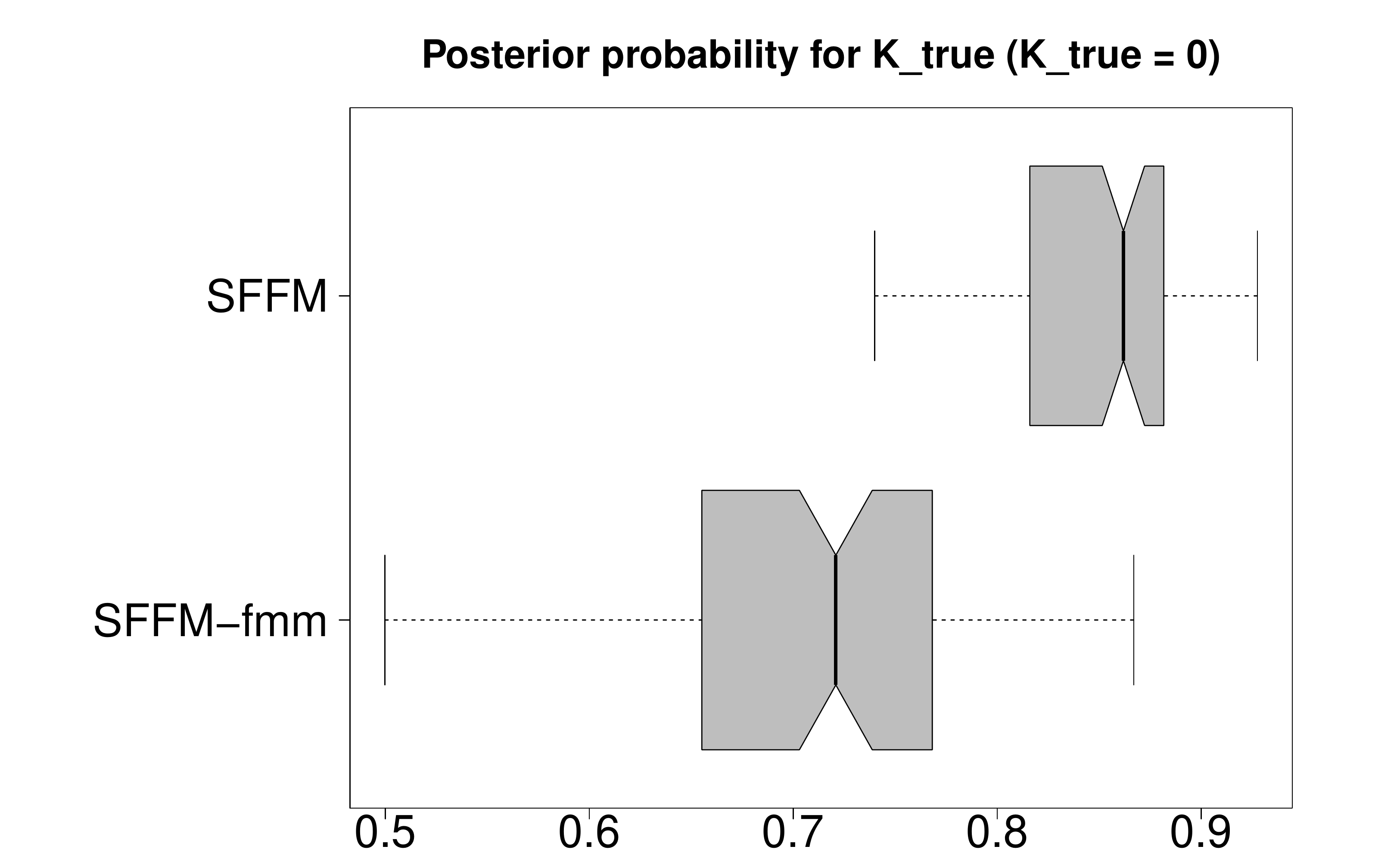}
\includegraphics[width=.49\textwidth]{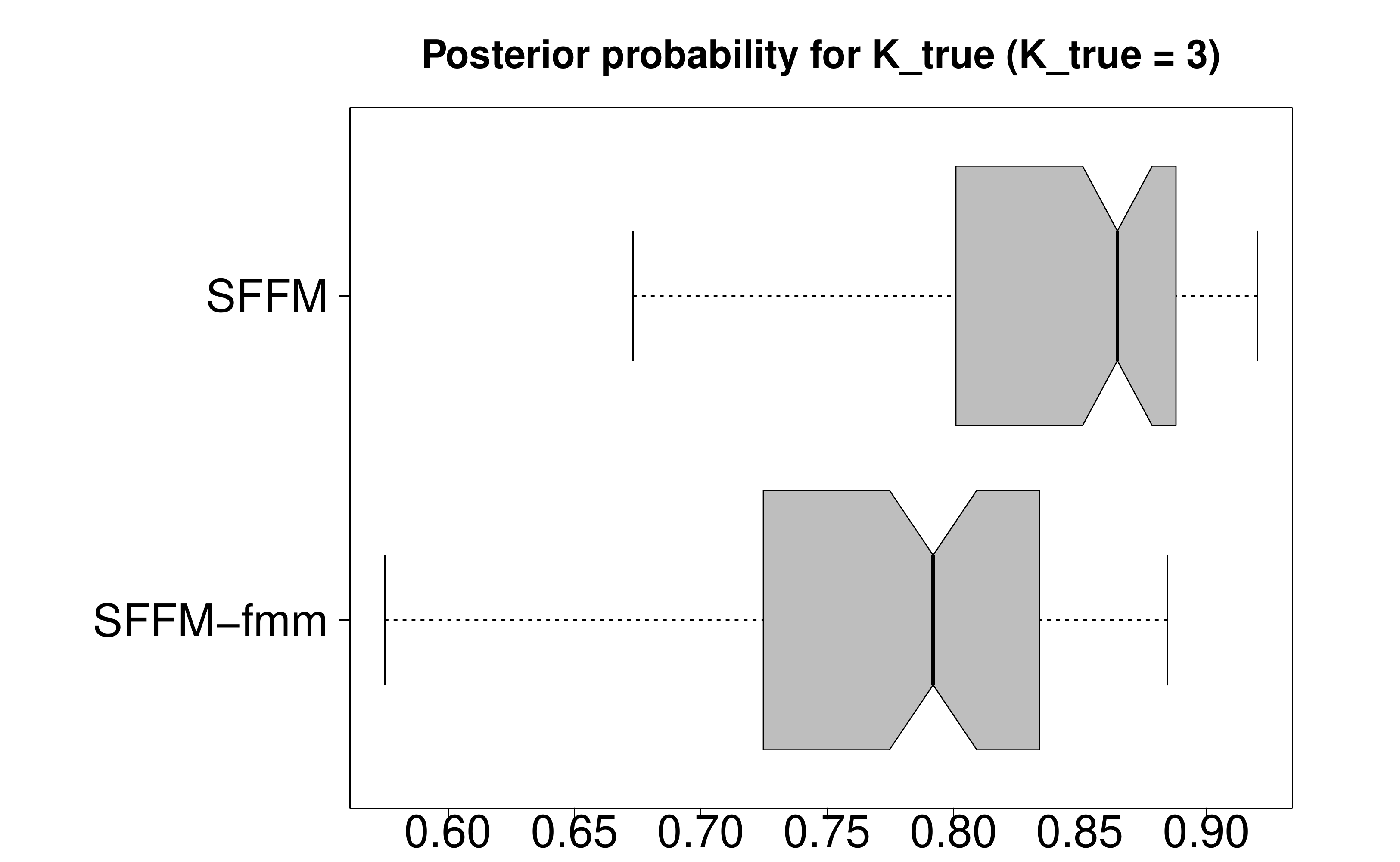}
\caption{\small   Probability score $\mathbb{P}(K^* = K_{true} \mid \bm y)$ 
for  $K_{true} = 0$ (left) and $K_{true} = 3$ (right). The proposed ordered spike-and-slab prior provides better rank selection and inference than the finite mixture alternative.  
\label{fig:kprob}}
\end{center}
\end{figure}
 
Additional simulation results in the supplement include $K_{true} = 8$, the \cite{nelson1987parsimonious} template, and further assessments of nonparametric detection via $\mathbb{P}(K^* > 0\mid \bm y)$, including ranked probability scores for the distribution $\mathbb{P}(K^*\mid \bm y)$. 
These results confirm the results presented here.
 
\clearpage

\section{Applications}\label{apps}

\subsection{Pinch force data}\label{pinch}
Human motor control is a critical area of research with implications for human physiology, monitoring and mitigating muscle degeneration, and designing robotic devices. Motor control data are recorded at high resolutions and often modeled as functional data \citep{Ramsay2000,goldsmith2016assessing}. 
Human physiology and the laws of motion dictate a parametric template, which is crucial for understanding the underlying process. However, these parametric models may not adequately describe the observed data, which  undermines the interpretability of the key parameters.

We analyze human pinching data from \cite{Ramsay1995}, which reports the force measured by pinching the thumb and forefinger on opposing sides of a 6cm force meter. The subject was instructed to maintain a background level of constant force, then increase the pinching force to a predetermined maximum level, and finally return to the original background level of constant force.  We use data from the \texttt{fda} package in \texttt{R}, which consists of $n=20$ replicate force curves over time each with $m=151$ observations selected such that the maximum of each curve occurred at 0.076 seconds. An example curve is in Figure~\ref{fig:fit}.

To model the pinch force over time $\tau$, we adapt a parametric model from 
 \cite{Ramsay1995}: 
\begin{equation}\label{par-pinch}
g_1(\tau; \gamma_i) = 1, \quad g_2(\tau; \gamma_i) = \exp[-(\log\tau - c_i)^2/\{2\exp(\gamma_i)\}],
\end{equation}
where $\exp(c_i)$ is the time of the maximum force and $\gamma_i \in \mathbb{R}$ is a shape parameter. \cite{Ramsay1995} argue that the unnormalized log-normal density $g_2$  matches the shape of the observed data and offers plausible scientific explanations. For computational convenience, we estimate $c_i$ as in \cite{Ramsay1995} by fitting a quadratic regression in $\log(\tau)$ for the response variable $\log(\bm y_i)$ restricted to observations $y_i(\tau) > 0.5$. The estimate of each $c_i$ can be recovered from the estimated regression coefficients and is subsequently treated as fixed. 

We fit the PFFM, PFFM+gp, and SFFM 
to the data using the template \eqref{par-pinch}. 
For partial pooling among subjects, we specify a hierarchical prior on the shape parameters: $\gamma_i \stackrel{iid}{\sim} N(\mu_\gamma, \sigma_\gamma^2)$, $\mu_\gamma \sim N(0, 10)$, and $\sigma_\gamma \sim C^+(0,1)$, with the priors from \eqref{priors-alpha} on the remaining parameters. Since the nonlinear parameters $\gamma_i$ are curve-specific, we construct each $\bm G_{\gamma_i}$ for $i=1,\ldots,n$ using a QR decomposition as in Section~\ref{all-np} and modify the orthogonality constraint to be $\bm{\bar G}'\bm f_k = \bm 0_{L}$ for $\bm{\bar G} = n^{-1}\sum_{i=1}^n \bm G_{\gamma_i}$. This constraint no longer preserves the posterior factorization of Corollaries~\ref{post-indep}-\ref{cor-beta}, but nonetheless maintains sufficient distinctness between the parametric and nonparametric terms. The SFFM hyperparameters are set to $a_1 = 5$,  $a_2 = 25$,  $v_0 = 0.001$, and $K=10$ as in Section~\ref{sims}.

An example of the fitted values with 95\% simultaneous prediction bands for the PFFM and SFFM is in Figure~\ref{fig:fit}. Although the PFFM captures the general shape of the data, it suffers from clear bias around the peak and produces unnecessarily wide prediction bands. The SFFM corrects both issues: the bias is removed and the prediction bands are more precise. Notably, the fitted SFFM curve preserves the same general shape as the PFFM and avoids overfitting despite the increase in modeling complexity.

Posterior uncertainty quantification is also more precise under the SFFM for the parametric factors $\{\alpha_{\ell, i}\}$. Figure~\ref{fig:sd} presents the posterior standard deviations of $\{\alpha_{\ell, i}\}_{i=1}^n$ for each $\ell = 1,\ldots,L$ under the PFFM and SFFM. Although the posterior expectations  are similar (see Figure~\ref{fig:alphapm}), the posterior standard deviations are substantially smaller under the SFFM. This reduction is even more dramatic compared to the PFFM+gp, which is omitted from Figure~\ref{fig:sd} due to excessively large posterior standard deviations that  range from 1.42 to 4.43. For model \eqref{par-pinch}, the linear coefficients $\{\alpha_{\ell,i}\}$ determine the maximum of the force curve, which is the most prominent feature in the data. The SFFM provides more precise posterior inference for these key parametric factors.

\begin{figure}[h]
\begin{center}
\includegraphics[width=.49\textwidth]{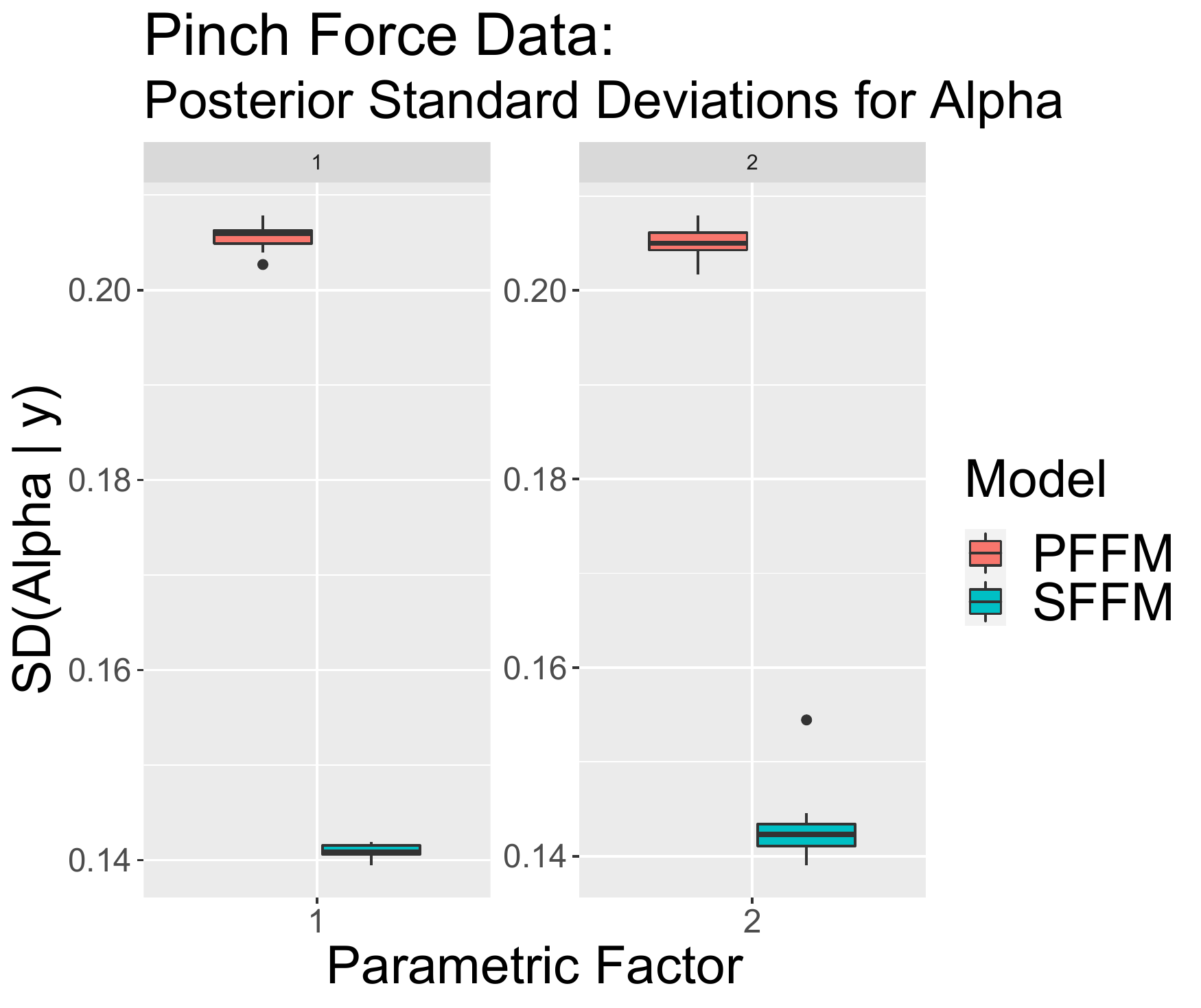}
\includegraphics[width=.49\textwidth]{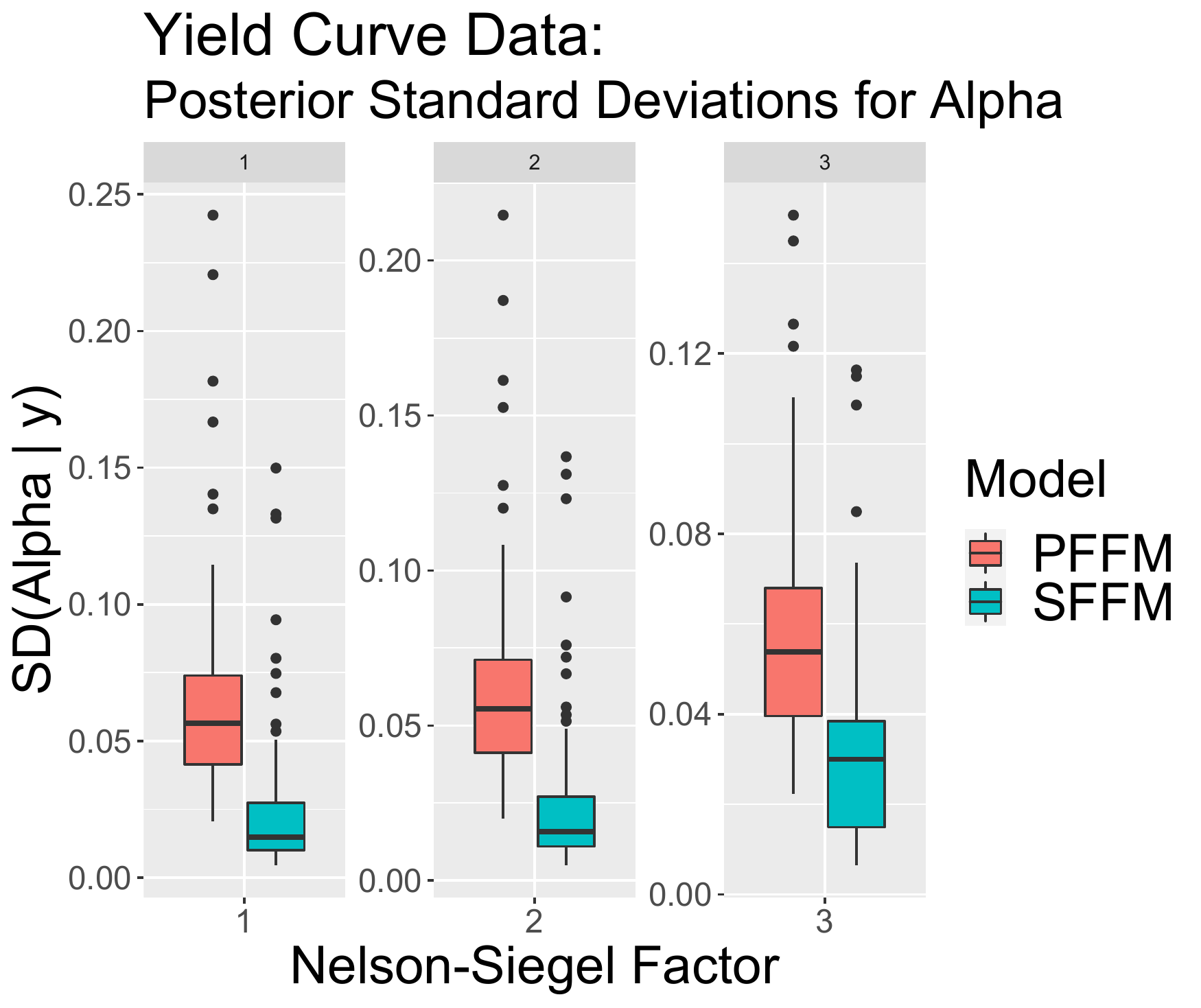}
\caption{\small Posterior standard deviations of $\{\alpha_{\ell,i}\}_{i=1}^n$ for $\ell=1,\ldots,L$  in the parametric (PFFM) and semiparametric functional factor model (SFFM) for the pinch force data (left) and yield curve data (right). There is a clear and consistent reduction in posterior uncertainty from the PFFM to the SFFM for these key parametric factors.
 \label{fig:sd}}
\end{center}
\end{figure}

Additional evidence in favor of the SFFM is presented in Table~\ref{tab-ppk}, which estimates the posterior distribution of the effective number of nonparametric terms $K^*$. Clearly, there is substantially posterior probability for at least two nonparametric terms, with limited evidence that more than three terms are needed. WAIC 
similarly favors the SFFM: the estimates are $-922$, $-2693$, and $-3005$ for the PFFM, PFFM+gp, and SFFM, respectively.

\begin{table}[ht]
\centering
\begin{tabular}{ c | c | c  | c | c | c | c  | c   }
$k$ &  0    &  1   &   2  &   3   &  4   &  5 & $\ge 6$ \\ \hline
$\mathbb{P}(K^* = k \mid \bm y)$ & 0.003 & 0.147  &0.707 &0.135 & 0.007 & 0.001 & 0\\ 
 \end{tabular}
\caption{\small Posterior probabilities $\mathbb{P}(K^* = k \mid \bm y)$ for the pinch force data. There is strong evidence for a nonparametric component.  
\label{tab-ppk}}
\end{table}

We highlight the excellent MCMC performance of the SFFM.  For all models, we retain 10000 MCMC samples after discarding 5000 iterations as a burn-in. Traceplots of $\{Y_i\}$, $\{\alpha_{\ell,i}\}$, and $K^*$ (not shown) demonstrate excellent mixing and suggest convergence.  
The computing time per 1000 iterations is 1.59s, 4.35s, and 1.98s   for the PFFM, PFFM+gp, and SFFM,  respectively (using \texttt{R} on a MacBook Pro, 2.8 GHz Intel Core i7). Comparing the effective sample sizes for all $\{\alpha_{\ell,i}\}$,  the first  quartiles are 10000, 1361, and 9508 for the PFFM, PFFM+gp, and SFFM, respectively. In aggregate, the SFFM not only provides superior modeling  capabilities, but also rivals the simpler PFFM in computing time and MCMC efficiency---while the PFFM+gp lags far behind in both. 

\subsection{Dynamic yield curves}\label{yields}
Yield curves play a fundamental role in economic and financial analyses: they provide essential information about current and future economic conditions, including inflation, business cycles, and monetary policies, 
and are used to price fixed-income securities  and construct forward curves. The yield curve $Y_i(\tau)$ describes how interest rates vary as a function of the length of the borrowing period, or time to maturity $\tau$, at each time $i$. Naturally, yield curves can be modeled as functional data that evolve dynamically over time.

Yield curve models most commonly employ
the parametric  \cite{nelson1987parsimonious} basis: 
\begin{equation}\label{par-yields}
g_1(\tau; \gamma) = 1, \quad
 g_2(\tau; \gamma) = \{1 - \exp(-\tau\gamma)\}/(\tau\gamma), \quad
  g_3(\tau; \gamma) =  g_2(\tau; \gamma)  - \exp(-\tau\gamma)
\end{equation}
where $g_1$ is the \emph{level}, $g_2$ is the \emph{slope}, and $g_3$ is the \emph{curvature}. The nonlinear parameter $\gamma > 0$ is commonly treated as fixed, such as $\gamma = 0.0609$ \citep{diebold2006forecasting}, but otherwise may be estimated. We use a weakly informative Gamma prior for  $\gamma$ with prior mean 0.0609 and prior variance 0.5. Both the PFFM and the SFFM with this prior on $\gamma$ are favored by WAIC over their respective counterparts with fixed $\gamma =0.0609$. 


To capture the yield curve dynamics, we model the parametric Nelson-Siegel factors as an AR(1):
\begin{equation}\label{areg}
\alpha_{\ell,i} = \mu_\ell + \phi_\ell (\alpha_{\ell, i-1} - \mu_\ell) + \zeta_{\ell,i}, \quad \zeta_{\ell,i} \stackrel{indep}{\sim} N(0, \sigma_{\zeta_\ell}^2).
\end{equation}
The PFFM with \eqref{par-yields} and \eqref{areg} is also known as the \emph{dynamic Nelson-Siegel model} \citep{diebold2006forecasting}.  The dynamic Nelson-Siegel factors $\{\alpha_{\ell,i}\}$ may be viewed as the state variables in a dynamic linear model. Using the orthogonality constraints  on $\{g_\ell\}$ and $\{f_k\}$, Corollary~\ref{cor-alpha} provides a  convenient and low-dimensional state  simulation algorithm for efficient joint sampling of the dynamic factors $\{\alpha_{\ell,i}\}$; see the supplementary material for details. Similar  simplifications are \emph{not} available for the PFFM+gp, which is omitted.

The dynamic model \eqref{areg} is accompanied by a diffuse prior  
 $\mu_\ell \stackrel{iid}{\sim} N(0, 10^6)$ and a weakly informative prior $(\phi_\ell + 1)/2 \stackrel{iid}{\sim} \mbox{Beta}(5,2)$---which ensures stationarity of the dynamic factors $\{\alpha_{\ell, i}\}$ and therefore $\{Y_i\}$ and $\{\bm y_i\}$---along with $\sigma_{\zeta_\ell} \stackrel{iid}{\sim} C^+(0,1)$ for $\ell=1,\ldots,L$. In addition, we generalize \eqref{obs} to accommodate stochastic volatility in the observation error variance, $
\epsilon_{i,j} \stackrel{indep}{\sim}N(0, \sigma_{\epsilon_i}^2)$, with  $ \log \sigma_{\epsilon_i}^2 \sim \mbox{AR(1)}$, 
which is an essential component in many economic and financial models  \citep{kim1998stochastic}. For these AR(1) parameters, we adopt the priors and sampling algorithm from \cite{Kowal2020b}.


We evaluate the suitability of the Nelson-Siegel basis for modeling monthly unsmoothed Fama-Bliss US government bond yields from 2000-2009 ($n=120$) provided by \cite{van2014forecasting}. These data are available for maturities of 3, 6, 9, 12, 15, 18, 21, 24, 30, 36, 48, 60, 72, 84, 96, 108 and 120 months ($m = 17$). 
The inferential targets are the latent curves $Y_i$, the dynamic Nelson-Siegel factors $\{\alpha_{i,\ell}\}$, and the effective number of nonparametric terms $K^*$. The  SFFM hyperparameter and MCMC specifications from Section~\ref{pinch} are adopted here, again with excellent mixing and convergence for these key parameters.


Figure~\ref{fig:fit} shows the PFFM and SFFM fitted values with 95\% simultaneous prediction bands for the yield curve in September 2008 
during the onset of the Great Recession. The PFFM produces a reasonable shape for the yield curve, yet---as with the pinch force data---suffers from clear bias and overconservative prediction bands. The SFFM corrects these deficiencies without distorting the general shape of the curve or overfitting to the data.

 Similar results are obtained  in Figure~\ref{fig:sd} for the dynamic Nelson-Siegel factors. The SFFM offers substantial reductions in posterior standard deviation for all three dynamic Nelson-Siegel factors. These factors are fundamental for describing the shape of the yield curve; reducing the posterior uncertainty is a crucial advantage of the SFFM. Importantly, the simulation study confirms that the reduced posterior uncertainty quantification from the SFFM nonetheless retains valid calibration, or more specifically, correct nominal coverage of the posterior credible intervals.


We summarize the posterior distribution $\mathbb{P}(K^* \mid \bm y)$ of the effective number of nonparametric terms in Table~\ref{tab-ppk-yields}. The evidence for the nonparametric factors is moderate: we estimate $\mathbb{P}(K^* > 0 \mid \bm y) = 0.09$. 

\begin{table}[ht]
\centering
\begin{tabular}{ c | c | c  | c | c  }
$k$ &  0    &  1   &   2  & $\ge 3$ \\ \hline
$\mathbb{P}(K^* = k | \bm y)$ & 0.910 &  0.087  & 0.003 & 0\\ 
 \end{tabular}
\caption{\small Posterior probabilities $\mathbb{P}(K^* = k \mid \bm y)$ for the yield curve data. There is moderate evidence for a nonparametric component.  
\label{tab-ppk-yields}}
\end{table}

There are other compelling reasons to include the nonparametric factors in a dynamic Nelson-Siegel model. First, the uncertainty quantification for $\{Y_i\}$ and $\{\alpha_{\ell,i}\}$  is more precise in the SFFM than in the PFFM (see Figures~\ref{fig:fit}~and~\ref{fig:sd}). Second, WAIC decisively prefers the SFFM (-10229) over the PFFM (-5516), which suggests potential improvements in out-of-sample predictive capabilities.
Third, the additional computational burdens of the SFFM are minimal. Due to the orthogonality constraints and Corollory~\ref{cor-alpha}, the sampling steps for $\{\alpha_{\ell, i}\}$ and associated parameters are identical for the PFFM and the SFFM, which permits full and robust model development for the parametric factors. Further, the computing times per 1000 iterations are similar:  21s and 23s for the PFFM and SFFM, respectively (using \texttt{R} on a MacBook Pro, 2.8 GHz Intel Core i7).




 \section{Discussion}\label{dis}
We proposed a Bayesian semiparametric model for functional data. The semiparametric functional factor model (SFFM) augmented a parametric template with an infinite-dimensional nonparametric functional basis. The nonparametric basis was treated as unknown and learned from the data to correct the biases of the parametric template while appropriately incorporating relevant uncertainties into the posterior distribution. Distinctness between the parametric and nonparametric terms was achieved by conditioning upon an orthogonality constraint, which simultaneously prevented \emph{functional confounding} and admitted highly convenient simplifications for efficient MCMC sampling.  The nonparametric component was regularized with an ordered spike-and-slab prior, which implicitly provided rank selection for infinite-dimensional models and satisfied appealing theoretical properties. This prior was accompanied by a parameter expansion scheme customized to boost MCMC efficiency, and is broadly applicable for Bayesian factor models. Our analyses of synthetic data, human motor control data, and dynamic interest rates demonstrated clear advantages of the semiparametric modeling framework relative to both parametric and Gaussian process alternatives. The proposed approach eliminated bias, reduced excessive posterior and predictive uncertainty, and provided reliable inference on the effective number of nonparametric terms---all with minimal computational costs. 


 There are several promising extensions that remain for future work. First,  formulation of the parametric template in \eqref{sffm} can be generalized, for example to include a functional regression term with scalar or functional covariates. Second, the ordered spike-and-slab prior currently uses independent and identically distributed variables in the parameter expansion \eqref{px}. Adaptations to include dependence among these variables, such as regression models, clustering, and spatio-temporal dependence, would broaden the applicability of the prior. Lastly, we studied only a small subset of many possible parametric templates. For applications that rely on such parametric models, the proposed semiparametric modeling framework can directly assess  the adequacy of these models---and perhaps suggest improvements.

\ifblinded
\else 
\subsection*{Acknowledgements} 
Research was sponsored by the Army Research Office and was accomplished under Grant Number W911NF-20-1-0184 (Kowal). The views and conclusions contained in this document are those of the authors and should not be interpreted as representing the official policies, either expressed or implied, of the Army Research Office or the U.S. Government. The U.S. Government is authorized to reproduce and distribute reprints for Government purposes notwithstanding any copyright notation herein. 
\fi

\bibliographystyle{apalike}
\bibliography{refs}

\clearpage

\section{Proofs}\label{a-proofs}

\begin{lemma} 
When $\bm G_\gamma ' \bm f_k = \bm 0_{L}$ for all $k=1,2,\ldots$ and conditional on $\sigma_\epsilon^2$, the likelihood \eqref{like} factorizes:
$p(\bm y \mid  \gamma , \{\bm \alpha_i\}, \bm F,  \{\bm \beta_i\}) =  p_{0}(\bm y \mid  \gamma ,\{\bm \alpha_i\}) p_{1}(\bm y \mid  \bm F , \{\bm \beta_i\})$,
where $p_0$ depends \emph{only} on the parametric terms $\gamma$  and $\{\bm \alpha_i\}$ and  $p_1$ depends \emph{only} on  the nonparametric terms $\bm F$  and $\{\bm \beta_i\}$. 
\end{lemma}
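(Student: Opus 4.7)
The plan is to decompose the Gaussian quadratic form in the likelihood using an orthogonal projection determined by $\bm G_\gamma$. Writing the likelihood as
\[
p(\bm y \mid \gamma, \{\bm \alpha_i\}, \bm F, \{\bm \beta_i\}) \propto \exp\!\left\{-\tfrac{1}{2\sigma_\epsilon^2}\sum_{i=1}^n \| \bm y_i - \bm G_\gamma \bm \alpha_i - \bm F \bm \beta_i\|^2\right\},
\]
I would introduce the orthogonal projector $\bm P_G = \bm G_\gamma(\bm G_\gamma'\bm G_\gamma)^{-1}\bm G_\gamma'$ and its complement $\bm P_G^\perp = \bm I_m - \bm P_G$. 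The orthogonality assumption $\bm G_\gamma'\bm f_k = \bm 0$ for all $k$ translates into $\bm P_G \bm F = \bm 0$ and equivalently $\bm P_G^\perp \bm F = \bm F$; simultaneously $\bm P_G \bm G_\gamma = \bm G_\gamma$ and $\bm P_G^\perp \bm G_\gamma = \bm 0$.

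Next I would apply the Pythagorean identity. Since $\bm P_G$ and $\bm P_G^\perp$ are orthogonal idempotents summing to the identity,
\[
\| \bm y_i - \bm G_\gamma \bm \alpha_i - \bm F \bm \beta_i\|^2 = \|\bm P_G(\bm y_i - \bm G_\gamma \bm \alpha_i - \bm F \bm \beta_i)\|^2 + \|\bm P_G^\perp(\bm y_i - \bm G_\gamma \bm \alpha_i - \bm F \bm \beta_i)\|^2,
\]
and using the identities above this becomes
\[
\|\bm P_G \bm y_i - \bm G_\gamma \bm \alpha_i\|^2 + \|\bm P_G^\perp \bm y_i - \bm F \bm \beta_i\|^2.
\]
The first summand depends only on $\gamma$ and $\bm \alpha_i$ (through $\bm G_\gamma$), while the second depends only on $\bm F$ and $\bm \beta_i$. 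Exponentiating and taking the product over $i=1,\ldots,n$ separates the exponential into two factors, which I would label $p_0(\bm y \mid \gamma, \{\bm \alpha_i\})$ and $p_1(\bm y \mid \bm F, \{\bm \beta_i\})$ respectively, yielding the claimed factorization.

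The only conceptual step is the orthogonal decomposition; everything else is bookkeeping. A minor subtlety to flag is that $\bm F$ has infinitely many columns, so I would note that $\bm F \bm \beta_i$ is interpreted as the (almost-surely convergent) series $\sum_{k=1}^\infty \bm f_k \beta_{k,i}$, and the relation $\bm P_G \bm F \bm \beta_i = \bm 0$ follows termwise from $\bm G_\gamma'\bm f_k = \bm 0$. Since the overall normalizing constant $(2\pi \sigma_\epsilon^2)^{-nm/2}$ can be absorbed into either factor (the statement conditions on $\sigma_\epsilon^2$), the equality holds as stated.
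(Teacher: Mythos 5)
Your projection argument is essentially a repackaging of what the paper does: the paper expands $\Vert\bm y_i - \bm G_\gamma\bm\alpha_i - \bm F\bm\beta_i\Vert^2$ directly, uses $\bm G_\gamma'\bm F = \bm 0$ to kill the single cross term $\bm\alpha_i'\bm G_\gamma'\bm F\bm\beta_i$, and then groups $\Vert\bm y_i\Vert^2$ into a parameter-free constant, $\Vert\bm G_\gamma\bm\alpha_i\Vert^2 - 2\bm\alpha_i'\bm G_\gamma'\bm y_i$ into $p_0$, and $\Vert\bm F\bm\beta_i\Vert^2 - 2\bm\beta_i'\bm F'\bm y_i$ into $p_1$. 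Your Pythagorean decomposition accomplishes the same cancellation with a cleaner geometric picture, at the cost of assuming $\bm G_\gamma'\bm G_\gamma$ is invertible (needed to form $\bm P_G$), an assumption the paper's direct expansion does not require for this lemma --- orthonormality of $\bm G_\gamma$ is only imposed later, in Corollary~\ref{cor-alpha}.

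There is one genuine, though easily repaired, slip: the factor you label $p_1 \propto \exp\{-\sum_{i}\Vert\bm P_G^\perp\bm y_i - \bm F\bm\beta_i\Vert^2/(2\sigma_\epsilon^2)\}$ is \emph{not} free of the parametric terms, because $\bm P_G^\perp$ depends on $\gamma$. Expanding, $\Vert\bm P_G^\perp\bm y_i - \bm F\bm\beta_i\Vert^2 = \Vert\bm P_G^\perp\bm y_i\Vert^2 - 2\bm\beta_i'\bm F'\bm P_G^\perp\bm y_i + \Vert\bm F\bm\beta_i\Vert^2$; the cross term reduces to $-2\bm\beta_i'\bm F'\bm y_i$ because $\bm F'\bm P_G = \bm 0$, but the residual piece $\Vert\bm P_G^\perp\bm y_i\Vert^2$ still carries $\gamma$. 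You need one more line: either reallocate $\Vert\bm P_G^\perp\bm y_i\Vert^2$ to the $p_0$ factor (it is free of $\bm F$ and $\{\bm\beta_i\}$), or observe that it combines with $\Vert\bm P_G\bm y_i\Vert^2$ from the other factor to give $\Vert\bm y_i\Vert^2$, a genuine constant, so that only the pieces explicitly involving $\bm\alpha_i$ and $\bm\beta_i$ remain in $p_0$ and $p_1$. With that adjustment the factorization holds exactly as stated, and your handling of the infinite column case of $\bm F$ is fine.
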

\begin{proof}
Observing that $\Vert\bm y_i - \bm G_\gamma \bm \alpha_i - \bm F\bm \beta_i\Vert^2 = \Vert\bm y_i\Vert^2 + \Vert \bm G_\gamma \bm \alpha_i \Vert^2 + \Vert \bm F\bm \beta_i\Vert^2- 2 \bm \alpha_i \bm G_\gamma ' \bm y_i - 2 \bm \alpha_i \bm G_\gamma ' \bm F \bm \beta_i  - 2 \bm \beta_i \bm F ' \bm y_i = \Vert\bm y_i\Vert^2 + \big(\Vert \bm G_\gamma \bm \alpha_i \Vert^2- 2 \bm \alpha_i \bm G_\gamma ' \bm y_i  \big) + \big(\Vert \bm F\bm \beta_i\Vert^2 - 2 \bm \beta_i \bm F ' \bm y_i \big)$, the joint likelihood factorizes into 
\begin{align*}
 p(\bm y | -) &= \prod_{i=1}^n p(\bm y_i |  -) =  \prod_{i=1}^n  (2\pi \sigma_\epsilon^2)^{-m/2}\exp\{-\Vert\bm y_i - \bm G_\gamma \bm \alpha_i - \bm F\bm \beta_i\Vert^2/(2\sigma_\epsilon^2) \}\\
&= \prod_{i=1}^n  (2\pi \sigma_\epsilon^2)^{-m/2}\exp[- \{\Vert\bm y_i\Vert^2 + \big(\Vert \bm G_\gamma \bm \alpha_i \Vert^2- 2 \bm \alpha_i \bm G_\gamma ' \bm y_i  \big) + \big(\Vert \bm F\bm \beta_i\Vert^2 - 2 \bm \beta_i \bm F ' \bm y_i \big)\}/(2\sigma_\epsilon^2) ]\\
&= (2\pi \sigma_\epsilon^2)^{-nm/2}\exp\Big\{-\sum_{i=1}^n \Vert\bm y_i\Vert^2/(2\sigma_\epsilon^2)\Big\}
\exp\Big\{-\sum_{i=1}^n \big(\Vert \bm G_\gamma \bm \alpha_i \Vert^2- 2 \bm \alpha_i' \bm G_\gamma ' \bm y_i\big)/(2\sigma_\epsilon^2)\Big\} \\
& \quad \quad \quad 
\exp\Big\{-\sum_{i=1}^n \big(\Vert \bm F\bm \beta_i\Vert^2 - 2 \bm \beta_i' \bm F ' \bm y_i \big)/(2\sigma_\epsilon^2)\Big\}
\end{align*}
which produces the necessary product factorization. 
\end{proof}

\begin{corollary} 
Under model \eqref{like} with prior independence $p(\{\bm \alpha_i\},  \{\bm \beta_i\} ) =  p( \{\bm \alpha_i\}) p(\{\bm\beta_i\})$ and  $\bm G_\gamma ' \bm f_k = \bm 0_{L}$ for all $k=1,2,\ldots$, the parametric and nonparametric factors are \emph{a posteriori} independent: 
$p( \{\bm \alpha_i\}, \{\bm \beta_i\} \vert \bm y, \sigma_\epsilon^2,  \bm F, \gamma) =  p(\{\bm \alpha_i\}\vert \bm y, \sigma_\epsilon^2,  \bm F, \gamma) p( \{\bm\beta_i\}\vert \bm y, \sigma_\epsilon^2,  \bm F, \gamma)$. 
\end{corollary}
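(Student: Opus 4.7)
The plan is a direct application of Bayes' rule combined with the factorization already established in Lemma~\ref{like-decomp}. Conditional on $(\bm y, \sigma_\epsilon^2, \bm F, \gamma)$, the joint posterior density of $(\{\bm \alpha_i\}, \{\bm \beta_i\})$ is proportional to the likelihood times the joint prior. First I would substitute the likelihood factorization $p_0(\bm y \mid \gamma, \{\bm \alpha_i\}) \, p_1(\bm y \mid \bm F, \{\bm \beta_i\})$ from Lemma~\ref{like-decomp}, and then substitute the assumed prior factorization $p(\{\bm \alpha_i\}) p(\{\bm \beta_i\})$, yielding
\[
p(\{\bm \alpha_i\}, \{\bm \beta_i\} \mid \bm y, \sigma_\epsilon^2, \bm F, \gamma) \;\propto\; \bigl[ p_0(\bm y \mid \gamma, \{\bm \alpha_i\})\, p(\{\bm \alpha_i\}) \bigr] \cdot \bigl[ p_1(\bm y \mid \bm F, \{\bm \beta_i\})\, p(\{\bm \beta_i\}) \bigr],
\]
where the first bracket depends only on $\{\bm \alpha_i\}$ (given the conditioning quantities) and the second only on $\{\bm \beta_i\}$.

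Second, I would invoke the standard factorization criterion for conditional independence: if a joint density separates as a product of a function of one block of variables and a function of a disjoint block, then the two blocks are conditionally independent, and each marginal is proportional to its corresponding factor. Normalizing the two factors independently gives
\begin{align*}
p(\{\bm \alpha_i\} \mid \bm y, \sigma_\epsilon^2, \bm F, \gamma) &\propto p_0(\bm y \mid \gamma, \{\bm \alpha_i\})\, p(\{\bm \alpha_i\}), \\
p(\{\bm \beta_i\} \mid \bm y, \sigma_\epsilon^2, \bm F, \gamma) &\propto p_1(\bm y \mid \bm F, \{\bm \beta_i\})\, p(\{\bm \beta_i\}),
\end{align*}
and since the overall normalizing constant is the product of the two separate normalizers, multiplying these marginals reproduces the joint posterior, which is the claim.

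There is no real obstacle; the corollary is a textbook consequence of Lemma~\ref{like-decomp}. The one small caveat to flag is that the prior independence between $\{\bm \alpha_i\}$ and $\{\bm \beta_i\}$ must continue to hold after conditioning on $(\bm F, \gamma, \sigma_\epsilon^2)$. This holds because in the overall prior specification, $\bm F$, $\gamma$, and $\sigma_\epsilon^2$ are modeled independently of the parametric and nonparametric factors, so conditioning on them does not couple $\{\bm \alpha_i\}$ with $\{\bm \beta_i\}$ and the factorization argument above goes through verbatim.
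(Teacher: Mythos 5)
Your proof is correct and follows essentially the same route as the paper, which simply notes that the result is a direct consequence of the prior independence assumption together with the likelihood factorization of Lemma~\ref{like-decomp}; you have merely spelled out the Bayes'-rule computation that the paper leaves implicit. The caveat you flag about prior independence persisting after conditioning on $(\bm F, \gamma, \sigma_\epsilon^2)$ is a reasonable and correctly resolved point.
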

\begin{proof}
The result follows as a direct consequence of (i) the prior independence assumption and (ii) the likelihood factorizations from Lemma~\ref{like-decomp}.
\end{proof}

  \begin{lemma}
	Suppose $\bm G_\gamma$ is $L\times m$ with rank $L$. Denote $\bm\psi_k^0 \sim N\left(\bm Q_{\psi_k}^{-1} \bm \ell_{\psi_k}, \bm Q_{\psi_k}^{-1}\right)$ with $\bm f_k^0 = \bm B \bm \psi_k^0$. The shifted term  $\bm f_k = \bm B \bm \psi_k$ with $\bm \psi_k = \bm \psi_k^0 - \bm Q_{\psi_k}^{-1} \bm B' \bm G_\gamma \left(\bm G_\gamma' \bm B\bm Q_{\psi_k}^{-1}\bm B' \bm G_\gamma\right)^{-1} \bm G_\gamma' \bm B\bm \psi_k^0$  satisfies  $\bm f_k \stackrel{d}{=}[\bm f_k^0 \mid \bm G_\gamma' \bm f_k^0 = \bm 0]$ and $ \mathbb{P}(\bm G_\gamma' \bm f_k = \bm 0) = 1$. 
\end{lemma}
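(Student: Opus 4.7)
The plan is to verify the two claims in order, with the bulk of the work amounting to recognizing the shift as the classical conditional-Gaussian sampler.

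First I would establish the deterministic identity $\bm G_\gamma' \bm f_k = \bm 0$. Substituting the definition of $\bm \psi_k$ and using $\bm f_k = \bm B \bm \psi_k$,
\begin{equation*}
\bm G_\gamma' \bm f_k = \bm G_\gamma' \bm B \bm \psi_k^0 - \bm G_\gamma' \bm B \bm Q_{\psi_k}^{-1} \bm B' \bm G_\gamma \bigl(\bm G_\gamma' \bm B \bm Q_{\psi_k}^{-1} \bm B' \bm G_\gamma\bigr)^{-1} \bm G_\gamma' \bm B \bm \psi_k^0 = \bm 0,
\end{equation*}
which holds as an algebraic identity and hence with probability one. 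The only nontrivial ingredient is invertibility of $\bm G_\gamma' \bm B \bm Q_{\psi_k}^{-1} \bm B' \bm G_\gamma$, which follows from the rank-$L$ hypothesis on $\bm G_\gamma$ together with the (implicit) assumption that $\bm B' \bm G_\gamma$ has full column rank, i.e., the parametric template is representable in the spline basis $\bm B$.

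Next I would recognize the shift as the classical conditional-sampling formula for multivariate normals. Specifically, if $\bm X \sim N(\mu, \Sigma)$ and $\bm A$ is a matrix with full row rank, the random variable $\tilde{\bm X} = \bm X - \Sigma \bm A'(\bm A \Sigma \bm A')^{-1}(\bm A \bm X - \bm a)$ satisfies $\tilde{\bm X} \stackrel{d}{=} [\bm X \mid \bm A \bm X = \bm a]$. Applying this formula with $\bm X = \bm \psi_k^0$, $\mu = \bm Q_{\psi_k}^{-1} \bm \ell_{\psi_k}$, $\Sigma = \bm Q_{\psi_k}^{-1}$, $\bm A = \bm G_\gamma' \bm B$, and $\bm a = \bm 0$ yields exactly the formula defining $\bm \psi_k$. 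Thus $\bm \psi_k \stackrel{d}{=} [\bm \psi_k^0 \mid \bm G_\gamma' \bm B \bm \psi_k^0 = \bm 0]$. Since $\bm f_k^0 = \bm B \bm \psi_k^0$, the conditioning event $\{\bm G_\gamma' \bm f_k^0 = \bm 0\}$ coincides with $\{\bm G_\gamma' \bm B \bm \psi_k^0 = \bm 0\}$, so left-multiplying by $\bm B$ delivers $\bm f_k \stackrel{d}{=} [\bm f_k^0 \mid \bm G_\gamma' \bm f_k^0 = \bm 0]$.

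To make the argument self-contained I would justify the classical sampler in one of two equivalent ways: (i) observe that $\tilde{\bm X}$ is a linear transformation of $\bm X$, hence Gaussian, and directly compute that its mean equals $\mu + \Sigma \bm A'(\bm A \Sigma \bm A')^{-1}(\bm a - \bm A \mu)$ and its covariance equals $\Sigma - \Sigma \bm A'(\bm A \Sigma \bm A')^{-1} \bm A \Sigma$, both matching the standard conditional moments; or (ii) show that $\mathrm{Cov}(\tilde{\bm X}, \bm A \bm X) = \bm 0$, so $\tilde{\bm X} \perp \bm A \bm X$ by joint Gaussianity, while also $\bm A \tilde{\bm X} = \bm a$ almost surely, which together identify $\tilde{\bm X}$ with the conditional law. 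Either route is elementary linear algebra.

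The only real obstacle is bookkeeping rather than substance: one must track the two different but equivalent conditioning events $\{\bm G_\gamma' \bm f_k^0 = \bm 0\}$ and $\{\bm G_\gamma' \bm B \bm \psi_k^0 = \bm 0\}$, and confirm that the invertibility assumption needed for the shift is compatible with the stated rank condition on $\bm G_\gamma$. No probabilistic subtleties arise: the conclusion reduces entirely to the well-known Gaussian conditioning identity applied to an affine constraint.
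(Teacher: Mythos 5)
Your proof is correct and takes essentially the same route as the paper: both reduce the claim to the classical Gaussian conditioning-by-shift identity, the paper applying it directly to the $m$-dimensional vector $\bm f_k^0$ with covariance $\bm B\bm Q_{\psi_k}^{-1}\bm B'$ and you applying it to the coefficient vector $\bm \psi_k^0$ and then pushing forward through $\bm B$ --- a purely cosmetic difference since the conditioning events coincide. Your explicit remark that invertibility of $\bm G_\gamma'\bm B\bm Q_{\psi_k}^{-1}\bm B'\bm G_\gamma$ additionally requires $\bm B'\bm G_\gamma$ to have full column rank is a fair observation about an assumption the paper leaves implicit.
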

\begin{proof}
To show the result, we apply a modification of Theorem 1 in \cite{Kowal2020b}. The full conditional distribution of the unconstrained term is $\bm f_k^0 \sim N\big(\bm \mu_{f_{k}^0}, \bm \Sigma_{f_{k}^0}\big)$ with $\bm \mu_{f_{k}^0} = \bm B \bm Q_{\psi_k}^{-1} \bm \ell_{\psi_k}$ and $ \bm \Sigma_{f_{k}^0} = \bm B  \bm Q_{\psi_k}^{-1} \bm B'$. Since $\bm f_k^0$ is Gaussian, so is $\bm G_\gamma' \bm f_k^0$, and consequently $[\bm f_k^0 | \bm G_\gamma' \bm f_k^0 = \bm 0] \sim N\big(\bm \mu_{f_{k}}, \bm \Sigma_{f_{k}}\big)$ with $\bm \mu_{f_{k}} = \bm \mu_{f_{k}^0} -  \bm \Sigma_{f_{k}^0}\bm G_\gamma(\bm G_\gamma'  \bm \Sigma_{f_{k}^0} \bm G_\gamma)^{-1}\bm G_\gamma' \bm \mu_{f_{k}^0} $ and $\bm \Sigma_{f_{k}} =  \bm \Sigma_{f_{k}^0} -  \bm \Sigma_{f_{k}^0} \bm G_\gamma (\bm G_\gamma'  \bm \Sigma_{f_{k}^0}\bm G_\gamma)^{-1}  \bm \Sigma_{f_{k}^0}$. Direct calculation shows that the shifted term $\bm f_k$ has the same mean and covariance. Pre-multiplying a random variable with distribution $[\bm f_k^0 | \bm G_\gamma' \bm f_k^0 = \bm 0] $ by $\bm G_\gamma'$ produces a random variable with degenerate distribution at $\bm 0_L$.
\end{proof}

\begin{corollary} 
Under model \eqref{like} and subject to $\bm G_\gamma'\bm f_k = \bm 0_L$ for all $k$ and $\bm G_\gamma'\bm G_\gamma = \bm I_L$, the likelihood for the parametric factors is proportional to $p(\bm y \mid  \gamma , \{\bm \alpha_i\}, \bm F,  \{\bm \beta_i\}) \propto  p_{0, G}(\bm y \mid  \gamma ,\{\bm \alpha_i\})$ where $p_{0,G}$ is the likelihood defined by $[\bm g_{\ell; \gamma}'\bm y_i \mid -] \stackrel{indep}{\sim} N(\alpha_{\ell,i} , \sigma_\epsilon^2)$ for $\ell=1,\ldots,L$.
\end{corollary}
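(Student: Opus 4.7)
}
The plan is to start from the factorization already established in Lemma~\ref{like-decomp} and push through the simplifications that are unlocked by the orthonormality condition $\bm G_\gamma'\bm G_\gamma = \bm I_L$. From Lemma~\ref{like-decomp} the likelihood splits as $p(\bm y \mid -) = p_0(\bm y \mid \gamma, \{\bm \alpha_i\}) \, p_1(\bm y \mid \bm F, \{\bm \beta_i\})$, with the parametric factor
\[
p_0(\bm y \mid \gamma, \{\bm\alpha_i\}) \propto \exp\Big\{-\sum_{i=1}^n \big(\|\bm G_\gamma\bm\alpha_i\|^2 - 2\bm\alpha_i'\bm G_\gamma'\bm y_i\big)/(2\sigma_\epsilon^2)\Big\},
\]
where the proportionality constant absorbs terms that depend neither on $\gamma$ nor on $\{\bm\alpha_i\}$ through $\bm G_\gamma$ and $\bm\alpha_i$ simultaneously (in particular $\|\bm y_i\|^2$ terms).

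Next, I would invoke $\bm G_\gamma'\bm G_\gamma = \bm I_L$ to rewrite $\|\bm G_\gamma \bm\alpha_i\|^2 = \bm\alpha_i'\bm\alpha_i = \sum_{\ell=1}^L \alpha_{\ell,i}^2$ and to expand $\bm\alpha_i'\bm G_\gamma'\bm y_i = \sum_{\ell=1}^L \alpha_{\ell,i}\,\bm g_{\ell;\gamma}'\bm y_i$. The exponent inside $p_0$ then becomes a sum of the univariate quadratics $\alpha_{\ell,i}^2 - 2\alpha_{\ell,i}\,\bm g_{\ell;\gamma}'\bm y_i$ over $i$ and $\ell$, which immediately decouples the parametric factors across both indices.

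I would then complete the square in $\alpha_{\ell,i}$, using the identity $\alpha_{\ell,i}^2 - 2\alpha_{\ell,i}\,\bm g_{\ell;\gamma}'\bm y_i = (\alpha_{\ell,i} - \bm g_{\ell;\gamma}'\bm y_i)^2 - (\bm g_{\ell;\gamma}'\bm y_i)^2$. The leftover squared terms $(\bm g_{\ell;\gamma}'\bm y_i)^2$ depend only on $\bm y$ and $\gamma$ (not on $\{\alpha_{\ell,i}\}$), so they can be absorbed into the proportionality constant in $p_0$. What remains is exactly
\[
p_0(\bm y \mid \gamma, \{\bm\alpha_i\}) \propto \prod_{i=1}^n\prod_{\ell=1}^L \exp\big\{-(\bm g_{\ell;\gamma}'\bm y_i - \alpha_{\ell,i})^2/(2\sigma_\epsilon^2)\big\},
\]
which is (up to the missing Gaussian normalizing constants, themselves free of $\{\alpha_{\ell,i}\}$) the likelihood defined by $[\bm g_{\ell;\gamma}'\bm y_i \mid -] \stackrel{indep}{\sim} N(\alpha_{\ell,i}, \sigma_\epsilon^2)$, yielding $p_{0,G}$.

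There is no genuinely hard step here; the main thing to be careful about is bookkeeping the proportionality. Specifically, the constant absorbs both the Gaussian normalizers and the $\gamma$-dependent-but-$\alpha$-free residuals $(\bm g_{\ell;\gamma}'\bm y_i)^2$, which is legitimate because the statement only asserts proportionality as a likelihood in $\{\bm\alpha_i\}$. The nonparametric orthogonality $\bm G_\gamma'\bm f_k = \bm 0_L$ is used implicitly through Lemma~\ref{like-decomp} to justify treating $p_0$ in isolation from the nonparametric term, so the corollary follows with little additional work beyond the orthonormality reduction.
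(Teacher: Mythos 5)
Your proposal is correct and follows essentially the same route as the paper: invoke the factorization of Lemma~\ref{like-decomp}, use $\bm G_\gamma'\bm G_\gamma = \bm I_L$ to reduce $\Vert \bm G_\gamma\bm\alpha_i\Vert^2$ to $\Vert\bm\alpha_i\Vert^2$, and recognize the resulting decoupled quadratic as the Gaussian likelihood $p_{0,G}$ up to $\alpha$-free factors. Your explicit completion of the square and the remark that proportionality is taken as a function of $\{\bm\alpha_i\}$ only make more careful what the paper leaves implicit in its final ``$\propto p_{0,G}$'' step.
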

\begin{proof}
From the proof of Lemma~\ref{like-decomp} and considering terms involving only the parametric factors, we have 
\begin{align*}
p(\bm y | -) &\propto \exp\Big\{-\sum_{i=1}^n \big(\Vert \bm G_\gamma \bm \alpha_i \Vert^2- 2 \bm \alpha_i \bm G_\gamma ' \bm y_i\big)/(2\sigma_\epsilon^2)\Big\} \\
&= \exp\Big\{-\sum_{i=1}^n \big(\Vert  \bm \alpha_i \Vert^2- 2 \bm \alpha_i' \bm G_\gamma ' \bm y_i\big)/(2\sigma_\epsilon^2)\Big\} \\
&\propto  p_{0, G}(\bm y \mid  \gamma ,\{\bm \alpha_i\})
\end{align*}
using the orthonormality  $\bm G_\gamma'\bm G_\gamma = \bm I_L$.  
\end{proof}

 \begin{corollary}
Under model \eqref{like} and subject to $\bm G_\gamma'\bm f_k = \bm 0_L$ for all $k$ and $\bm F'\bm F = \bm I_\infty$, the likelihood for the nonparametric factors is proportional to $p(\bm y \mid  \gamma , \{\bm \alpha_i\}, \bm F,  \{\bm \beta_i\})  \propto  p_{1,F}(\bm y \mid  \bm F , \{\bm \beta_i\}) $ where $p_{1,F}$ is the likelihood defined by $[\bm f_k'\bm y_i \mid -] \stackrel{indep}{\sim} N(\beta_{k,i}, \sigma_\epsilon^2)$ for all $k$.
\end{corollary}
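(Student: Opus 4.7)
The plan mirrors the proof of the previous corollary about the parametric factors. First I would invoke the likelihood factorization from Lemma~\ref{like-decomp}, which under the orthogonality $\bm G_\gamma'\bm f_k = \bm 0_L$ already isolates the nonparametric contribution as a factor
\[
p_1(\bm y \mid \bm F, \{\bm\beta_i\}) \propto \exp\Bigl\{-\sum_{i=1}^n \bigl(\Vert \bm F\bm\beta_i\Vert^2 - 2\bm\beta_i'\bm F'\bm y_i\bigr)/(2\sigma_\epsilon^2)\Bigr\}.
\]
All pieces of the joint likelihood that involve only the parametric factors (or only the data norms $\Vert\bm y_i\Vert^2$) may be absorbed into the proportionality constant when we view the expression as a function of $\bm F$ and $\{\bm\beta_i\}$.

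Next I would apply the orthonormality constraint $\bm F'\bm F = \bm I_\infty$, which collapses $\Vert\bm F\bm\beta_i\Vert^2 = \bm\beta_i'\bm F'\bm F\bm\beta_i = \Vert\bm\beta_i\Vert^2 = \sum_k \beta_{k,i}^2$. Likewise, the cross term decomposes as $\bm\beta_i'\bm F'\bm y_i = \sum_k \beta_{k,i}\,\bm f_k'\bm y_i$. Substituting these into the exponent and completing the square in $\beta_{k,i}$ around the pseudo-datum $\bm f_k'\bm y_i$ gives
\[
-\frac{1}{2\sigma_\epsilon^2}\sum_{i=1}^n\sum_k \bigl(\beta_{k,i}^2 - 2\beta_{k,i}\bm f_k'\bm y_i\bigr) \stackrel{c}{=} -\frac{1}{2\sigma_\epsilon^2}\sum_{i=1}^n\sum_k \bigl(\bm f_k'\bm y_i - \beta_{k,i}\bigr)^2,
\]
where the added square $(\bm f_k'\bm y_i)^2$ depends only on $\bm F$ and $\bm y$ (not on $\{\bm\beta_i\}$) and is thus constant in the relevant parameters, so it too can be folded into the proportionality constant.

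Finally, I would observe that the resulting exponent is exactly the log-likelihood corresponding to the stated conditional model $[\bm f_k'\bm y_i \mid -] \stackrel{indep}{\sim} N(\beta_{k,i}, \sigma_\epsilon^2)$ across all $i$ and $k$; this is the definition of $p_{1,F}$, so $p(\bm y\mid\gamma,\{\bm\alpha_i\},\bm F,\{\bm\beta_i\}) \propto p_{1,F}(\bm y\mid \bm F,\{\bm\beta_i\})$ as claimed. There is no real obstacle here: the argument is a direct transcription of the parametric case, with the roles of $\bm G_\gamma$/$\bm\alpha_i$ swapped for $\bm F$/$\bm\beta_i$. The only subtlety to flag is that the infinite sum $\sum_k$ is well defined under the standing truncation/shrinkage assumptions on $\{\beta_{k,i}\}$ that guarantee $\sum_k \beta_{k,i}^2 < \infty$ almost surely, so the formal manipulation of $\bm F'\bm F = \bm I_\infty$ is justified term by term.
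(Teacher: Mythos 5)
Your proposal is correct and follows exactly the route the paper takes: the paper's own proof of this corollary is a one-line reference to "the same arguments as in the proof of Corollary~\ref{cor-alpha}," and your argument is precisely that transcription, using Lemma~\ref{like-decomp} to isolate the nonparametric factor and then the orthonormality $\bm F'\bm F = \bm I_\infty$ to complete the square around the pseudo-data $\bm f_k'\bm y_i$. Your added remark about the summability of $\sum_k \beta_{k,i}^2$ is a reasonable extra precaution that the paper leaves implicit.
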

\begin{proof}
The result follows using the same arguments as in the proof of Corollary~\ref{cor-alpha}. 
\end{proof}

\begin{proposition}
For $\varepsilon>0$ and fixed $\theta_0$, let $\mathbb B_\varepsilon(\theta_0) = \{\theta_k: | \theta_k - \theta_0| < \varepsilon\}$. Prior  \eqref{csp}-\eqref{pi-prior} implies that
$\mathbb{P}(|\theta_k - \theta_0| \le \varepsilon) < \mathbb{P}(|\theta_{k+1} - \theta_0| \le \varepsilon) $
whenever $P_{slab}\{\mathbb B_\varepsilon(\theta_0)\} < P_{spike}\{\mathbb B_\varepsilon(\theta_0)\}$. 
\end{proposition}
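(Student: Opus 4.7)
The plan is to reduce the claim to a comparison of two numerical expressions obtained by marginalizing the mixture prior. First, I would start from the hierarchical structure: conditional on $\pi_k$, the event $\{|\theta_k - \theta_0| \le \varepsilon\}$ has probability $P_k\{\mathbb B_\varepsilon(\theta_0)\} = (1-\pi_k) P_{slab}\{\mathbb B_\varepsilon(\theta_0)\} + \pi_k P_{spike}\{\mathbb B_\varepsilon(\theta_0)\}$. Taking an outer expectation over $\pi_k$ and using linearity,
\[
\mathbb{P}(|\theta_k - \theta_0| \le \varepsilon) = \mathbb{E}(1-\pi_k)\, P_{slab}\{\mathbb B_\varepsilon(\theta_0)\} + \mathbb{E}(\pi_k)\, P_{spike}\{\mathbb B_\varepsilon(\theta_0)\}.
\]
So the only $k$-dependence comes through $\mathbb{E}(\pi_k)$.

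Next I would compute $\mathbb{E}(\pi_k)$ using the stick-breaking identity $1-\pi_k = \prod_{\ell=1}^{k}(1-\nu_\ell)$, which follows by a short induction on the cumulative sum $\pi_k = \sum_{h=1}^k \nu_h \prod_{\ell=1}^{h-1}(1-\nu_\ell)$ (the $k=1$ case is trivial and the inductive step is a one-line cancellation). Since the $\nu_\ell$ are iid $\mathrm{Beta}(\iota,\iota\kappa)$ with $\mathbb{E}(1-\nu_\ell) = \kappa/(1+\kappa)$, independence gives $\mathbb{E}(1-\pi_k) = \{\kappa/(1+\kappa)\}^k$, and hence
\[
\mathbb{P}(|\theta_k - \theta_0| \le \varepsilon) = P_{spike}\{\mathbb B_\varepsilon(\theta_0)\} + \left(\tfrac{\kappa}{1+\kappa}\right)^{\!k}\bigl[P_{slab}\{\mathbb B_\varepsilon(\theta_0)\} - P_{spike}\{\mathbb B_\varepsilon(\theta_0)\}\bigr].
\]

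The final step is trivial bookkeeping: $\kappa/(1+\kappa) \in (0,1)$, so the factor $\{\kappa/(1+\kappa)\}^k$ is strictly decreasing in $k$. Under the hypothesis $P_{slab}\{\mathbb B_\varepsilon(\theta_0)\} < P_{spike}\{\mathbb B_\varepsilon(\theta_0)\}$, the bracketed term is strictly negative, so replacing $k$ by $k+1$ strictly increases the right-hand side, yielding the claimed inequality.

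No step is really a serious obstacle here; the main thing to be careful about is the stick-breaking identity $1-\pi_k = \prod_{\ell=1}^k (1-\nu_\ell)$, which is what lets $\mathbb{E}(\pi_k)$ collapse to a clean geometric expression via iid-ness of the $\nu_\ell$. Once that is in hand the monotonicity falls out algebraically, with the sign condition on $P_{slab}$ versus $P_{spike}$ supplying precisely the direction of the inequality.
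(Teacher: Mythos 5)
Your proposal is correct and follows essentially the same route as the paper: condition on $\pi_k$, use linearity to isolate the $k$-dependence in $\mathbb{E}(\pi_k)$, compute $\mathbb{E}(1-\pi_k)=\{\kappa/(1+\kappa)\}^k$ via the stick-breaking telescoping identity and independence of the $\nu_\ell$, and conclude from the monotone decay of that geometric factor together with the sign condition on $P_{slab}$ versus $P_{spike}$. The only difference is cosmetic: you spell out the induction behind $1-\pi_k=\prod_{\ell=1}^k(1-\nu_\ell)$, which the paper leaves implicit.
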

\begin{proof}
The proof proceeds directly: \vspace{-5mm}
\begin{align*}
\mathbb{P}(|\theta_k - \theta_0| \le \varepsilon) &= \mathbb{E}[P_k\{\mathbb B_\varepsilon(\theta_0)\}] =   \mathbb{E}[(1-\pi_k)P_{slab}\{\mathbb B_\varepsilon(\theta_0)\} + \pi_k P_{spike}\{\mathbb B_\varepsilon(\theta_0)\}] \\
&=  \mathbb{E}\{(1-\pi_k)\}P_{slab}\{\mathbb B_\varepsilon(\theta_0)\}  + \mathbb{E}(\pi_k) P_{spike}\{\mathbb B_\varepsilon(\theta_0)\}\\
&= P_{slab}\{\mathbb B_\varepsilon(\theta_0)\} \{\kappa/(1+\kappa)\}^k + P_{spike}\{\mathbb B_\varepsilon(\theta_0)\} [1 - \{\kappa/(1+\kappa)\}^k ] \\
&= P_{spike}\{\mathbb B_\varepsilon(\theta_0)\}  + \{\kappa/(1+\kappa)\}^k [P_{slab}\{\mathbb B_\varepsilon(\theta_0)\} - P_{spike}\{\mathbb B_\varepsilon(\theta_0)\}]
\end{align*}
noting   $\mathbb{E}(\pi_k) = 1 - \{\iota \kappa/(\iota+\iota \kappa)\}^k = 1 - \{\kappa/(1+\kappa)\}^k$.  Since $\{\kappa/(1+\kappa)\}^k \in [0,1]$ is decreasing in $k$ and the remaining terms are invariant to $k$, the result follows.
\end{proof}

   \begin{corollary}
  For $\varepsilon>0$, $\mathbb{P}(|\eta_k| \le \varepsilon) < \mathbb{P}(|\eta_{k+1}| \le \varepsilon)$ whenever $v_0 < 1$.
\end{corollary}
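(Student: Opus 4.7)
The plan is to derive the marginal distribution of $\eta_k$ given $\pi_k$ (after integrating out $\theta_k$ and $\sigma_k^2$) and show that this falls into the framework of Proposition~\ref{csp-shrink} with the choice $\theta_0 = 0$, $\mathbb B_\varepsilon(0) = (-\varepsilon,\varepsilon)$. Then the only thing to verify is the scale-ordering inequality $P_{slab}\{\mathbb B_\varepsilon(0)\} < P_{spike}\{\mathbb B_\varepsilon(0)\}$ implied by $v_0<1$.

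First I would marginalize over $\sigma_k^2$ in \eqref{nmig}. Starting from $[\eta_k \mid \theta_k, \sigma_k^2] \sim N(0, \theta_k \sigma_k^2)$ with $\sigma_k^{-2} \sim \mbox{Gamma}(a_1,a_2)$, a standard normal/inverse-gamma conjugacy computation shows that $\eta_k/\sqrt{\theta_k}$ marginally follows $t_{2a_1}(0,\sqrt{a_2/a_1})$, so conditional on $\theta_k$ we have $\eta_k \sim t_{2a_1}(0,\sqrt{\theta_k a_2/a_1})$. Plugging in $\theta_k \in \{1,v_0\}$ according to the CUSP mixture $(1-\pi_k)\delta_1 + \pi_k\delta_{v_0}$, the marginal distribution is the two-component $t$-mixture stated in the paper, which is exactly of the form \eqref{csp} with $P_{slab} = t_{2a_1}(0,\sqrt{a_2/a_1})$ and $P_{spike} = t_{2a_1}(0,\sqrt{v_0 a_2/a_1})$.

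Next I would invoke Proposition~\ref{csp-shrink} with $\theta_0 = 0$. The only missing ingredient is showing that $P_{spike}\{(-\varepsilon,\varepsilon)\} > P_{slab}\{(-\varepsilon,\varepsilon)\}$. The quick argument is a scaling one: if $T \sim t_{2a_1}(0,s)$ then $T/s \sim t_{2a_1}(0,1)$, so $\mathbb{P}(|T|\le \varepsilon) = 2F_{2a_1}(\varepsilon/s) - 1$, where $F_{2a_1}$ is the standard $t$ CDF. Since $F_{2a_1}$ is strictly increasing, this probability is a strictly decreasing function of $s$. Because $v_0<1$ implies $\sqrt{v_0 a_2/a_1} < \sqrt{a_2/a_1}$, we immediately get $P_{spike}\{(-\varepsilon,\varepsilon)\} > P_{slab}\{(-\varepsilon,\varepsilon)\}$, completing the hypothesis of Proposition~\ref{csp-shrink} and hence the corollary.

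I don't expect any real obstacle; the steps are essentially bookkeeping. The only subtlety is making sure the marginalization is written so that the resulting mixture is literally an instance of \eqref{csp} (so that the hypothesis of Proposition~\ref{csp-shrink} applies verbatim, with the same $\pi_k$), and that the scaling comparison for the $t$-distribution is phrased as a strict inequality for every $\varepsilon>0$, which it is since $F_{2a_1}$ is strictly increasing on all of $\mathbb{R}$.
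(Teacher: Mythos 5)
Your proposal is correct and follows essentially the same route as the paper: marginalize $\sigma_k^2$ to obtain the two-component $t$-mixture $[\eta_k \mid \pi_k] \sim (1-\pi_k)\, t_{2a_1}(0,\sqrt{a_2/a_1}) + \pi_k\, t_{2a_1}(0,\sqrt{v_0 a_2/a_1})$, recognize it as an instance of \eqref{csp}, and apply Proposition~\ref{csp-shrink} at $\theta_0=0$ using the fact that the smaller-scale $t$ places more mass near zero when $v_0<1$. Your explicit CDF scaling argument for the strict inequality is just a more detailed writing of the paper's one-line justification.
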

\begin{proof}
Corollary~\ref{eta-shrink} follows from Proposition~\ref{csp-shrink} by observing that $[\eta_k | \pi_k] \sim (1-\pi_k) t_{2a_1}(0, \sqrt{a_2/a_1}) +\pi_k t_{2a_1}(0, \sqrt{v_0a_2/a_1})$, where $t_d(m, s)$ denotes a $t$-distribution with mean $m$, standard deviation $s$, and degrees of freedom $d$. This mixture is a special case of \eqref{csp}, and the densities of the $t$-distributions  place greater mass near zero when the scale parameter is smaller. The condition  $v_0 < 1$ ensures that the spike distribution is indeed more concentrated around zero. 
\end{proof}

\begin{proposition}
The CUSP \eqref{pi-prior} satisfies $(1 - \pi_k) = \mu_{(k)}$ where $\mu_{(1)} > \cdots > \mu_{(K)}$ are the ordered (slab) probabilities from the \cite{Teh2007} stick-breaking construction of the IBP, i.e., $\mu_{(k)} = \prod_{\ell = 1}^k \nu_\ell'$ with   $\nu_\ell' \stackrel{iid}{\sim} \mbox{Beta}(\iota \kappa, \iota).$
\end{proposition}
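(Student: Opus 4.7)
The plan is to exploit a telescoping identity in the cumulative sum defining $\pi_k$ and then invoke the elementary distributional symmetry of the Beta distribution under complementation. First I would rewrite the stick weight as a difference: since $\nu_h = 1 - (1-\nu_h)$, multiplying by $\prod_{\ell=1}^{h-1}(1-\nu_\ell)$ gives
\[
\omega_h = \nu_h \prod_{\ell=1}^{h-1}(1-\nu_\ell) = \prod_{\ell=1}^{h-1}(1-\nu_\ell) - \prod_{\ell=1}^{h}(1-\nu_\ell).
\]

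Next I would sum these differences from $h=1$ to $k$, which telescopes to
\[
\pi_k = \sum_{h=1}^{k}\omega_h = 1 - \prod_{\ell=1}^{k}(1-\nu_\ell),
\]
so that $1-\pi_k = \prod_{\ell=1}^{k}(1-\nu_\ell)$.

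Finally, I would set $\nu_\ell' = 1-\nu_\ell$. Since $\nu_\ell \stackrel{iid}{\sim} \mbox{Beta}(\iota, \iota\kappa)$, the standard fact that $1-X \sim \mbox{Beta}(b,a)$ whenever $X \sim \mbox{Beta}(a,b)$ yields $\nu_\ell' \stackrel{iid}{\sim} \mbox{Beta}(\iota\kappa, \iota)$. Substituting gives $1-\pi_k = \prod_{\ell=1}^k \nu_\ell' = \mu_{(k)}$, matching the Teh et al.\ stick-breaking representation of the IBP and yielding the claimed ordering $\mu_{(1)} > \cdots > \mu_{(K)}$ almost surely (since each $\nu_\ell' \in (0,1)$).

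There is no real obstacle here; the only subtlety is being careful about the direction of the Beta parameter swap when passing from $\nu_\ell$ to $\nu_\ell' = 1-\nu_\ell$, and about the fact that this is an equality in distribution of the \emph{entire sequence} $\{1-\pi_k\}_{k\ge 1}$ with $\{\mu_{(k)}\}_{k\ge 1}$, which follows because the transformation $\nu_\ell \mapsto 1-\nu_\ell$ is applied componentwise to an i.i.d.\ sequence and hence preserves independence across $\ell$.
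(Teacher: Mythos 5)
Your proof is correct and follows essentially the same route as the paper's: both rest on the telescoping identity $\omega_h = \prod_{\ell=1}^{h-1}(1-\nu_\ell) - \prod_{\ell=1}^{h}(1-\nu_\ell)$ and the Beta complementation $\nu_\ell' = 1-\nu_\ell \sim \mbox{Beta}(\iota\kappa,\iota)$, differing only in the order in which the substitution and the cancellation are carried out. Your added remarks on the a.s.\ strict ordering and on the preservation of independence across $\ell$ are correct but not needed beyond what the paper records.
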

\begin{proof}
Recall that  IBPs are obtained by 
establishing the conditional distributions  
$ [b_{j,k} \mid \mu_k] \stackrel{indep}{\sim}  \mbox{Bernoulli}(\mu_k) $ and  $ [\mu_k] \stackrel{iid}{\sim} \mbox{Beta}(\iota \kappa/K, \iota)$ and then integrating over $\{\mu_k\}$ with $K\rightarrow \infty$. Now
let $\nu_\ell' = (1-\nu_\ell) \stackrel{iid}{\sim}\mbox{Beta}(\iota \kappa, \iota)$, so $\omega_h = (1-\nu_h') \prod_{\ell=1}^{h-1} \nu_\ell' = \prod_{\ell=1}^{h-1} \nu_\ell' - \prod_{\ell=1}^{h} \nu_\ell'$. By cancelling terms in the cumulative summation $\pi_k = \sum_{h=1}^k \omega_h$, the result for $1-\pi_k$ follows.
\end{proof}

\begin{lemma}
	\label{lem:priorR}
	Let $\varepsilon_n\to 0$ with $\varepsilon_n^{1/K_{0n}} > \kappa/(\kappa+1)$. For the CUSP prior  \eqref{csp}-\eqref{pi-prior} and a positive constant  $C>1$, the remainder term $R_n = \sum_{k\geq  K_{0n}} \omega_k$ satisfies
	\begin{eqnarray}
		\mathbb{P}(R_{n} >\varepsilon_n) \leq \exp(-C K_{0n}).
		\label{eq:priorR}
	\end{eqnarray}
\end{lemma}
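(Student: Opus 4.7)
The plan is to reduce $R_n$ to a product of independent Beta random variables and then apply a Chernoff-style moment bound. First, by Proposition~\ref{ibp-connect} (equivalently, by induction using $1-\pi_k = 1-\pi_{k-1}-\omega_k$ and $\omega_k = \nu_k\prod_{\ell<k}(1-\nu_\ell)$), one has $1-\pi_k = \prod_{\ell=1}^{k}\nu_\ell'$ with $\nu_\ell' \stackrel{iid}{\sim} \mbox{Beta}(\iota\kappa,\iota)$. Since $\sum_{k\geq K_{0n}}\omega_k = 1 - \pi_{K_{0n}-1}$, this yields the clean representation $R_n = \prod_{\ell=1}^{K_{0n}-1}\nu_\ell'$. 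In the default specification $\iota = 1$ used throughout the paper, each $\nu_\ell'$ is $\mbox{Beta}(\kappa,1)$, for which $\mathbb{E}\{(\nu_\ell')^s\} = \kappa/(\kappa+s)$ for any $s>0$.

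Next, I would apply Markov's inequality to $R_n^s$: for any $s>0$, independence gives $\mathbb{P}(R_n > \varepsilon_n) \leq \varepsilon_n^{-s}\{\kappa/(\kappa+s)\}^{K_{0n}-1}$. Taking logarithms and writing $c_n = -\log\varepsilon_n/K_{0n}$, the exponent reads (up to an $O(1)$ term) $K_{0n}\{sc_n - \log(1 + s/\kappa)\}$. The hypothesis $\varepsilon_n^{1/K_{0n}} > \kappa/(\kappa+1)$ translates directly to $c_n < \log(1+1/\kappa)$, and in particular $\kappa c_n < 1$ since $\log(1+1/\kappa) < 1/\kappa$. Optimization over $s>0$ is straightforward: differentiating gives $s^\ast = 1/c_n - \kappa > 0$, and substitution yields the large-deviation rate $\rho_n - 1 - \log \rho_n$ with $\rho_n := \kappa c_n \in (0,1)$. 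This function is convex, vanishes at $\rho_n = 1$, is strictly positive on $(0,1)$, and grows without bound as $\rho_n \downarrow 0$, so the final bound is $\mathbb{P}(R_n > \varepsilon_n) \leq \exp\{-K_{0n}(\rho_n - 1 - \log \rho_n) + O(1)\}$, which is exponentially small in $K_{0n}$.

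The main obstacle is securing the constant $C>1$ rather than merely $C>0$: the rate $\rho_n - 1 - \log \rho_n$ exceeds unity only when $\rho_n$ is bounded sufficiently away from $1$, so the hypothesis $\varepsilon_n^{1/K_{0n}} > \kappa/(\kappa+1)$ should be read as providing a strict asymptotic margin (consistent with $\varepsilon_n \to 0$), rather than merely the strict inequality at each finite $n$. A secondary technicality is handling general $\iota > 0$: the moment $\mathbb{E}\{(\nu_\ell')^s\}$ becomes the Beta-function ratio $B(\iota\kappa+s,\iota)/B(\iota\kappa,\iota)$, but its leading logarithmic behavior matches $-\log(1+s/(\iota\kappa))$, so the optimization structure and the resulting large-deviation rate go through after routine bookkeeping.
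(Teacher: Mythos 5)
Your argument is correct and arrives at the same conclusion, but by a genuinely different (and sharper) route than the paper. The paper's proof telescopes the stick-breaking weights to write $R_n$ as a product of independent $(1-\nu_h)$ terms, computes the first moment $\mathbb{E}(R_n)=\{\kappa/(\kappa+1)\}^{K_{0n}}$, and applies Markov's inequality once; the resulting exponential rate is $\log(1+1/\kappa)+K_{0n}^{-1}\log\varepsilon_n$, which is positive under the hypothesis but can be arbitrarily close to zero when $\varepsilon_n^{1/K_{0n}}$ approaches $\kappa/(\kappa+1)$. You instead apply Markov to $R_n^{s}$ and optimize over $s$, which yields the large-deviation rate $\rho_n-1-\log\rho_n$ with $\rho_n=-\kappa K_{0n}^{-1}\log\varepsilon_n$; since the hypothesis forces $\rho_n\le\kappa\log(1+1/\kappa)<1$ and $\rho\mapsto\rho-1-\log\rho$ is decreasing on $(0,1)$, your rate is bounded away from zero uniformly in $n$, which the paper's is not. (Your representation $R_n=\prod_{\ell=1}^{K_{0n}-1}\nu_\ell'$ is also the correct indexing for $R_n=\sum_{k\ge K_{0n}}\omega_k$; the paper's product runs to $K_{0n}$, an immaterial off-by-one.) The caveat you flag---that neither the $s=1$ bound nor the optimized bound literally delivers a \emph{specific} constant $C>1$ from the stated hypothesis, only $C>0$, unless the hypothesis is read as supplying an asymptotic margin---applies verbatim to the paper's own proof, which simply asserts the final inequality ``from the assumptions on the upper bound for $\varepsilon_n$''; indeed the downstream theorem needs $C>2Ae$, so this margin must be assumed in any case. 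What the paper's route buys is brevity and an automatic treatment of general $\iota$ (only $\mathbb{E}(1-\nu_h)=\kappa/(\kappa+1)$ is needed, independent of $\iota$); what yours buys is a strictly stronger tail bound with an explicit, uniformly positive rate, at the cost of the Beta-function bookkeeping for $\iota\ne 1$ that you sketch but do not carry out.
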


\begin{proof}
	Rewriting the reminder as 
$
	R_{n} = \prod_{h=1}^{K_{0n}} (1 - \nu_h),
$
	the \emph{a priori} expectation of the remainder is
	\[
	\mathbb{E}(R_{n}) =  \prod_{h=1}^{K_{0n}} \mathbb{E}(1 - \nu_h) = \left( \frac{\kappa}{\kappa+1}\right)^{K_{0n}}
	\]
	from which we conclude that
		\[
	\mathbb{P}(R_{0n} >\varepsilon_n )  \leq \frac{1}{\varepsilon_n} \left( \frac{\kappa}{\kappa+1}\right)^{K_{0n}} = 
	 \left( \frac{1}{\varepsilon_n^{1/K_{0n}} }  \frac{\kappa}{\kappa+1}\right)^{K_{0n}} 
	\leq 
	\exp(-CK_{0n}),
	\]
	where the first inequality follows from  Markov's inequality, and the last one from the assumptions on the upper bound for $\varepsilon_n$.
\end{proof}

Before proving Theorem 1, we introduce the following lemma:
\begin{lemma}
	\label{lem:denominator}
	Let $1<r<\sqrt{n}$, $a_1>1/2$, and $a_1 \leq a_2$. Under the CUSP and NMIG in \eqref{csp}-\eqref{pi-prior} and  \eqref{nmig}, respectively, then \emph{a priori}
	\begin{eqnarray*}
	\mathbb{P}\left(|\eta_h| < \frac{r}{\sqrt{n}}\right) > \left\{1-\left(\frac{\kappa}{\kappa+1}\right)^h\right\} \left(1 - \frac{1}{1+ \frac{a_1r^2}{nv_0a_2}}\right) 
		\end{eqnarray*}
\end{lemma}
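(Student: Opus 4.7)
The plan is to condition on the NMIG mixing indicator $\theta_h$ and analyse the spike component separately. Writing
\[
\mathbb{P}(|\eta_h|<r/\sqrt n) = (1-\mathbb{E}\pi_h)\,\mathbb{P}(|\eta_h|<r/\sqrt n\mid \theta_h=1) + \mathbb{E}\pi_h\,\mathbb{P}(|\eta_h|<r/\sqrt n\mid \theta_h=v_0),
\]
and invoking $\mathbb{E}\pi_h = 1-\{\kappa/(\kappa+1)\}^h$ from the proof of Proposition~\ref{csp-shrink}, the first factor on the right-hand side of the claim appears directly. Since the slab contribution is nonnegative, I would discard it, reducing the lemma to the spike-conditional bound $\mathbb{P}(|\eta_h|<r/\sqrt n\mid \theta_h=v_0) > 1 - 1/(1+b)$ with $b = a_1 r^2/(n v_0 a_2)$.

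For this spike-conditional bound, I would condition further on $\sigma_h^2$, using $\eta_h\mid\sigma_h^2,\theta_h=v_0\sim N(0,v_0\sigma_h^2)$, and apply the elementary Gaussian inequality $\mathbb{P}(|Z|<c)\geq c^2/(1+c^2)$ for standard normal $Z$. This inequality reduces to verifying $2(1-\Phi(c))\leq 1/(1+c^2)$ for $c\geq 0$, a routine calculus check using that the two sides agree at $c=0$, both tend to zero as $c\to\infty$, and the standard normal tail decays exponentially while the right-hand side decays only polynomially. Substituting $c^2 = r^2/(nv_0\sigma_h^2)$ yields $\mathbb{P}(|\eta_h|<r/\sqrt n\mid \sigma_h^2,\theta_h=v_0)\geq r^2/(r^2+nv_0\sigma_h^2)$. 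Marginalising over the precision $\tau = \sigma_h^{-2}\sim\mathrm{Gamma}(a_1,a_2)$ reduces the lemma to the integral inequality
\[
\mathbb{E}\!\left[\frac{r^2\tau}{r^2\tau+nv_0}\right] > \frac{b}{1+b} = \frac{r^2\mathbb{E}(\tau)}{r^2\mathbb{E}(\tau)+nv_0}.
\]

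The hard part will be precisely this last step: because $x\mapsto x/(1+x)$ is concave, a naive application of Jensen's inequality points in the wrong direction and delivers only the reverse bound, so the claim cannot follow from moment calculations alone. The correct argument exploits the explicit Gamma density, for instance via the Laplace representation $(r^2\tau+nv_0)^{-1} = \int_0^\infty e^{-(r^2\tau+nv_0)t}\,dt$ combined with the Gamma moment generating function $\mathbb{E}(e^{-r^2\tau t}) = (1+r^2 t/a_2)^{-a_1}$, which reduces the claim to a one-dimensional comparison between an explicit integral in $t$ and the target algebraic expression. The hypothesis $a_1>1/2$ ensures the relevant fractional moments of $\tau$ are finite, the hypothesis $a_1\leq a_2$ fixes the sign of a remainder term after integration by parts against the Gamma kernel, and $1<r<\sqrt n$ guarantees that $b$ is large enough for the resulting inequality to be both nontrivial and strict rather than an equality in the limit.
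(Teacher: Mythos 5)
Your opening moves match the paper's: condition on the spike/slab indicator, discard the nonnegative slab contribution, and pull out $\mathbb{E}(\pi_h)=1-\{\kappa/(\kappa+1)\}^h$. The gap is in the remaining spike-conditional bound. By conditioning on $\sigma_h^2$ first and applying the Gaussian inequality $\mathbb{P}(|Z|<c)\ge c^2/(1+c^2)$, you reduce the lemma to
\[
\mathbb{E}\left[\frac{r^2\tau}{r^2\tau+nv_0}\right] \;>\; \frac{r^2\,\mathbb{E}(\tau)}{r^2\,\mathbb{E}(\tau)+nv_0},\qquad \tau\sim\mathrm{Gamma}(a_1,a_2),
\]
and this inequality is not merely hard --- it is false. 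The map $x\mapsto r^2x/(r^2x+nv_0)$ is strictly concave on $(0,\infty)$, so Jensen gives the strict \emph{reverse} inequality for any non-degenerate $\tau$; no Laplace-transform or integration-by-parts manipulation of the Gamma density can rescue a claim that contradicts Jensen. Your own diagnosis that ``Jensen points in the wrong direction'' is exactly right, but the conclusion to draw is that the reduction itself is a dead end: the pointwise Gaussian bound, once averaged over the precision, lands strictly below the target $b/(1+b)$.

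The paper avoids this by reversing the order of operations: it first marginalizes $\sigma_h^2$, so that $\eta_h$ given the spike is a Student $t$ with $2a_1$ degrees of freedom and scale $\sqrt{v_0a_2/a_1}$ (this is where $a_1>1/2$ enters, guaranteeing at least one degree of freedom), then minorizes that $t$ law near the origin by a Cauchy of the same scale, and finally applies the elementary inequality $\tfrac{2}{\pi}\arctan(t) > 1-1/(1+t^2)$ with $t^2=a_1r^2/(nv_0a_2)$. The key point is that the algebraic bound is applied to the already-marginalized distribution, not to the conditional Gaussian, so no interchange of expectation and a concave function is ever needed. To keep your structure you would need a conditional lower bound $L(\sigma_h^2)$ whose \emph{average} provably exceeds $b/(1+b)$; your particular choice of $L$ cannot, by concavity, so the proof as proposed does not close.
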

\begin{proof}
First, we consider the conditional probability 
\[ \mathbb{P}\left(|\eta_h| < \frac{r}{\sqrt{n}} \mid \pi_h \right)\] and then integrate $\pi_h$ with respect to the CUSP. 
We know that, in a neighborhood of zero, the density function of a Student $t$ random variable with degrees of freedom greater or equal to 1 is bounded below by the density of a Cauchy random variable. The  cumulative distribution function  of a Cauchy random variable is 
\[F(t) = \frac{1}{2} + \frac{1}{\pi}\arctan(t).
\]
Hence, for sufficiently large $n$ we can write
	\begin{eqnarray*}
\mathbb{P}\left(|\eta_h| < \frac{r}{\sqrt{n}}\mid \pi_h \right) & \geq & (1 - \pi_h) \frac{2}{\pi}\arctan\left(\frac{r\sqrt{a_1}}{\sqrt{a_2 n}}\right) + \pi_k \frac{2}{\pi}\arctan\left(\frac{r\sqrt{a_1}}{\sqrt{a_2 nv_0}}\right) \\
& \geq &   \frac{2\pi_h}{\pi}\arctan\left(\frac{r\sqrt{a_1}}{\sqrt{nv_0 a_2}}\right) \\
& \geq &   \pi_h \left(1 - \frac{1}{1+ \frac{a_1r^2}{nv_0a_2}}\right) 
	\end{eqnarray*}
and the last inequality holds since $2/\pi \arctan(t) > (1 - 1/(1+t^2))$ for $t \in (0,1)$.
Finally, integrating out $\pi_h$ yields
\[
	\mathbb{P}\left(|\eta_h| < \frac{r}{\sqrt{n}}\right) \geq
	\mathbb{E} \left(\pi_h\right) \left(1 - \frac{1}{1+ \frac{a_1r^2}{nv_0a_2}}\right) = \left\{1-\left(\frac{\kappa}{\kappa+1}\right)^h\right\} \left(1 - \frac{1}{1+ \frac{a_1r^2}{na_2}}\right) .
\]
\end{proof}

\begin{theorem}
Let $\varepsilon_n\to 0$ with $\varepsilon_n^{1/K_{0n}} > \kappa/(\kappa+1)$ and
	assume (C1)-(C3) and $C >2Ae$. For the CUSP  \eqref{csp}-\eqref{pi-prior} and NMIG \eqref{nmig} priors, the posterior distribution satisfies 
		\[
	\lim_{n \to \infty} \mathbb{E}_0 \left\{ \mathbb{P} \left( R_n> \varepsilon_n \mid y_1,\ldots, y_n \right) \right\} = 0.
\]
\end{theorem}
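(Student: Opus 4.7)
The plan is to follow the classical Bayesian evidence-ratio decomposition. Writing $\mathbb{P}(R_n > \varepsilon_n \mid y_{1:n}) = N_n/D_n$, where $N_n = \int_{\{R_n > \varepsilon_n\}} \prod_{i=1}^n p(y_i\mid\eta_i)\, d\Pi$ and $D_n = \int \prod_{i=1}^n p(y_i\mid\eta_i)\, d\Pi$, I would bound for any threshold $\delta_n > 0$,
\[
\mathbb{E}_0 \mathbb{P}(R_n > \varepsilon_n \mid y_{1:n}) \leq \frac{\mathbb{E}_0(N_n)}{\delta_n} + \mathbb{P}_0(D_n \leq \delta_n),
\]
and then show that both terms vanish for a suitably chosen $\delta_n$.

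The numerator bound is essentially immediate. Dividing both integrands by $\prod_i p(y_i\mid\eta_{0i})$ preserves the ratio $N_n/D_n$, and Fubini combined with the identity $\mathbb{E}_0[p(y_i\mid\eta_i)/p(y_i\mid\eta_{0i})] = 1$ (valid for any two unit-variance Gaussian densities, under $y_i \sim N(\eta_{0i},1)$) reduces $\mathbb{E}_0(N_n)$ to the prior probability $\Pi(R_n > \varepsilon_n)$. Lemma~\ref{lem:priorR} then delivers $\mathbb{E}_0(N_n) \leq \exp(-C K_{0n})$.

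The substantive work lies in the denominator. I would restrict the prior integration to the prior-concentration neighborhood $\mathcal{A}_n = \{|\eta_h - \eta_{0h}| < r/\sqrt{n} : 1 \leq h \leq n\}$ for a suitable constant $r$. On $\mathcal{A}_n$ the log-likelihood ratio against the truth equals $\sum_i (\eta_i-\eta_{0i})(y_i-\eta_{0i}) - \tfrac12 \sum_i (\eta_i-\eta_{0i})^2$, which is $O_{\mathbb{P}_0}(1)$ by Chebyshev once Condition (C2) controls the number of active terms. The prior mass on $\mathcal{A}_n$ factorizes across coordinates: for the $n - K_{0n}$ null indices, Lemma~\ref{lem:denominator} combined with Condition (C1) gives factors close to one (since $a_1 r^2/(n v_0 a_2) \gtrsim n/K_{0n}$ is large, and the $(\kappa/(\kappa+1))^h$ terms sum geometrically to $O(1)$); for the $K_{0n}$ signal indices, the continuous slab density, bounded via Condition (C3), contributes a factor of order $(r/\sqrt{n})^{K_{0n}}$ up to constants. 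Careful bookkeeping then yields $\delta_n \geq \exp(-2Ae\, K_{0n})$ with $\mathbb{P}_0$-probability tending to one.

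Combining, $\mathbb{E}_0(N_n)/\delta_n \leq \exp\{-(C - 2Ae) K_{0n}\} \to 0$ precisely because $C > 2Ae$, while $\mathbb{P}_0(D_n \leq \delta_n) \to 0$ by construction. The main technical obstacle is matching the sharp constant $2Ae$ in the denominator bound to the hypothesis $C > 2Ae$: this requires propagating $\log(1 - x) \geq -x - x^2$ style estimates through the product over all null coordinates, and exploiting the geometric decay of $(\kappa/(\kappa+1))^h$ to ensure that the cumulative loss is only $O(K_{0n})$ rather than $O(n)$. Conditions (C1)--(C3) are calibrated precisely to enable this: (C1) forces the spike precision to dominate $n^2/K_{0n}$ so that the null mass survives, (C2) constrains the sparsity so that Chebyshev gives a uniform log-likelihood bound, and (C3) keeps the slab scale under control so the signal factors do not contribute anything worse than the target $2Ae$.
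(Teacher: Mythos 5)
Your overall skeleton coincides with the paper's: the same $N_n/D_n$ evidence-ratio decomposition, the same split $\mathbb{E}_0\{\mathbb{P}(R_n>\varepsilon_n\mid y^{(n)})\} \le \mathbb{E}_0(N_n)/\delta_n + \mathbb{P}_0(D_n\le\delta_n)$, the same Fubini reduction of $\mathbb{E}_0(N_n)$ to the prior mass $\Pi(R_n>\varepsilon_n)$ controlled by Lemma~\ref{lem:priorR}, and the same use of Lemma~\ref{lem:denominator} with (C1)--(C3) to lower-bound the prior concentration entering the denominator. Your plan to establish the likelihood-ratio concentration by a direct Chebyshev computation is a legitimate substitute for the paper's citation of Lemma~5.2 of \citet{CvdV} (that lemma is proved in essentially this way); note only that with a fixed radius $r$ Chebyshev drives the exceptional probability below an arbitrary $\epsilon$ rather than to zero, which requires a final ``$\epsilon$ arbitrary'' step or, as in the paper, the growing radius $r_n^2=K_{0n}$.

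The genuine gap is in the denominator bookkeeping. You include the $K_{0n}$ signal coordinates in the small-ball event and correctly observe that each contributes a factor of order $r/\sqrt{n}$ from the slab density, hence a total factor $(r/\sqrt{n})^{K_{0n}} = \exp\{-K_{0n}(\tfrac12\log n - \log r)\}$. This term alone forces $-\log \delta_n \gtrsim \tfrac12 K_{0n}\log n$, which is of strictly larger order than $K_{0n}$; the claimed conclusion $\delta_n \ge \exp(-2Ae\,K_{0n})$ therefore cannot hold for large $n$, and the ratio $\mathbb{E}_0(N_n)/\delta_n \le \exp\{-CK_{0n} + \tfrac12 K_{0n}\log n + O(K_{0n})\}$ diverges for any fixed $C$ --- no amount of care with $\log(1-x)$ expansions or with the geometric decay of $(\kappa/(\kappa+1))^h$ removes a $\log n$ from the exponent. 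The paper never pays this price: it takes $r_n^2 = K_{0n}$ and defines the high-probability event $A_n$ so that the denominator is bounded below by $e^{-r_n^2}\,\mathbb{P}(\Vert\eta_R\Vert\le r_n)$ with $R$ the \emph{null} block only, whence $-\log D_n \le K_{0n} + (AK_{0n}+1)\log(2e)$ on $A_n$ and the hypothesis $C>2Ae>1+A\log(2e)$ closes the argument. To repair your version you must either restrict the prior-concentration requirement to the null coordinates in the same way, or strengthen the numerator bound to something like $\exp(-cK_{0n}\log n)$; neither is delivered by the steps you describe.
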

\begin{proof}
Let $y^{(n)} = (y_1,\ldots, y_n)'$ denote the observed data. For $\eta \in \mathbb{R}^{n}$, let $f_\eta(\cdot)$ denote the probability density function of a $N(\eta,I_n)$ distribution and $f_{\eta_i}(\cdot)$ denote the related univariate marginal. Let $f_0(\cdot)$ be the true density, i.e., a Gaussian distribution with mean $\eta_0$ and identity covariance. Recall that $\eta_{0i} \neq 0$ for $i\leq K_{0n}$ while the remaining  $\eta_{0i}$ for $i> K_{0n}$ are all null. Let $\Pi(\cdot)$ the CUSP prior measure on the $\eta$ parameters.
Let $E_n$ be the target event, 
\[
E_n = \{R_{K_{n}}> \varepsilon_n  \}.
\]
Following \citet{CvdV} and \citet{rock}, we write
\begin{equation}
\mathbb{P} \left( E_n \mid {y}^{(n)} \right)  = \frac{
\int_{E_n} \prod_{i=1}^n \frac{f_{\eta_i}(y_i)}{f_{0i}(y_i)} d\Pi(\eta_i)
}{
\int \prod_{i=1}^n \frac{f_{\eta_i}(y_i)}{f_{0i}(y_i)} d\Pi(\eta_i)
} = \frac{N_n}{D_n}.
\label{eq:NoverD}
\end{equation}
We introduce an event $A_n$  with large probability under the true data generating process. Specifically, let $R = \{h=K_{n0}+1, \dots, n\}$ and define 
\[
A_n := \{ D_n \geq e^{-r_n^2} \mathbb{P}(||\eta_R|| \leq r_n)\}.
\]
If we decompose the probability in \eqref{eq:NoverD} in the sum of two complementary conditional probabilities (conditioning on $A_n$ and its complement $A_n^c$) we can write
\[
\mathbb{E}_0 \left[ \mathbb{P} \left( E_n\mid {y}^{(n)} \right) \right]  \leq 
\mathbb{E}_0 \left[ \mathbb{P} \left( E_n\mid {y}^{(n)} \right) \mathbb{I}_{A_n} \right]+ \mathbb{P}_0(A_n^c)
\]
Now thanks to Lemma 5.2 of \citet{CvdV} the second summand has negligible probability for increasing $n$: $\mathbb{P}_0(A_n^c) \leq \exp\{-r^2_n\}$. Applying \eqref{eq:NoverD}, we have
\[
\mathbb{E}_0 \left[ \mathbb{P} \left( E_n\mid {y}^{(n)} \right) \right]  \leq 
\frac{
	\mathbb{P}(E_n)
}{
	 e^{-r_n^2} \mathbb{P}(||\eta_R|| \leq r_n)} + \exp\{-r^2_n\},
\]
where $\mathbb{P}(E_n)$ is simply the prior probability. 
We can now use Lemma  \ref{lem:priorR}  to bound the numerator and Lemma \ref{lem:denominator} to bound the denominator. Specifically, we have 
$\mathbb{P}(||\eta|| \leq r_n) \geq \mathbb{P}(|\eta_{K_{0n}}| \leq r_n/\sqrt{n})^{n-K_{n0}}$ and hence, Lemma \ref{lem:denominator} yields 
\[
\mathbb{P}(||\eta_R|| \leq r_n) \geq \left(1- \left(\frac{\kappa}{\kappa+1}\right)^{K_{0n}}\right)^{n-K_{n0}} \left(1 - \frac{1}{1+ \frac{a_1r_n^2}{na_2v_0}}\right)^{n-K_{n0}} .
\]
Now, we have $\left(\frac{\kappa}{\kappa+1}\right)^{K_{0n}}< A K_{0n}/n$ for high $n$ and a constant $A>1/2$. Thus, choosing $r^2_n = K_{0n}$ and considering (C1),  we can write
\[
\mathbb{P}(||\eta_R|| \leq r_n) \geq \left(1- \frac{A K_{0n}}{n}\right)^{n} \left(1 - \frac{1}{n}\right)^{n} \geq \left(\frac{1}{2e} \right)^{AK_{0n}+1},
\]
where the last inequality follows from $(1 - x)^{1/x} > 1/(2e)$ for $0 < x < 0.5$. Finally, using  Lemma \ref{lem:priorR}, we get 
\[
\mathbb{E}_0 \left[ \mathbb{P} \left( E_n\mid {y}^{(n)} \right) \right]  \leq 
2\exp\{-K_{0n} [C-1-A\log(2e)]+1 \} + \exp\{-K_{0n}\}.
\]
Then, since $C > 2Ae > 1 + A \log(2e)$ for $A > 1/2$, we conclude $\mathbb{E}_0\{\mathbb{P} \left( E_n \mid {y}^{(n)} \right) \} \to 0$.
\end{proof}

\begin{proposition} 
For $\varepsilon>0$, $\mathbb{P}(|\beta_{k,i}| \le \varepsilon) < \mathbb{P}(|\beta_{k+1,i}| \le \varepsilon)$ whenever $v_0 < 1$.
\end{proposition}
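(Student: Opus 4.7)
The plan is to exploit the multiplicative structure $\beta_{k,i}=\eta_k\xi_{k,i}$ together with the marginal iid-ness of $\{\xi_{k,i}\}$ to reduce the claim to Corollary~\ref{eta-shrink}, which already gives the desired ordering for the $\eta_k$'s.

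First I would observe that after marginalizing out the sign variables $\{m_{\xi_{k,i}}\}$, the expansion variables $\{\xi_{k,i}\}$ are iid across $k$ (and $i$), each distributed as an equal mixture $\tfrac12 N(1,1)+\tfrac12 N(-1,1)$. Moreover, $\xi_{k,i}$ is independent of $\eta_k$ by construction. Consequently, substituting the dummy index $k\mapsto k{+}1$ in the second factor does not change the distribution:
\begin{equation*}
\mathbb{P}(|\beta_{k,i}|\le\varepsilon)=\mathbb{P}(|\eta_k\xi_{k,i}|\le\varepsilon)=\mathbb{P}(|\eta_k\xi_{k+1,i}|\le\varepsilon).
\end{equation*}

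Next I would condition on $\xi_{k+1,i}$. On the event $\{\xi_{k+1,i}=s\}$ with $s\ne 0$, the event $\{|\eta_k\xi_{k+1,i}|\le\varepsilon\}$ is precisely $\{|\eta_k|\le\varepsilon/|s|\}$, and by independence of $\eta_k$ and $\xi_{k+1,i}$ this conditional probability equals $\mathbb{P}(|\eta_k|\le\varepsilon/|s|)$. Applying Corollary~\ref{eta-shrink} with threshold $\varepsilon'=\varepsilon/|s|>0$ (which requires only $v_0<1$), we get
\begin{equation*}
\mathbb{P}\bigl(|\eta_k\xi_{k+1,i}|\le\varepsilon\bigm|\xi_{k+1,i}=s\bigr)<\mathbb{P}\bigl(|\eta_{k+1}\xi_{k+1,i}|\le\varepsilon\bigm|\xi_{k+1,i}=s\bigr)
\end{equation*}
for every $s\ne 0$. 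The event $\{\xi_{k+1,i}=0\}$ has probability zero since $\xi_{k+1,i}$ is an absolutely continuous mixture of Gaussians, so it contributes nothing after integration. Integrating over $\xi_{k+1,i}$ by the tower property yields the strict inequality $\mathbb{P}(|\eta_k\xi_{k+1,i}|\le\varepsilon)<\mathbb{P}(|\eta_{k+1}\xi_{k+1,i}|\le\varepsilon)=\mathbb{P}(|\beta_{k+1,i}|\le\varepsilon)$, and combining with the first display completes the argument.

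The main subtlety is making sure the inequality in Corollary~\ref{eta-shrink} passes through the integral as a \emph{strict} inequality and not merely as ``$\le$''. This is handled by noting that the gap in Corollary~\ref{eta-shrink} is strictly positive for every fixed threshold (it is of the form $\{\kappa/(1+\kappa)\}^k$ times a strictly positive quantity that depends on $\varepsilon/|s|$), and $\xi_{k+1,i}$ puts positive mass on a set of $s$ where this gap is nonzero, so the integrated inequality is strict. The only structural requirements used are the marginal iid property of $\{\xi_{k,i}\}$ and independence from $\{\eta_k\}$, as remarked in the statement; $v_0<1$ is inherited through Corollary~\ref{eta-shrink}.
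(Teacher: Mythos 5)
Your proposal is correct and follows essentially the same route as the paper's proof: reindex $\xi_{k,i}\mapsto\xi_{k+1,i}$ using the marginal iid property, condition on $\xi_{k+1,i}$, apply Corollary~\ref{eta-shrink} to the conditional probability, and integrate. Your treatment is in fact slightly more careful than the paper's, which does not explicitly address the null event $\{\xi_{k+1,i}=0\}$ or verify that strictness survives the integration.
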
 
\begin{proof}
Since  $\{\xi_{k,i}\}$ are iid (upon marginalization over $\{m_{\xi_{k,i}}\}$), it follows that $\mathbb{P}(|\beta_{k,i}| \le \varepsilon) = \mathbb{P}(|\eta_k\xi_{k,i}| \le \varepsilon) = \mathbb{P}(|\eta_k\xi_{k+1,i}| \le \varepsilon)$. Then we proceed directly: 
\begin{align*}
\mathbb{P}(|\beta_{k,i}| \le \varepsilon) &= \mathbb{P}(|\eta_k\xi_{k+1,i}| \le \varepsilon) = \mathbb{E}\{ \mathbb{I}(|\eta_k| |\xi_{k+1,i}| \le \varepsilon)\} \\
&= \mathbb{E}\big[  \mathbb{E}\big\{ \mathbb{I}(|\eta_k| |\xi_{k+1,i}| \le \varepsilon) \big\vert \xi_{k+1,i} \big\} \big] \\
&= \mathbb{E}\big[  \mathbb{P}\big(|\eta_k| |\xi_{k+1,i}|  \le  \varepsilon  \big\vert \xi_{k+1,i}\big)\big] \\ 
& < 
 \mathbb{E}\big[  \mathbb{P}\big(|\eta_{k+1}| |\xi_{k+1,i}|  \le  \varepsilon  \big\vert \xi_{k+1,i}\big)\big] \\
 &= \mathbb{P}(|\beta_{k+1,i}| \le \varepsilon),
\end{align*}
where the inequality follows from Corollary~\ref{eta-shrink}.
\end{proof}

\begin{proposition} 
Let $\theta^{(K)} =  \{\theta_k\}_{k=1}^K$ denote the sequence $\{\theta_k\}_{k=1}^\infty$ truncated at $K$. For $0 < v_0 < \varepsilon < 1$, we have $\mathbb{P}\left\{d_\infty(\theta, \theta^{(K)}) > \varepsilon\right\}  \le \kappa \{\kappa/(1+\kappa)\}^K$. 
\end{proposition}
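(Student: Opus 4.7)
The plan is to reduce the statement to a direct application of Proposition~\ref{csp-shrink} combined with a geometric series bound. The key realization is that under the NMIG prior \eqref{nmig} the slab and spike components are the degenerate measures $P_{slab} = \delta_1$ and $P_{spike} = \delta_{v_0}$, so each $\theta_k$ takes only two values. Consequently, for the condition $0 < v_0 < \varepsilon < 1$, the events $\{|\theta_k| > \varepsilon\}$ are exactly the events $\{\theta_k = 1\}$, i.e., that the $k$th component landed on the slab rather than the spike.

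First I would evaluate the probability $P_{slab}\{\mathbb{B}_\varepsilon(0)\}$ and $P_{spike}\{\mathbb{B}_\varepsilon(0)\}$ using the degenerate form. Since $\varepsilon < 1$, the slab mass $\delta_1$ assigns zero probability to $\mathbb{B}_\varepsilon(0)$; since $v_0 < \varepsilon$, the spike mass $\delta_{v_0}$ assigns full probability to $\mathbb{B}_\varepsilon(0)$. Plugging these values into the explicit expression derived inside the proof of Proposition~\ref{csp-shrink} gives
\begin{equation*}
\mathbb{P}(|\theta_k| \le \varepsilon) = 1 - \{\kappa/(1+\kappa)\}^k,
\end{equation*}
so $\mathbb{P}(|\theta_k| > \varepsilon) = \{\kappa/(1+\kappa)\}^k$.

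Next I would interpret $d_\infty(\theta,\theta^{(K)})$ as $\sup_{k > K}|\theta_k|$, since $\theta$ and its truncation agree on the first $K$ coordinates. A simple union bound then gives
\begin{equation*}
\mathbb{P}\left\{d_\infty(\theta,\theta^{(K)}) > \varepsilon\right\} = \mathbb{P}\left\{\sup_{k>K}|\theta_k|>\varepsilon\right\} \le \sum_{k>K} \mathbb{P}(|\theta_k|>\varepsilon) = \sum_{k>K} \left\{\frac{\kappa}{1+\kappa}\right\}^k.
\end{equation*}
Summing the geometric tail from $k=K+1$ onward yields $\{\kappa/(1+\kappa)\}^{K+1}\cdot(1+\kappa) = \kappa\{\kappa/(1+\kappa)\}^K$, which is the claimed bound.

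There is no substantive obstacle here; the only care needed is in pinning down the degenerate measures and confirming that the interval condition $0 < v_0 < \varepsilon < 1$ simultaneously nullifies the slab mass on $\mathbb{B}_\varepsilon(0)$ and captures the entire spike mass. The result is essentially a corollary of Proposition~\ref{csp-shrink} together with countable subadditivity.
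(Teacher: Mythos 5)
Your proposal is correct and follows essentially the same route as the paper's own proof: identifying the degenerate measures $P_{slab}=\delta_1$ and $P_{spike}=\delta_{v_0}$, invoking the explicit expression from the proof of Proposition~\ref{csp-shrink} to get $\mathbb{P}(|\theta_k|>\varepsilon)=\{\kappa/(1+\kappa)\}^k$, and then applying a union bound and geometric tail sum. Nothing to add.
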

\begin{proof}
For the prior \eqref{nmig}, note that $P_{slab} = \delta_1$ and $P_{spike} = \delta_{v_0}$, so $P_{slab}\{\bar{\mathbb B}_\varepsilon(0)\} = 1$ if $\varepsilon < 1$ and  $P_{spike}\{\bar{\mathbb B}_\varepsilon(0)\} = 0$ if $\varepsilon > v_0$, where $\bar{\mathbb B}_\varepsilon(0)$ denotes the complement of $\mathbb B_\varepsilon(0)$. Using the proof of Proposition~\ref{csp-shrink}, we have
$ \mathbb{P}\left\{d_\infty(\theta, \theta^{(K)}) > \varepsilon\right\}  = \mathbb{P}\left\{ \sup_{k > K} |\theta_k| > \varepsilon\right\} 
\le \sum_{k > K} \mathbb{P}\left(  |\theta_k| > \varepsilon\right)= \sum_{k > K} \{\kappa/(1+\kappa)\}^k = \kappa \{\kappa/(1+\kappa)\}^K$. 
\end{proof}

\section{MCMC algorithm}\label{a-mcmc}

We outline the Gibbs sampling algorithm for the semiparametric functional factor model (SFFM). Specifically, we present full conditional updates for the model with the following priors on the parametric components:
$$
[ \alpha_{\ell,i} | \sigma_{\alpha_\ell}] \stackrel{indep}{\sim} N(0, \sigma_{\alpha_\ell}^2), \quad \sigma_{\alpha_\ell} \stackrel{iid}{\sim} C^+(0,1), \quad p(\sigma_\epsilon^2) \propto 1/\sigma_\epsilon^2,
$$
and assume a generic prior $p(\gamma)$ for the nonlinear parameter $\gamma$ (if unknown). The sampling step for $\gamma$ is omitted if there are no unknown nonlinear parameters.

\subsection*{Gibbs sampling algorithm:}
\begin{enumerate}

\item {\bf Imputation:} for any unobserved $\tau^*$ for functional observation $i$, sample $$[y_i(\tau^*) | -] \stackrel{indep}{\sim} N\Big(\sum_{\ell=1}^L g_\ell(\tau^*; \gamma) \alpha_{\ell,i} + \sum_{k=1}^K f_k(\tau^*) \beta_{k,i}, \sigma_\epsilon^2\Big);$$

\item {\bf Nonlinear parameter:}  if $\gamma$ is unknown, sample  $$
p(\gamma | -) \propto p(\gamma) \exp(-\frac{1}{2} \sum_{i=1}^n || \bm y_i - \bm G_\gamma \bm \alpha_i - \bm F \bm \beta_i||^2/ \sigma_\epsilon^2 )$$ using a slice sampler \citep{neal2003slice};

\item {\bf Nonparametric parameters:} for $k=1,\ldots,K$ (in random order),
\begin{enumerate}
\item Sample the unconstrained coefficients 
$$\bm\psi_k^0 \sim N\left(\bm Q_{\psi_k}^{-1} \bm \ell_{\psi_k}, \bm Q_{\psi_k}^{-1}\right)$$  where
$\bm Q_{\psi_k} =  \sigma_\epsilon^{-2}(\bm B'\bm B)\sum_{i=1}^n \beta_{k,i}^2 +   \lambda_{f_k} \bm \Omega $ and $\bm \ell_{\psi_k} =   \sigma_\epsilon^{-2}\bm B' \sum_{i=1}^n\big\{\beta_{k,i} \big( \bm y_i  - \bm G_\gamma \bm\alpha_i - \sum_{\ell \ne k} \bm f_\ell \beta_{\ell,i}\big)\big\}$ (see \citealp{kowal2019bayesianfosr} for efficient techniques);

\item Compute the constrained vector 
$$
\bm \psi_k = \bm \psi_k^0 - \bm Q_{\psi_k}^{-1} \bm C_k' \left(\bm C_k\bm Q_{\psi_k}^{-1}\bm C_k'\right)^{-1}\bm C_k \bm \psi_k^0
$$
where $\bm C_k = (\bm G_\gamma, \bm f_1, \ldots, \bm f_{k-1}, \bm f_{k+1}, \ldots, \bm f_K)' \bm B $;

\item Normalize: $\beta_{k,i} \rightarrow \beta_{k,i} \Vert \bm B \bm \psi_k\Vert$, $\bm \psi_k \rightarrow \bm \psi_k \Vert \bm B \bm \psi_k\Vert$, and $\bm f_k =  \bm B \bm \psi_k$;

\item Sample the smoothing parameter
$$
[\lambda_{f_k} | -] \sim \mbox{Gamma}\big( (J + 1)/2, \bm \psi_k' \bm \Omega \bm \psi_k/2\big) \mbox{ truncated to } (10^{-8}, \infty)
$$
where $J$ is the number of columns of $\bm B$;

\end{enumerate}

\item {\bf Ordered spike-and-slab parameters:} let  $y_{k,i}^F = \bm f_k'\bm y_i$ for $i=1,\ldots,n$ and $k=1,\ldots,K$;

\begin{enumerate}
\item Sample $[m_{\xi_{k,i}} \vert -]$ from $\{-1,1\}$ with $\mathbb{P}(m_{\xi_{h,i}} = 1 \vert -) = 1/\{1 + \exp(-2\xi_{h,i})\}$; 

\item Sample $[\xi_{k,i} \vert -] \sim N(Q_{\xi_{k,i}}^{-1} \ell_{\xi_{k,i}},   Q_{\xi_{k,i}}^{-1})$ where $Q_{\xi_{k,i}} = \eta_k^2/\sigma_\epsilon^2 + 1$ and $\ell_{\xi_{k,i}} = \eta_k y_{k,i}^F/\sigma_\epsilon^2 + m_{\xi_{k,i}} $;

\item Sample $[\eta_{k} \vert -] \sim N(Q_{\eta_k}^{-1} \ell_{\eta_k},   Q_{\eta_k}^{-1} )$ where $Q_{\eta_k} = \sum_{i=1}^n \xi_{k,i}^2/\sigma_\epsilon^2 + (\theta_k \sigma_k^2)^{-1}$ and $\ell_{\eta_k} = \sum_{i=1}^n \xi_{k,i} y_{k,i}^F/\sigma_\epsilon^2$;

\item Rescale $\eta_k \rightarrow (\sum_{i=1}^n \vert\xi_{k,i}\vert/n) \eta_k$
 and $\bm \xi_k \rightarrow (n/\sum_{i=1}^n \vert\xi_{k,i}\vert) \bm \xi_k$ 
 and update  $\beta_{k,i} = \xi_{k,i}\eta_k$;

\item Sample $[\sigma_{k}^{-2} | -] \sim \mbox{Gamma}\left\{a_1 + 1/2, a_2 + \eta_k^2/(2\theta_k)\right\}$; 

\item Sample  $[\nu_k | -] \sim \mbox{Beta}(1 + \sum_{h=1}^K \mathbb{I}\{z_h = k\}, \kappa + \sum_{h=1}^K \mathbb{I}\{z_h > k\})$ for $k=1,\ldots,K-1$ 
and update $ \omega_h = \nu_h \prod_{\ell=1}^{h-1} (1-\nu_\ell)$ and $\pi_k = \sum_{h=1}^k \omega_h$;

\item Sample $[\kappa | -] \sim \mbox{Gamma}\{a_{\kappa}+ K-1, b_{\kappa} - \sum_{k=1}^{K-1} \log(1 - \nu_k)\}$;

\item Sample $[z_k | -]$ from
$$
\mathbb{P}(z_k=h | -) 
\propto 
\begin{cases}
\omega_h t_{2a_1}( \eta_k; 0, \sqrt{v_0a_2/a_1})&  h \le k \\
\omega_h t_{2a_1}( \eta_k; 0, \sqrt{a_2/a_1}) & h > k 
\end{cases}
$$
where $t_d(x; m, s)$ is the density of the $t$-distribution evaluated at $x$ with mean $m$, standard deviation $s$, and degrees of freedom $d$;

\item {\bf Update} $\theta_k =  1$ if $z_k >k$ and $\theta_k = v_0$ if $z_k \le k$.  
\end{enumerate}

\item {\bf Parametric linear coefficients:} sample  $$[\bm \alpha_i | -] \stackrel{indep}{\sim} N\big(\bm Q_{\alpha_i}^{-1} \bm \ell_{\alpha_i}, \bm Q_{\alpha_i}^{-1}\big), \quad i=1,\ldots,n$$ where  $\bm Q_{\alpha_i} = \sigma_\epsilon^{-2} \bm I_L + \mbox{diag}(\{\sigma_{\alpha_\ell}^{-2}\}_{\ell=1}^L)$ and $\bm \ell_{\alpha_i} = \sigma_\epsilon^{-2} \bm G_\gamma'\bm y_i$;

\item {\bf Parametric variance:} using a parameter-expansion of the half-Cauchy distribution \citep{wand2011mean}, 
\begin{enumerate}
\item Sample $[\sigma_{\alpha_\ell}^{-2} | -,  \xi_{\sigma_{\alpha_\ell}}] \stackrel{indep}{\sim} \mbox{Gamma}\big( (n+1)/2, \sum_{i=1}^n \alpha_{\ell,i}^2/2 + \xi_{\sigma_{\alpha_\ell}}\big)$ for $\ell = 1,\ldots,L$;
\item Sample $[\xi_{\sigma_{\alpha_\ell}} | \sigma_{\alpha_\ell}] \stackrel{indep}{\sim} \mbox{Gamma}\big(1, \sigma_{\alpha_\ell}^{-2} + 1\big)$ for $\ell = 1,\ldots,L$;
\end{enumerate}

\item {\bf Observation error variance:} sample 
$$[\sigma_\epsilon^{-2} | - ] \sim \mbox{Gamma}\Big(nm/2, \sum_{i=1}^n ||\bm y_i - \bm G_\gamma \bm\alpha_i - \bm F \bm \beta_i||^2/2\Big).$$

 \end{enumerate}

\begin{remark}
Recall that we orthogonalize $\bm G_\gamma'\bm G_\gamma = \bm I_L$ using a QR decomposition $\bm G_\gamma^0 = \bm Q_\gamma \bm R_\gamma$   of the initial basis matrix   $\bm G_{\gamma}^0 = (\bm g_{1; \gamma},\ldots, \bm g_{L; \gamma})$, and set  $\bm G_\gamma = \bm Q_\gamma$. When $\gamma$ is unknown and endowed with a prior distribution, the QR decomposition is incorporated into the likelihood evaluations of \eqref{like} for posterior sampling of $\gamma$. In addition, the parametric factors may be recovered on the original scale by setting $\bm \alpha_i^0 = \bm R_\gamma^{-1} \bm \alpha_i$, which can be computed draw by draw within the MCMC sampler.

\end{remark}

\subsection*{Hierarchical prior on the nonlinear parameter}
For the pinch force data (Section~\ref{pinch}), we include the hierarchical prior 
$$
\gamma_i \stackrel{iid}{\sim} N(\mu_\gamma, \sigma_\gamma^2), \quad 
\mu_\gamma \sim N(0, 10), \quad 
\sigma_\gamma \sim C^+(0,1)$$
which requires minor modifications to the Gibbs sampling algorithm. The nonlinear parameter is still sampled using the slice sampler \citep{neal2003slice}, but with the modified likelihood
$$
p(\gamma_i | -) \propto p(\gamma_i | \mu_\gamma, \sigma_\gamma^2) \exp(-\frac{1}{2} || \bm y_i - \bm G_\gamma \bm \alpha_i - \bm F \bm \beta_i||^2/ \sigma_\epsilon^2 )
$$
specific to each $i=1,\ldots,n$, while the (conditional) prior is the aforementioned Gaussian distribution. The mean parameter is sampled from its full conditional distribution,
$$
[\mu_\gamma | -] \sim N\big(Q_{\mu_\gamma}^{-1}\ell_{\mu_\gamma}, Q_{\mu_\gamma}^{-1}\big),
$$
where $Q_{\mu_\gamma} = n\sigma_\gamma^{-2} + 1/10$ and $\ell_{\mu_\gamma} = \sigma_\gamma^{-2} \sum_{i=1}^n \gamma_i$. The sampler for the standard deviation parameter uses a half-Cauchy parameter-expansion  \citep{wand2011mean}:
\begin{align*}
[\sigma_\gamma^{-2} | -,  \xi_{\sigma_\gamma}] &\sim \mbox{Gamma}\big( (n+1)/2, \sum_{i=1}^n (\gamma_i - \mu_\gamma)^2/2 + \xi_{\sigma_\gamma}\big)  \\
[\xi_{\sigma_\gamma} |\sigma_\gamma ] &\sim \mbox{Gamma}\big(1, \sigma_{\gamma}^{-2} + 1\big).
\end{align*}

\subsection*{Dynamic models for the parametric factors}
Since yield curve data (Section~\ref{yields}) are time-ordered for $i=1,\ldots,n$, we specified a dynamic model for the parametric factors
$$
\alpha_{\ell,i} = \mu_\ell + \phi_\ell (\alpha_{\ell, i-1} - \mu_\ell) + \zeta_{\ell,i}, \quad \zeta_{\ell,i} \stackrel{indep}{\sim} N(0, \sigma_{\zeta_\ell}^2)
$$
with the additional priors
$$
\mu_\ell \stackrel{iid}{\sim} N(0, 10^6), \quad (\phi_\ell + 1)/2 \stackrel{iid}{\sim} \mbox{Beta}(5,2), \quad \sigma_{\zeta_\ell} \stackrel{iid}{\sim} C^+(0,1)
$$
for $\ell = 1,\ldots,L$. Conditional on $\{\alpha_{\ell, i}\}$, the sampling steps for these parameters are straightforward; see \cite{Kowal2020b} for details. Note that the prior on the (shifted and scaled) autoregression coefficients implies that the dynamic factors $\{\alpha_{\ell,i}\}$ are stationary, and therefore so is the functional time series $\{Y_i\}$.

The dynamic factors can be sampled using state space simulation methods; we use \cite{durbin2002simple} implemented in the \texttt{KFAS} package in \texttt{R}. Specifically, Corollary~\ref{cor-alpha} implies that the observation equation reduces to 
\[
\bm g_{\ell; \gamma}'\bm y_i =   \alpha_{\ell,i} +  \epsilon_{\ell,i}, \quad  \epsilon_{\ell,i}  \stackrel{indep}{\sim}N(0, \sigma_{\epsilon_i}^2)
\]
for $\ell=1,\ldots,L$ and $i=1,\ldots,n$, while the evolution equation is given by the independent AR(1) models for the dynamic factors $\{\alpha_{\ell,i}\}_{i=1}^n$ for each $\ell=1,\ldots,L$. Using the state space construction, we can sample the factors $\{\alpha_{\ell,i}\}$ jointly in $O(nL)$ computational complexity. 

The yield curve model also includes a stochastic volatility model on the observation error variance. The priors and sampling steps are identical to those in \cite{Kowal2020b}.

\subsection*{Modifications for the Gaussian process alternative}
The PFFM+gp \eqref{pffm+gp} was introduced as a seemingly reasonable alternative to \eqref{sffm} that replaced the nonparametric term with a Gaussian process, $Y_i(\tau)  = \sum_{\ell=1}^L \alpha_{\ell,i} g_\ell(\tau; \gamma) + h_i(\tau)$, where $ h_i  \stackrel{iid}{\sim} \mathcal{GP}(0, \mathcal{K}_h)$ is a Gaussian process with mean zero and covariance function $\mathcal{K}_h$. 
Equivalently, each latent curve $Y_i$ is a Gaussian process centered at the parametric term, $Y_i \stackrel{indep}{\sim} \mathcal{GP}\{\sum_{\ell=1}^L \alpha_{\ell,i} g_\ell(\cdot; \gamma), \mathcal{K}_h\}$. 
Using the observation model \eqref{obs} with the evaluation points  $\{\tau_j\}_{j=1}^m$, the PFFM+gp modifies \eqref{like}  to be 
\[
\bm y_i = \bm G_\gamma \bm \alpha_i + \bm h_i + \bm \epsilon_i, \quad \bm \epsilon_i \stackrel{indep}{\sim}N(\bm 0, \sigma_{\epsilon_i}^2 \bm I_m)
\]
where $\bm h_i = (h_i(\tau_1),\ldots, h_i(\tau_m))' \sim N( \bm 0, \bm K_h)$  and $\bm K_h =\{\mathcal{K}_h(\tau_j, \tau_\ell)\}_{j,\ell=1}^m$. Informally, each $\bm h_i$ captures the functional (within-curve) variability that is unexplained by the parametric term.
We implement this model using a Mat\'ern covariance function with smoothness parameter 2.5, i.e., $\mathcal{K}_h(\tau_1, \tau_2) = \sigma_h^2 [1 + \sqrt{5}\vert \tau_1 - \tau_2\vert/\rho_h + 5 \vert\tau_1 - \tau_2\vert^2/(3\rho_h^2)\exp\{-\sqrt{5}\vert \tau_1 - \tau_2 \vert/\rho_h\}]$. We assign the prior  $\sigma_h \sim C^+(0,1)$ and  fix $\rho_h = 0.5$ to provide smooth paths and reduce the computational burden; the results are not sensitive to this choice.

 To adapt the SFFM Gibbs sampler for the PFFM+gp, we only require modifications of the sampler for $\{\bm\alpha_i\}$ and sampling steps for the Gaussian process parameters. Specifically, we sample $\{\bm h_i\}$ after integrating out the parametric factors $\{\bm \alpha_i\}$, which induces a joint draw of $\{\bm \alpha_i, \bm h_i\}$ to encourage MCMC efficiency. Let $\bm  R_h = \sigma_h^{-2}\bm K_h$ denote the (fixed) correlation matrix. Assuming $\alpha_{\ell,i} \stackrel{indep}{\sim}N(0, \sigma_{\alpha_\ell}^2)$, the requisite (marginalized) full conditional distributions are $[\bm h_i \mid - ] \stackrel{indep}{\sim} N(\bm Q_{h_i}^{-1} \bm \ell_{h_i}, \bm Q_{h_i}^{-1})$ for $i=1,\ldots,n$, where $ \bm Q_{h_i} = \sigma_h^{-2} \bm R_h^{-1} + \bm \Sigma_{h_i}^{-1}$, $\bm\ell_{h_i} = \bm\Sigma_{h_i}^{-1} \bm y_i$, and the marginal precision $ \bm\Sigma_{h_i}^{-1} = (\bm G_\gamma \bm \Sigma_\alpha \bm G_\gamma' + \sigma_\epsilon^2 \bm I_m)^{-1} = \sigma_\epsilon^{-2} \{\bm I_m - \bm G_\gamma ( \mbox{diag}\{\sigma_\epsilon^2/\sigma_{\alpha_\ell}^{2} + 1\}_{\ell=1}^L) \bm G_\gamma'\}$ is simplified due to the Woodbury identity and the orthogonality of $\bm G_\gamma$. The sampling steps for $\sigma_h$ proceed using the same parameter expansion from \cite{wand2011mean} as in Section~\ref{a-mcmc}. 
 
 We emphasize that \emph{without} marginalizing over $\{\bm \alpha_i\}$, the MCMC efficiency for both $\{\bm \alpha_i\} $ and $\{ \bm h_i\}$ deteriorates significantly. Hence, our implementation---using both the joint sampling step and the Woodbury simplifications above---presents a very favorable sampling algorithm for the PFFM+gp. Nonetheless, both the computational efficiency and the MCMC efficiency of the PFFM+gp lag far behind those for the SFFM and the PFFM, both in simulations and real data analyses. Lastly, we note that the key marginalizations needed for joint sampling of $\{\bm \alpha_i, \bm h_i\}$ become significantly more challenging when the model for the parametric factors is more complex (e.g., dynamic); by comparison, the proposed SFFM avoids this issue entirely.


\section{Additional simulation results}\label{a-sims}

\subsection{Linear template: $K_{true} = 8$}
To assess performance in the presence of many nonparametric factors, we reproduce the simulation analysis from Section~\ref{sims} in the case of $K_{true} = 8$. For the proposed SFFM, we increase the bound on the number of factors from $K=10$ to $K=15$. Figure~\ref{fig:linear-8} consolidates the point and interval estimation and prediction
 for the linear template; similar results are presented for the Nelson-Siegel template in Section~\ref{a-ns}. The results are similar to those for $K_{true} =3$ in the main paper, and confirm the improvements offered by the SFFM when $K_{true} > 0$. 

  \begin{figure}[h]
\begin{center}
\includegraphics[width=.49\textwidth]{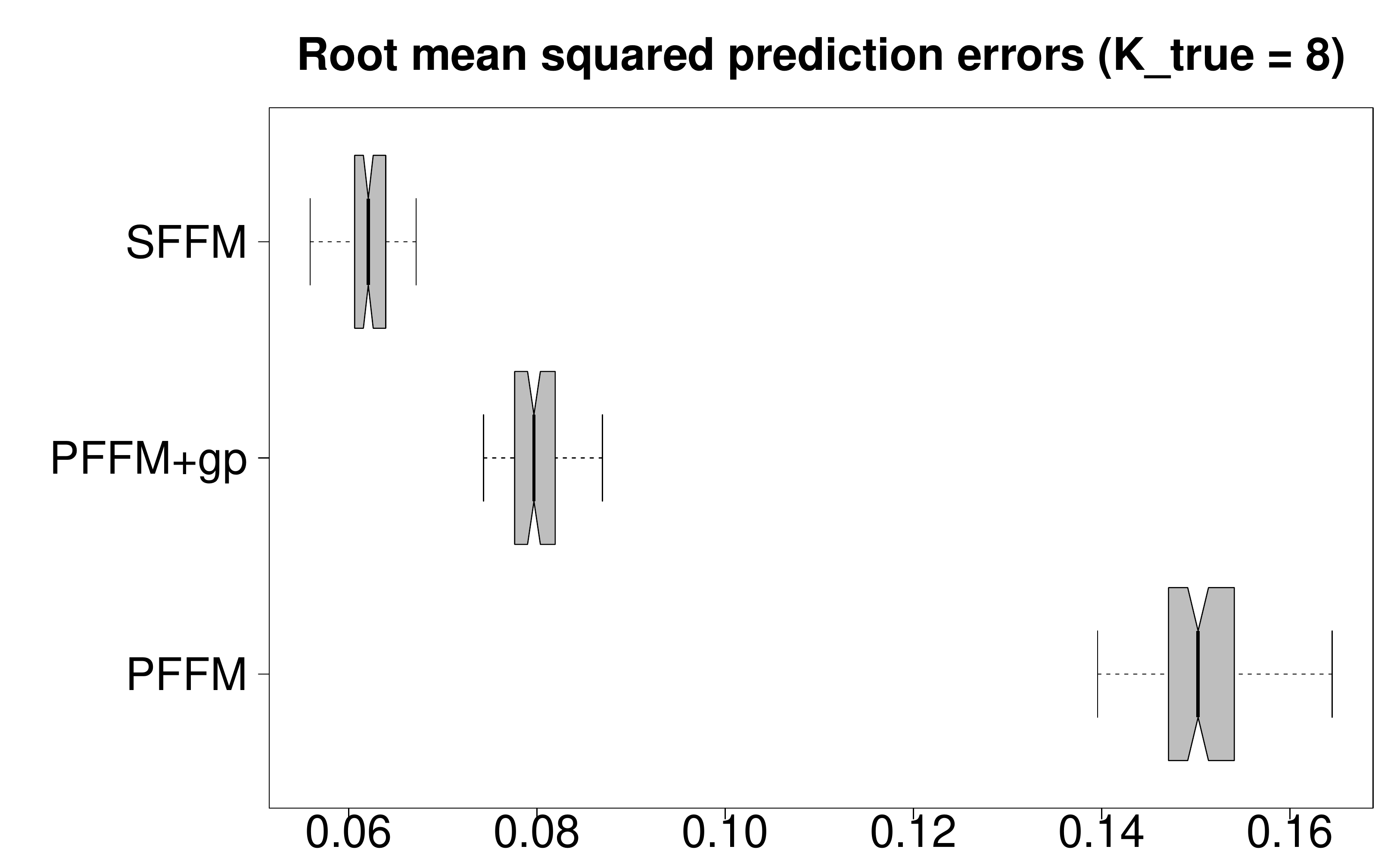}
\includegraphics[width=.49\textwidth]{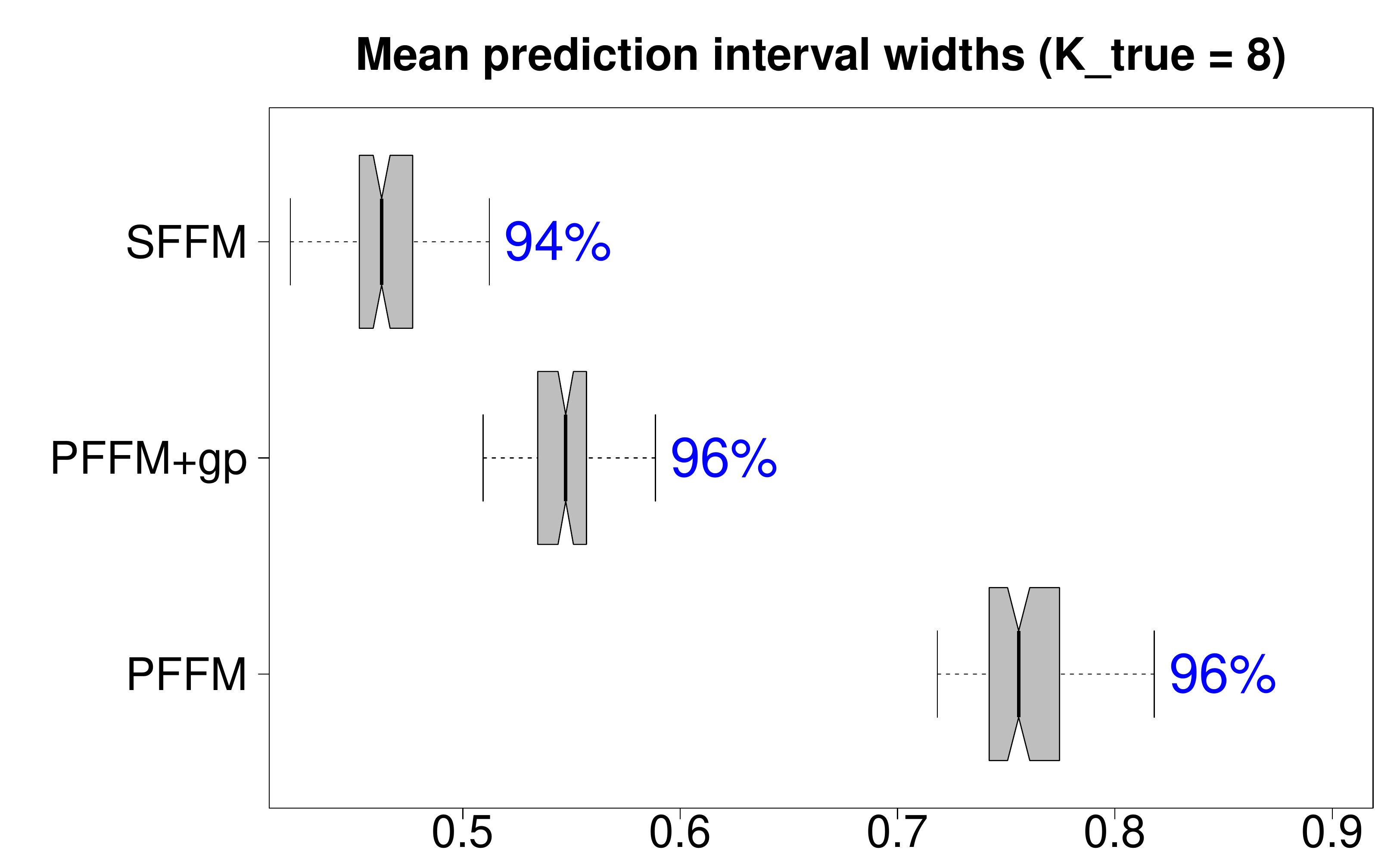}
\includegraphics[width=.49\textwidth]{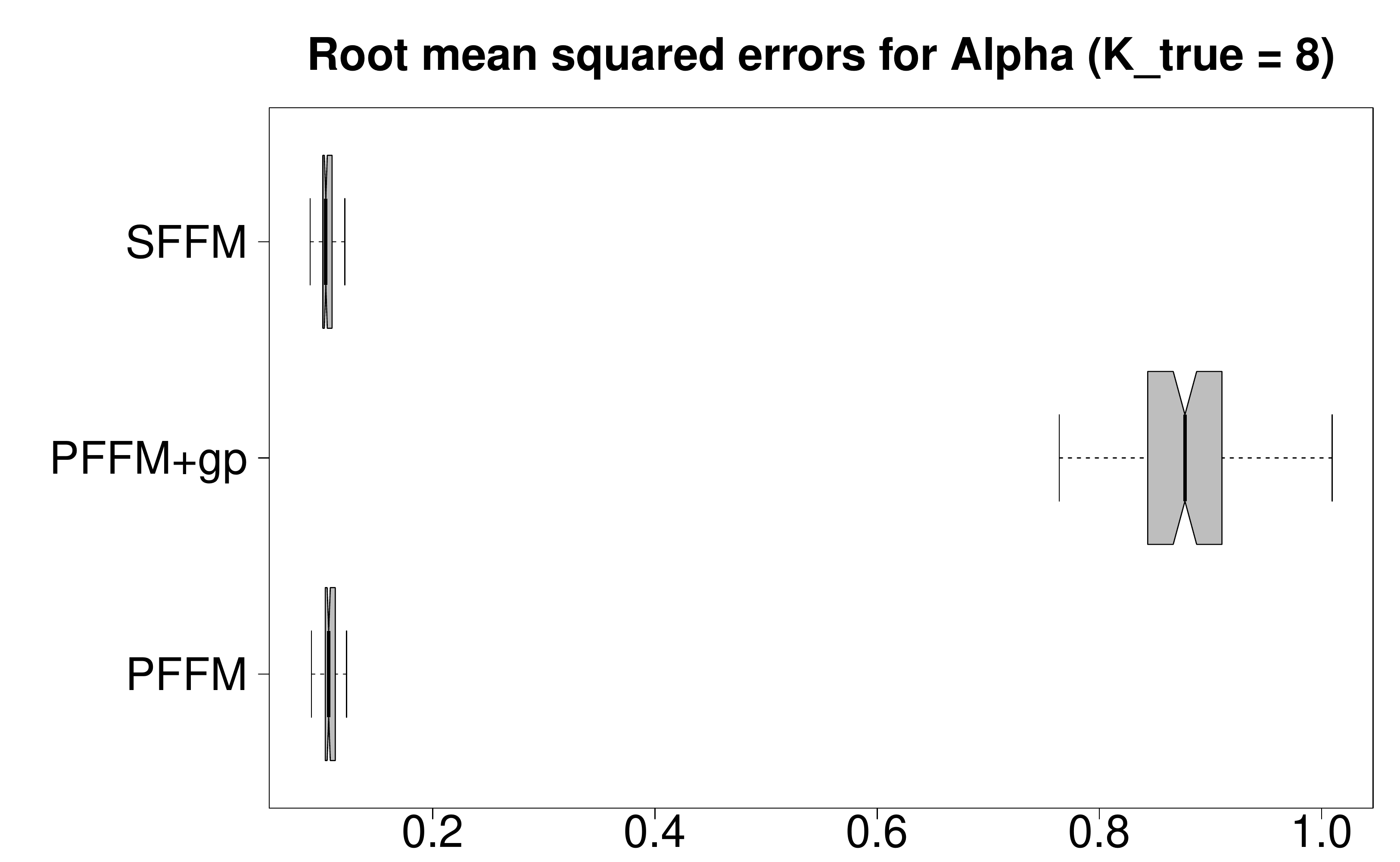}
\includegraphics[width=.49\textwidth]{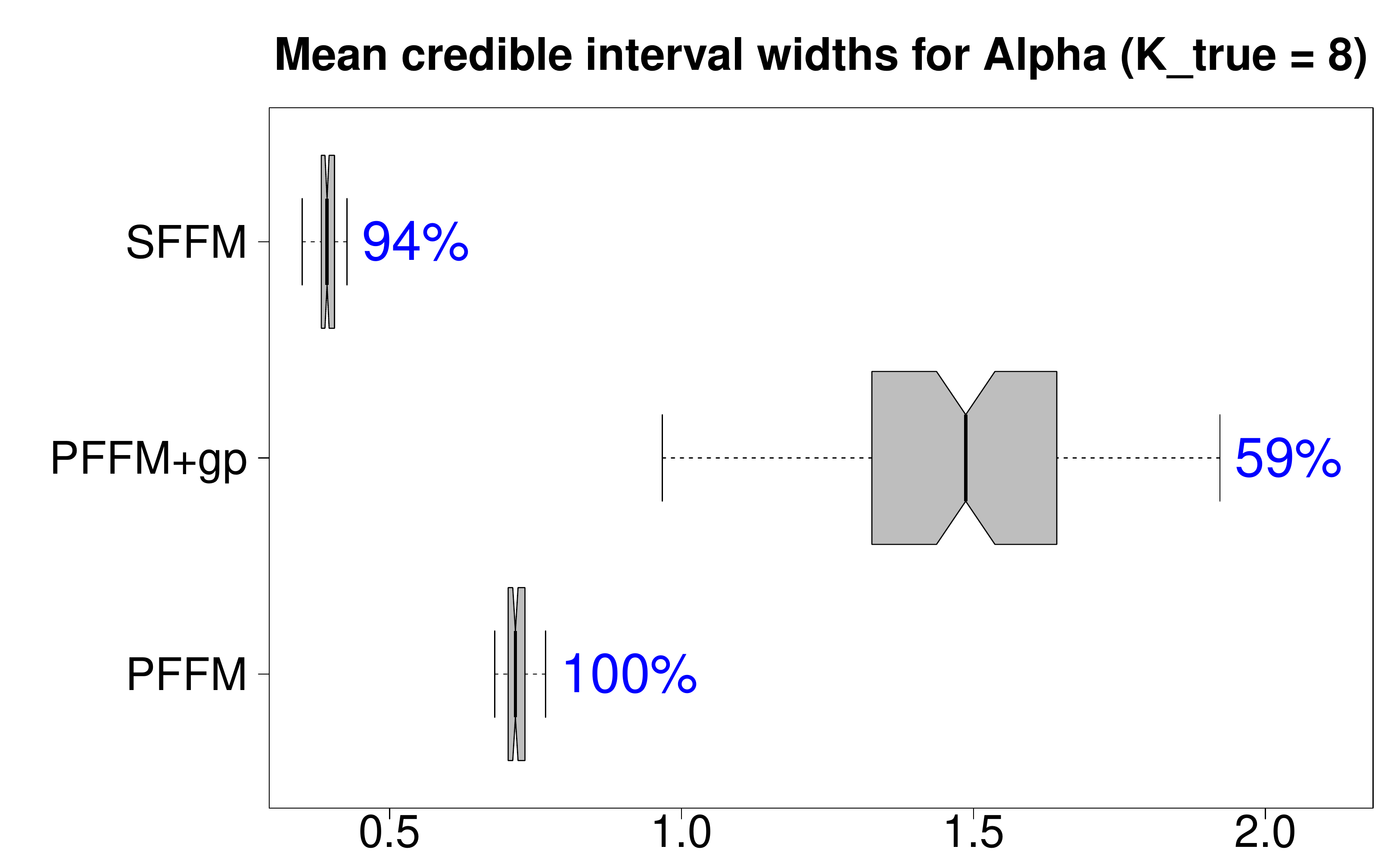}
\caption{\small Assessments of point prediction (left) and interval prediction (right)  for $K_{true}  = 8$. Results are similar to those for $K_{true} = 3$. 
\label{fig:linear-8}}
\end{center}
\end{figure}

\subsection{Nelson-Siegel template}\label{a-ns}
We augment our simulation study to include the \cite{nelson1987parsimonious} template, 
$$
g_1(\tau; \gamma) = 1, \quad
 g_2(\tau; \gamma) = \{1 - \exp(-\tau\gamma)\}/(\tau\gamma), \quad
  g_3(\tau; \gamma) =  g_2(\tau; \gamma)  - \exp(-\tau\gamma)
$$
with unknown $\gamma > 0$. The synthetic data-generating process proceeds as before, with following modifications: the true nonlinear parameter is set to $\gamma^* = 0.0609$ \citep{diebold2006forecasting,bianchi2009great} and the nonparametric basis matrix $\bm F^*$ is orthogonalized to the parametric basis $\bm G_{\gamma^*}^*$ using a QR-decomposition. As in the yield curve application, we use a Gamma prior for $\gamma$ with prior mean 0.0609 and prior variance 0.5. The results are given in Figures~\ref{fig:ns-rmse}, \ref{fig:ns-int}, and \ref{fig:ns-kprob} and confirm the results for the linear template in the main paper.

   \begin{figure}[h]
\begin{center}
\includegraphics[width=.49\textwidth]{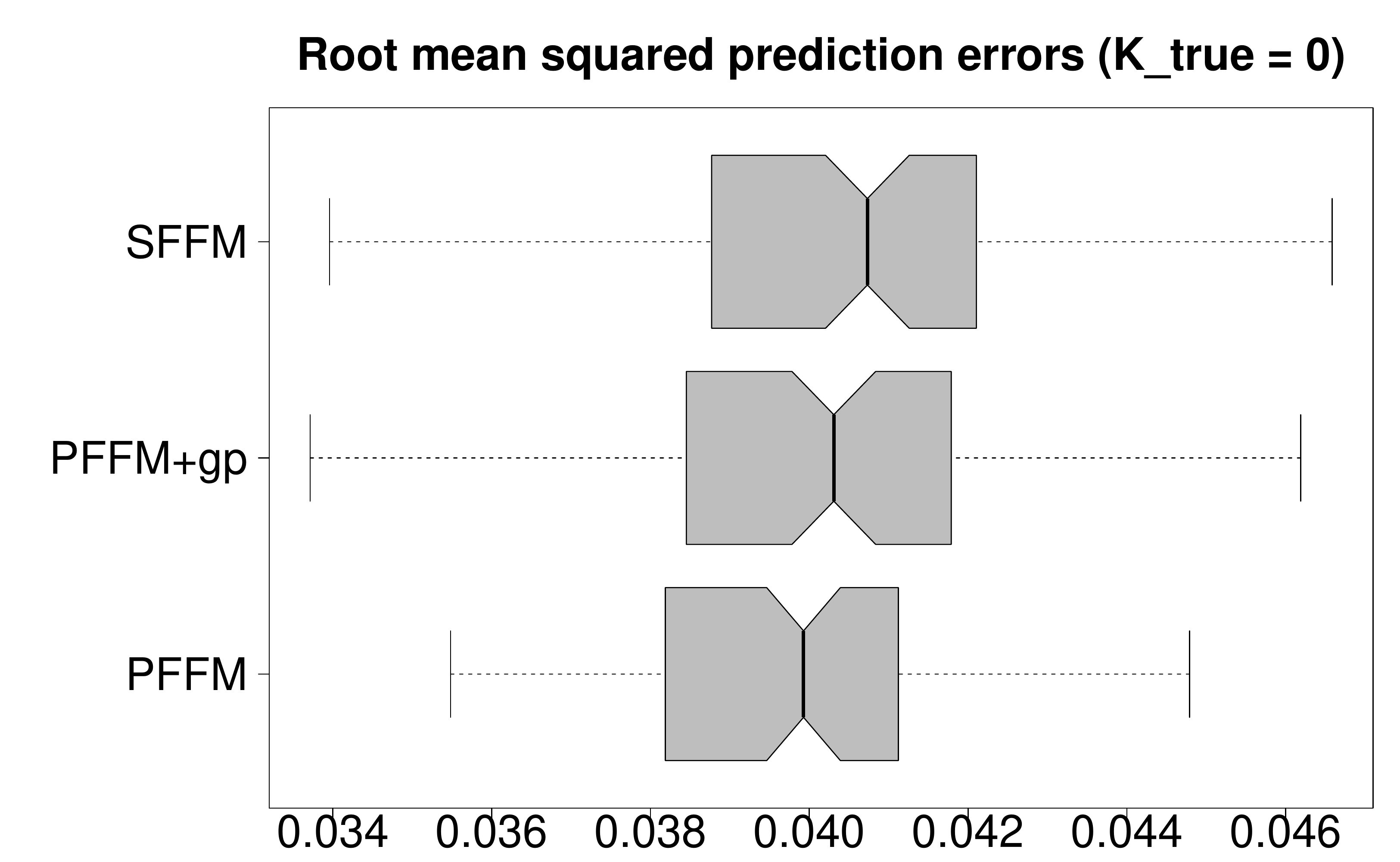}
\includegraphics[width=.49\textwidth]{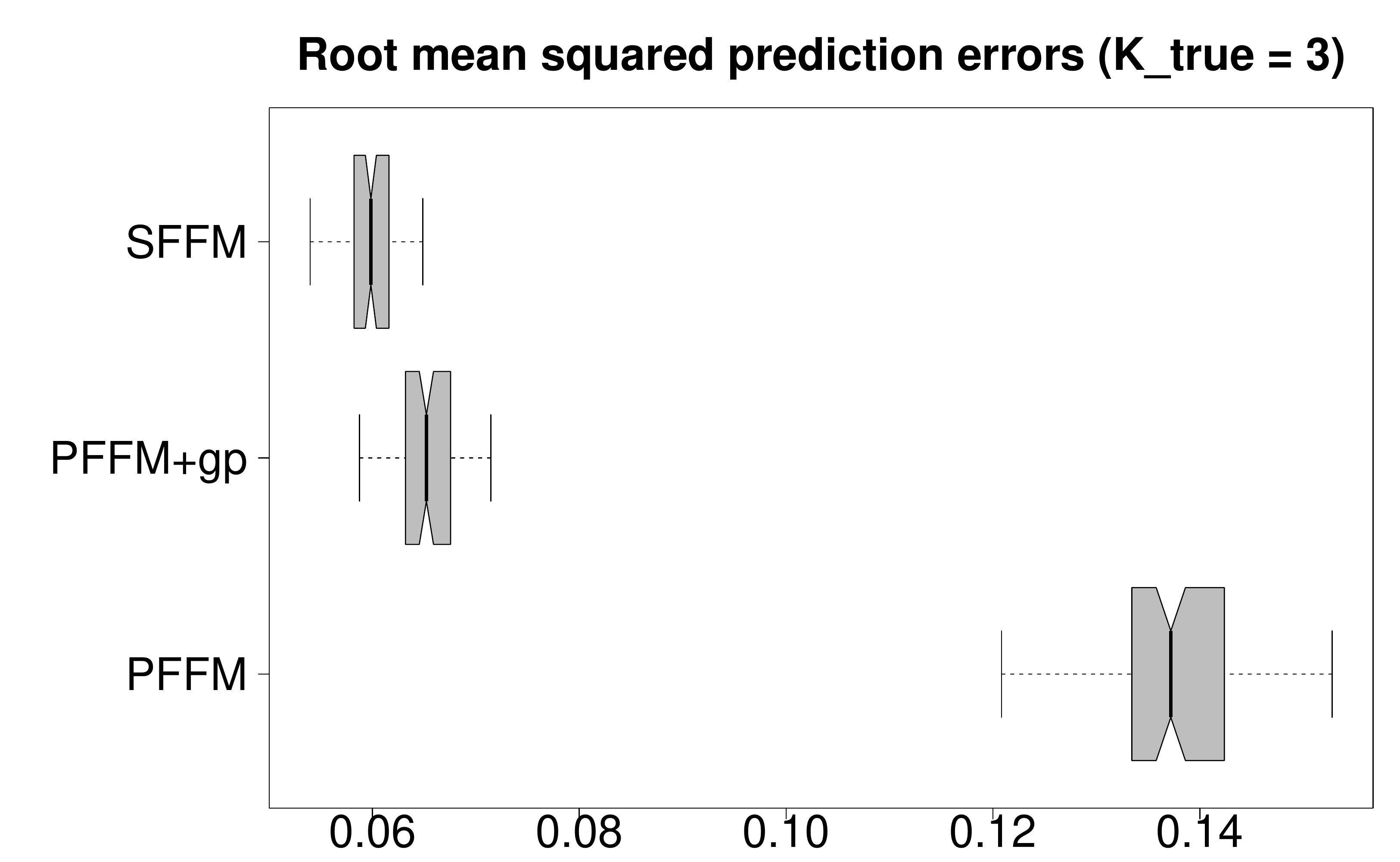}
\includegraphics[width=.49\textwidth]{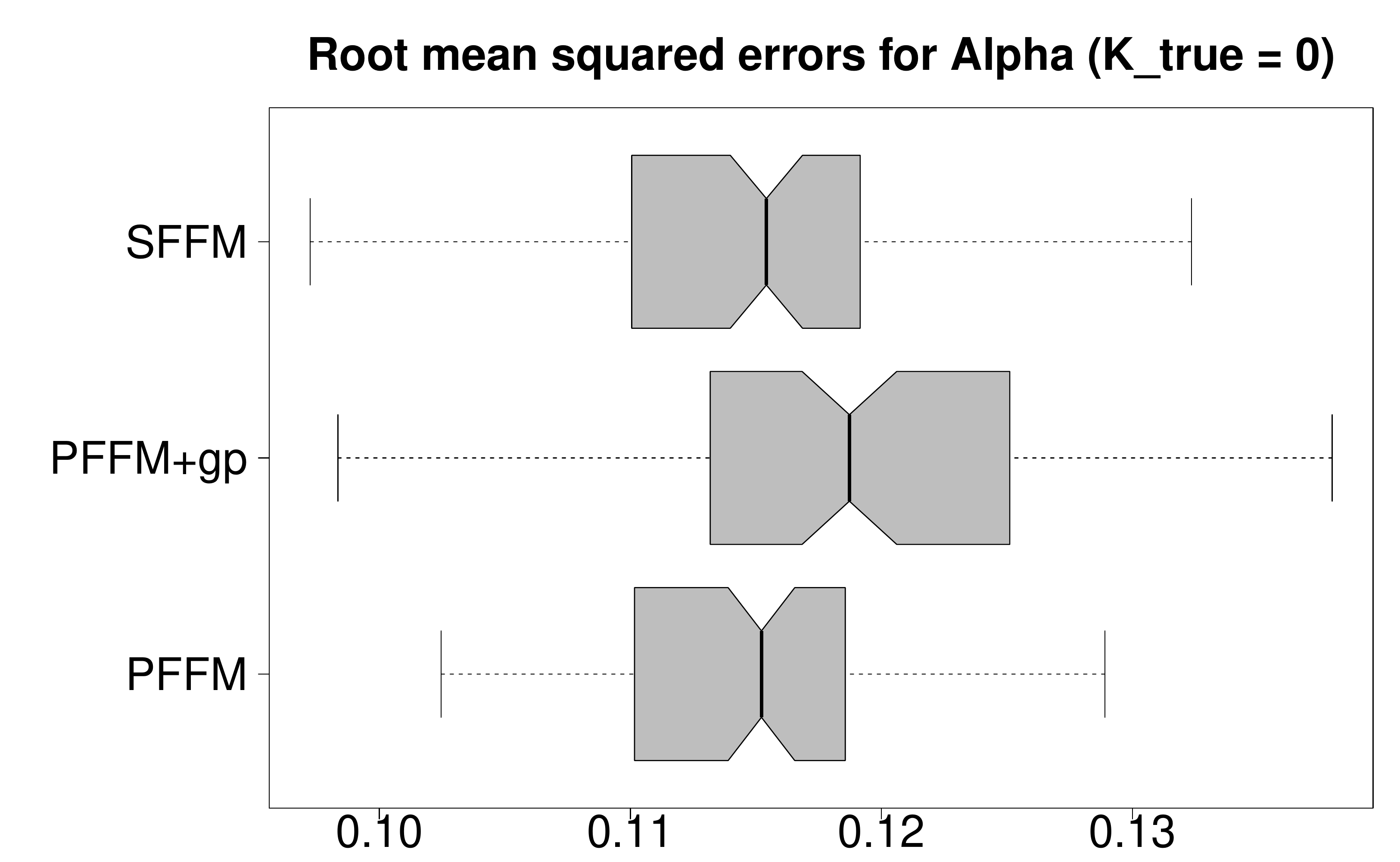}
\includegraphics[width=.49\textwidth]{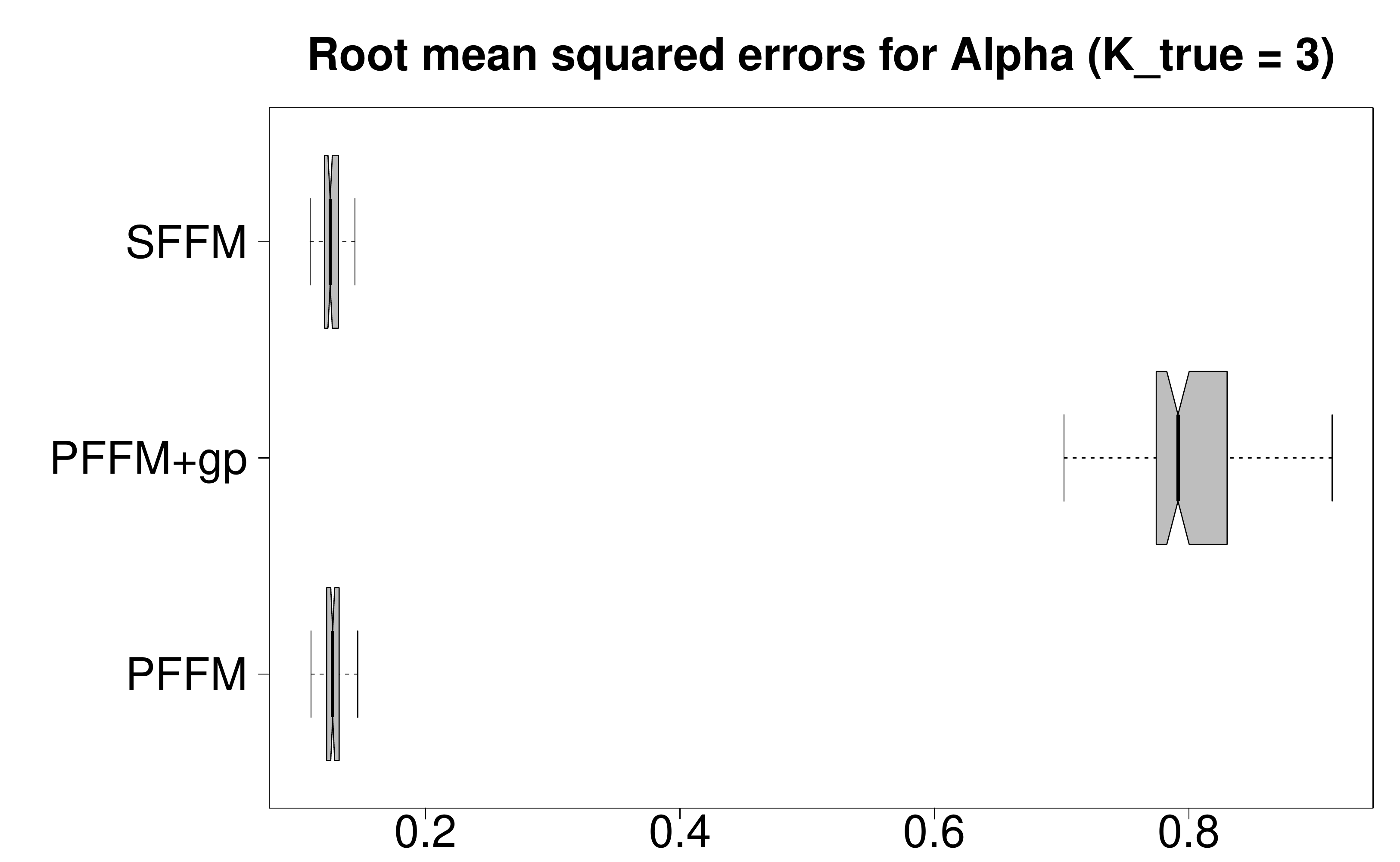}
\caption{\small Root mean squared (prediction) errors for $\bm Y^*$ (top) and   $\{\alpha_{\ell, i}^*\}$   (bottom) for $K_{true} = 0$ (left) and $K_{true} = 3$ (right) under the \cite{nelson1987parsimonious} template with unknown $\gamma$. All methods perform similarly for $K_{true} = 0$, while the SFFM outperforms all competitors when $K_{true}  >  0$. In this setting, the PFFM+gp offers better prediction than the  PFFM, but cannot accurately estimate the parametric factors.
\label{fig:ns-rmse}}
\end{center}
\end{figure}

   \begin{figure}[h]
\begin{center}
\includegraphics[width=.49\textwidth]{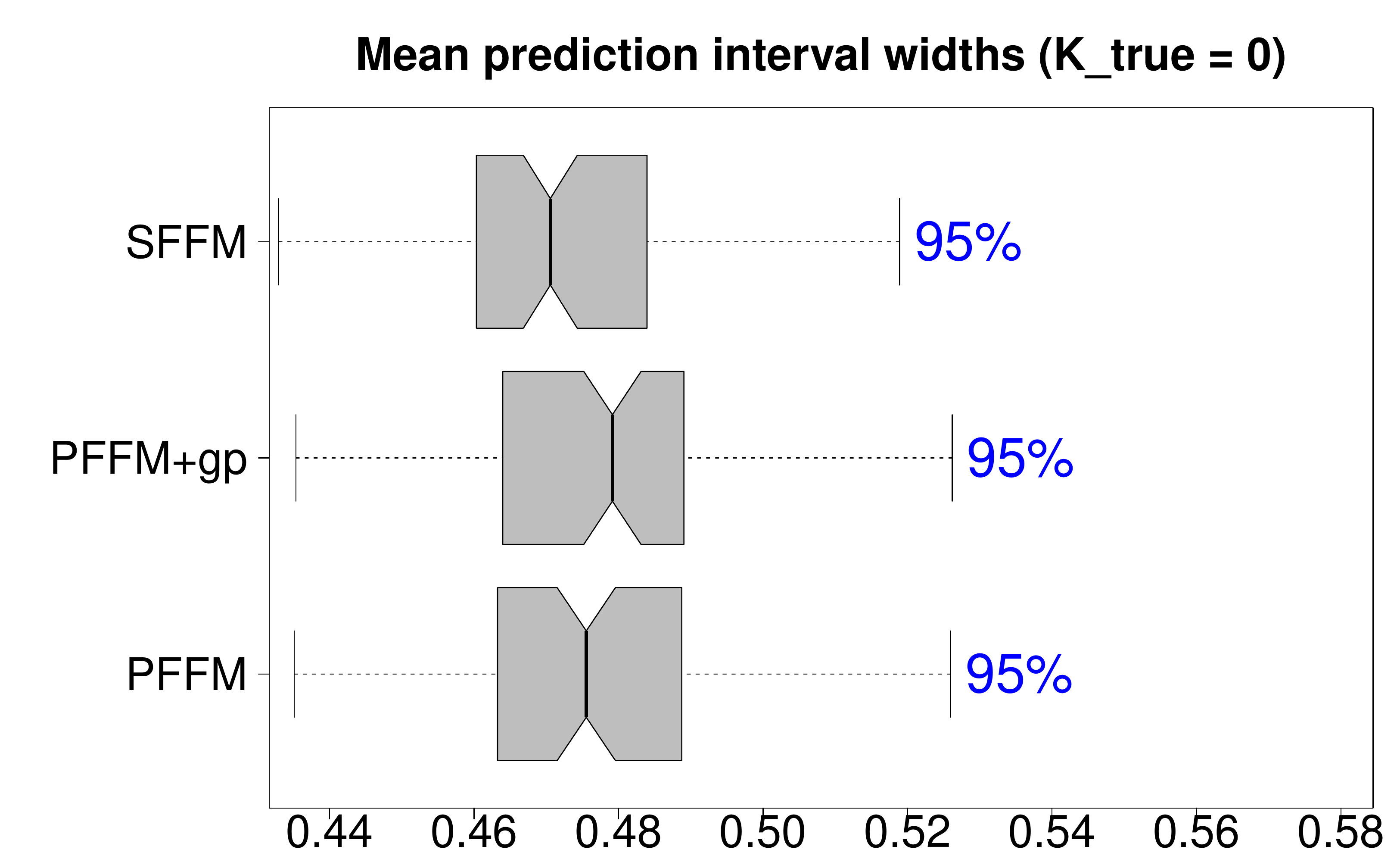}
\includegraphics[width=.49\textwidth]{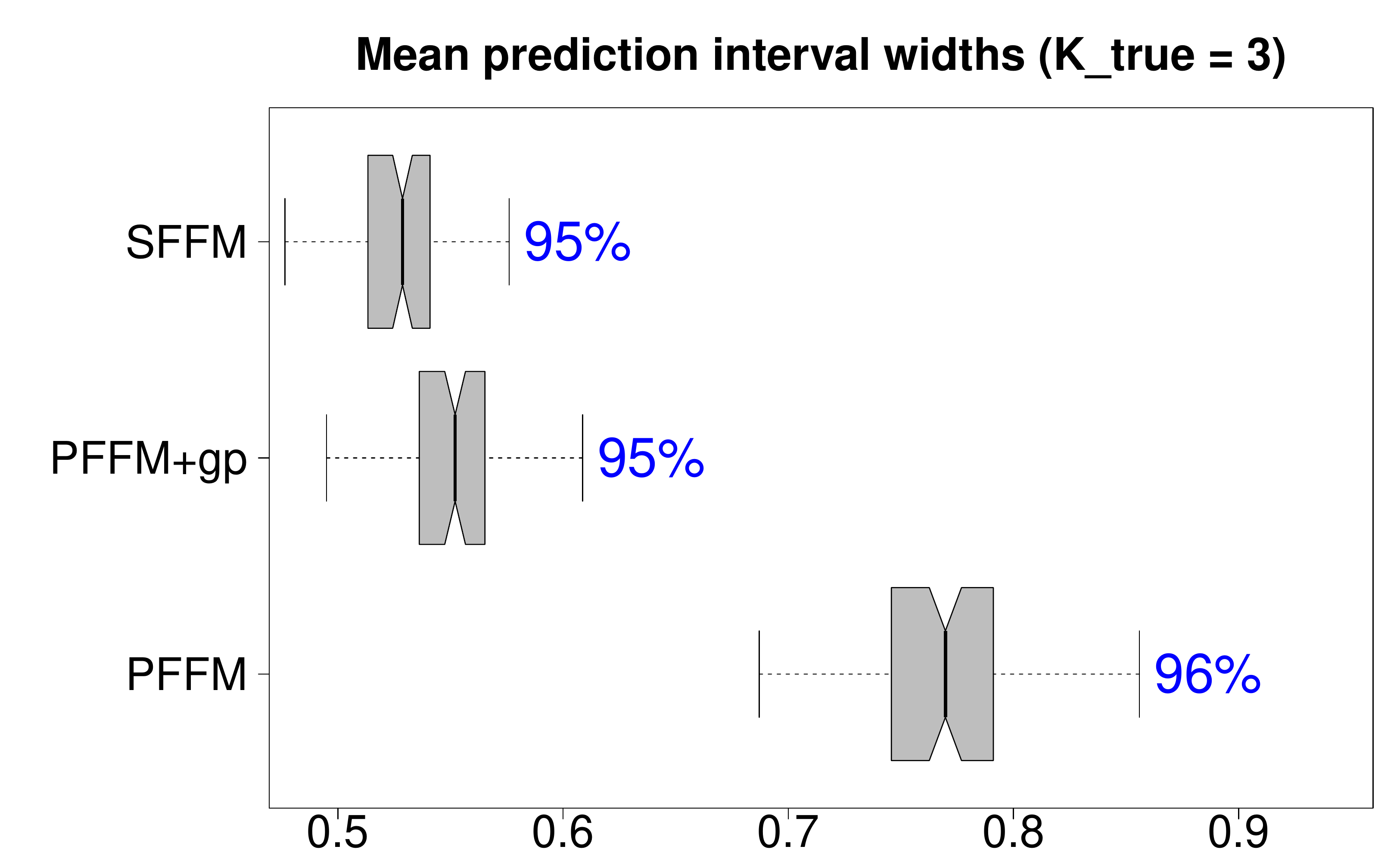}
\includegraphics[width=.49\textwidth]{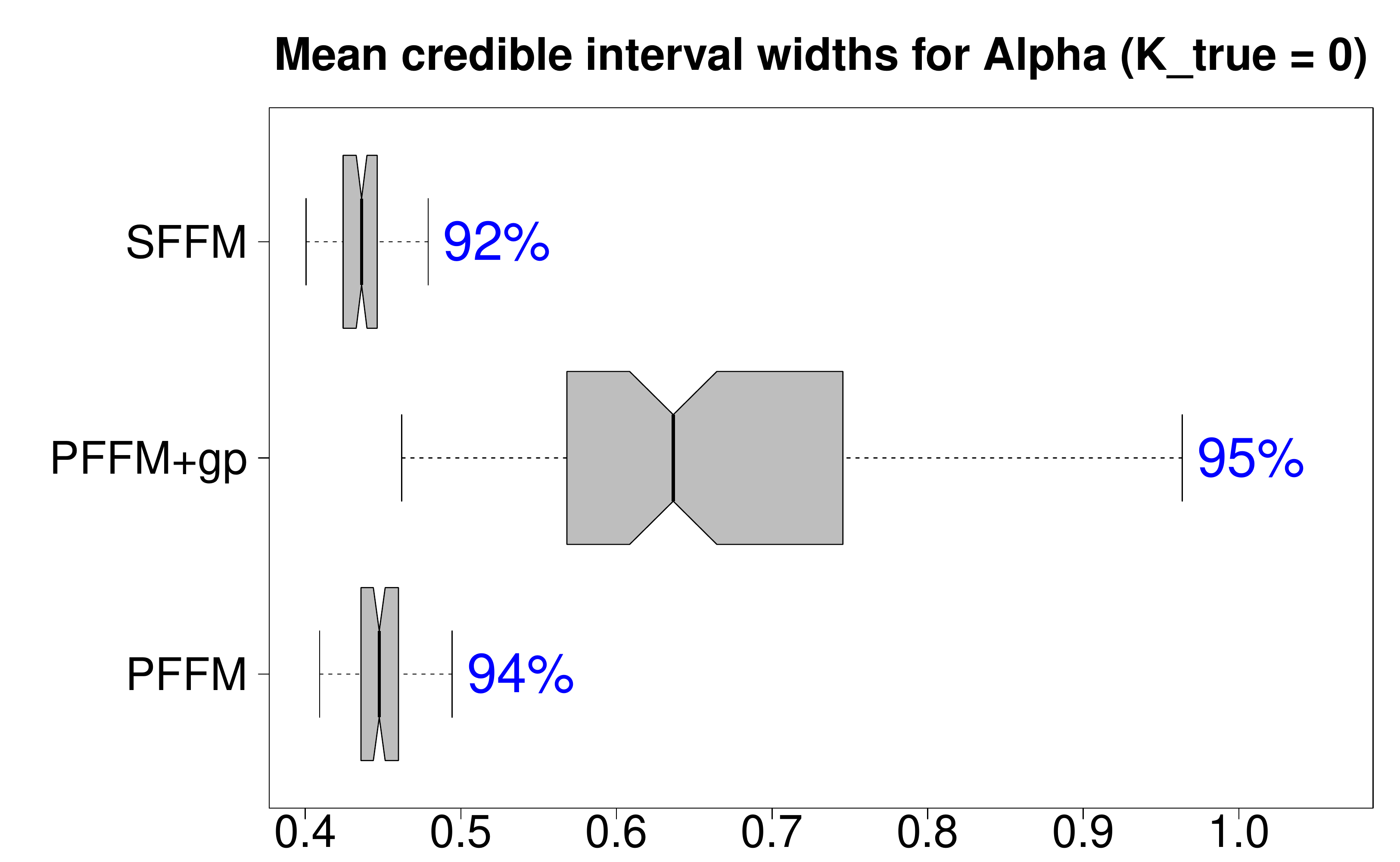}
\includegraphics[width=.49\textwidth]{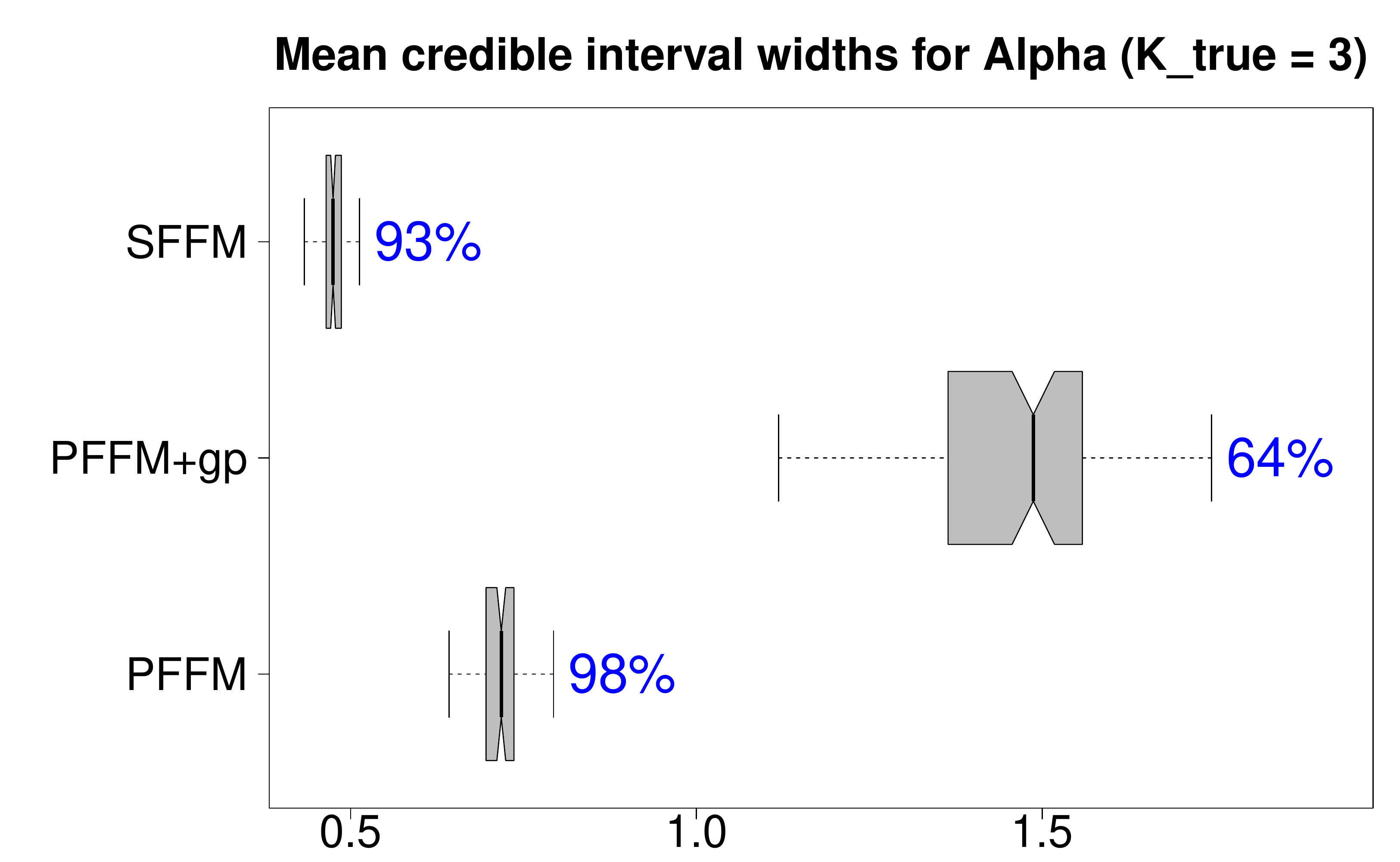}
\caption{\small Mean 95\% prediction (top) and credible (bottom) interval widths  for $\bm y_i^*$ and $\{\alpha_{\ell,i}^*\}$, respectively, with empirical coverage (annotations) for $K_{true} = 0$ (left) and $K_{true} = 3$ (right)  under the \cite{nelson1987parsimonious} template with unknown $\gamma$. The SFFM provides narrow interval estimates with (nearly) the correct nominal coverage in all cases. The PFFM intervals are excessively wide for $K_{true} > 0$, while the PFFM+gp performs poorly for $\{\alpha_{\ell,i}^*\}$ in all cases. 
\label{fig:ns-int}}
\end{center}
\end{figure}

 \begin{figure}[h]
\begin{center}
\includegraphics[width=.49\textwidth]{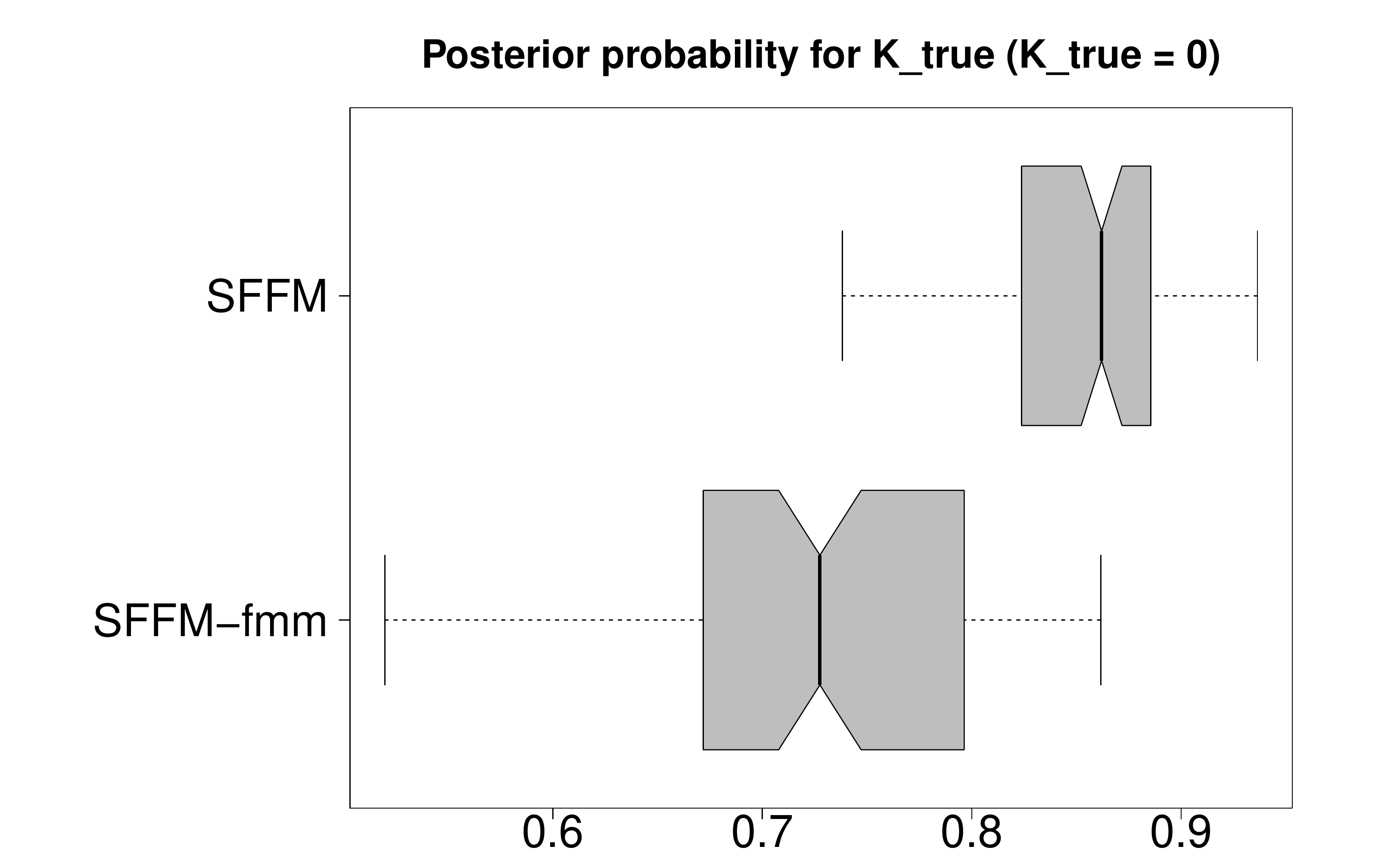}
\includegraphics[width=.49\textwidth]{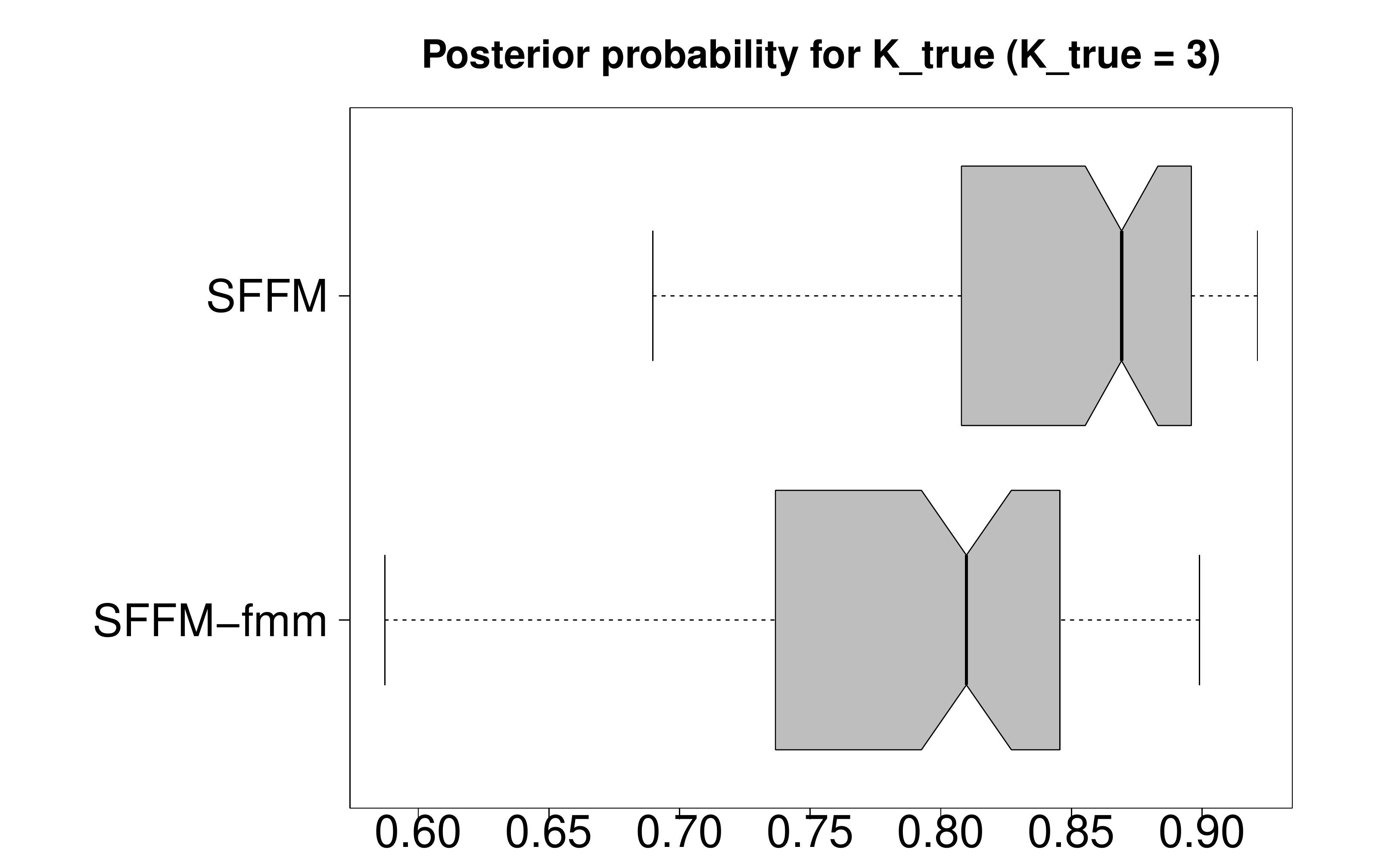}
\caption{\small   Probability score $\mathbb{P}(K^* = K_{true} \mid \bm y)$ 
for  $K_{true} = 0$ (left) and $K_{true} = 3$ (right) under the \cite{nelson1987parsimonious} template with unknown $\gamma$. The proposed ordered spike-and-slab prior provides better rank selection and inference than the finite mixture alternative.  
\label{fig:ns-kprob}}
\end{center}
\end{figure}

\subsection{Assessing inclusion probabilities for the nonparametric term}

To further evaluate the posterior distribution for the effective number of nonparametric terms, we report ranked probability scores for  $\mathbb{P}(K^*\mid \bm y)$ under the SFFM and SFFM-fmm in Figure~\ref{fig:kcrps}. Both methods perform similarly for $K_{true} = 8$, while the proposed SFFM dominates in the remaining cases. Next, we study the probability of inclusion for \emph{any} nonparametric terms,  $\mathbb{P}(K^*  > 0| \bm y)$. When $K_{true} > 0$, we find that $\mathbb{P}(K^*  > 0| \bm y) = 1$ in all scenarios, so the SFFMs both offer excellent power for detecting the presence of a nonparametric addition. For the more challenging case of the true parametric model with $K_{true} = 0$, we report the summary statistics of $\mathbb{P}(K^*  > 0| \bm y)$ across 100 simulations in Table~\ref{tab-pnz}. The overestimation issues for the finite mixture alternative are apparent: compared to the SFFM, the SFFM-fmm assigns far greater inclusion probabilities for the nonparametric term when no such term is needed.


 \begin{figure}[h]
\begin{center}
\includegraphics[width=.32\textwidth]{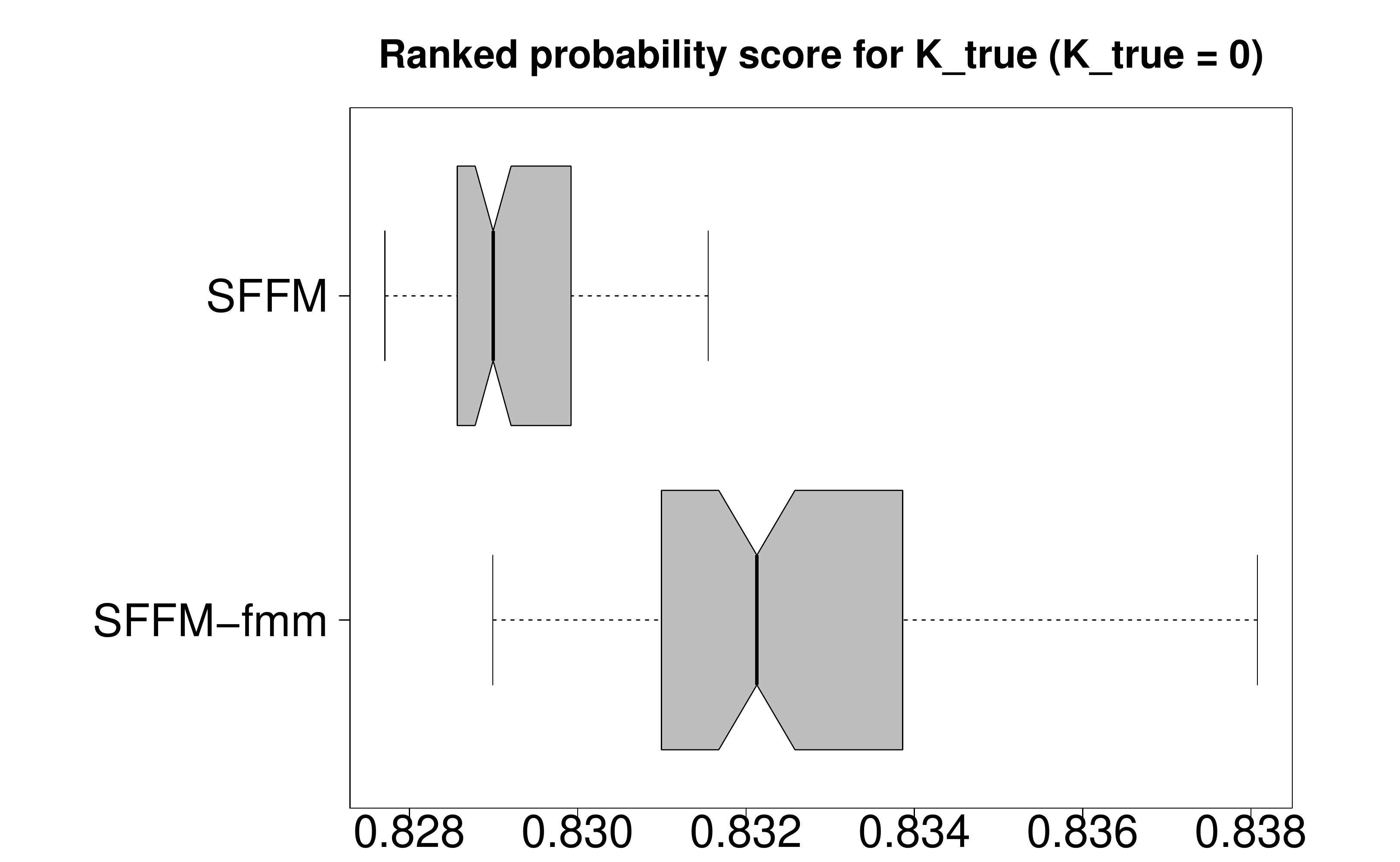}
\includegraphics[width=.32\textwidth]{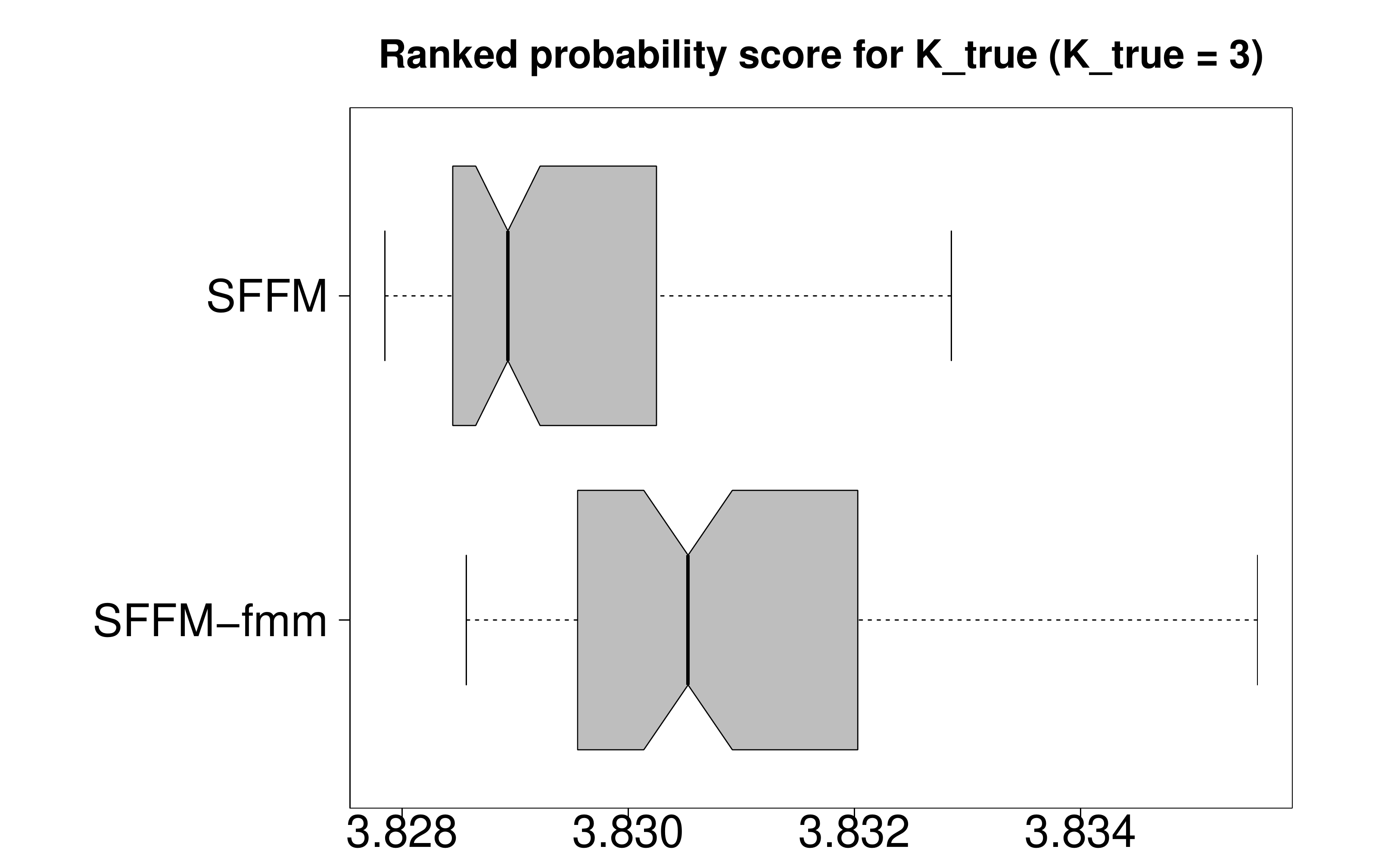}
\includegraphics[width=.32\textwidth]{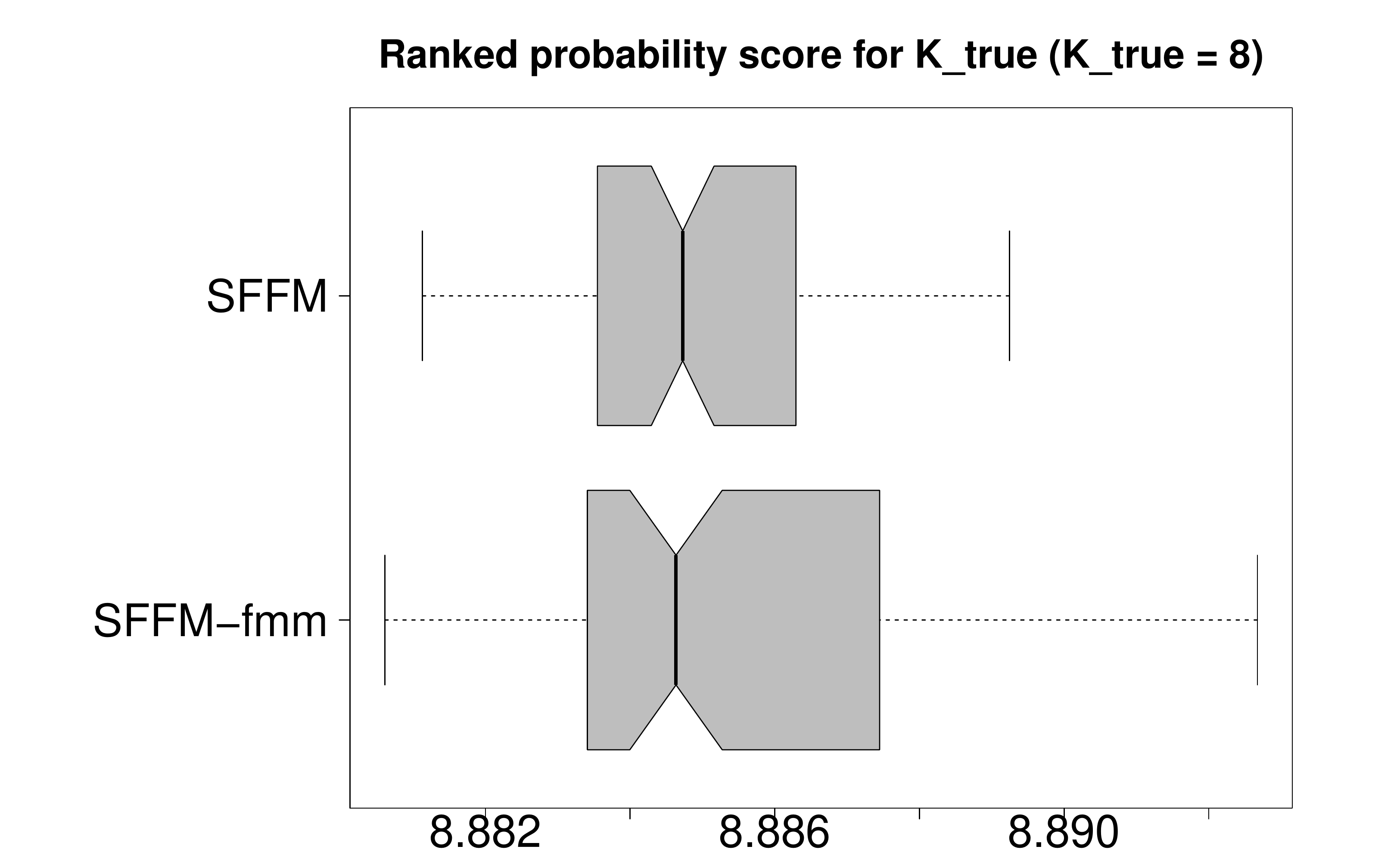}
\caption{\small  Negatively-oriented ranked probability score for $\mathbb{P}(K^* \mid \bm y)$   for  $K_{true} = 0$ (left), $K_{true} = 3$ (center), and $K_{true} = 8$ (right).  These results are for the linear template; results for the \cite{nelson1987parsimonious} template with unknown $\gamma$ are nearly identical and are omitted.
\label{fig:kcrps}}
\end{center}
\end{figure}

\begin{table}[ht]
\centering
\begin{tabular}{ c | c | c | c  | c | c | c   }
Template & Model & Min & 1st quartile &  Median &    3rd quartile  & Max \\ \hline
  \multirow{2}{*}{Linear} & 
  SFFM  & 0.070 &  0.117  & 0.142  &  0.183 &   0.370 \\ 
& SFFM-fmm  & 0.123  & 0.229  & 0.280  &  0.354  & 0.642  \\  \hline
  \multirow{2}{*}{Nelson-Siegel} & 
  SFFM  & 0.075 &   0.118  &  0.138    &  0.180  &  1.000 \\ 
& SFFM-fmm  &  0.146  &  0.219  &  0.265   &   0.314 &   1.000  \\  
 \end{tabular}
\caption{\small Summary statistics of $\mathbb{P}(K^* > 0 | \bm y)$ across 100 simulations.\label{tab-pnz}}
\end{table}

\end{document}